\colorlet{darkgreen}{green!80!black}
\newclass{\EXPTIME}{EXPTIME}
\newclass{\ALOGTIME}{ALOGTIME}
\newclass{\DLOGTIME}{DLOGTIME}
\newclass{\LogDCFL}{LogDCFL}
\newclass{\LogCFL}{LogCFL}
\newclass{\uNC}{uNC}
\newclass{\uTC}{uTC}
\newclass{\uAC}{uAC}
\newclass{\coUL}{coUL}
\newcommand{\num}{\mathsf{num}}
\newcommand{\derive}{\Rightarrow}
\newcommand{\Bin}{\mathsf{Bin}}
\newcommand{\val}{\mathsf{val}}
\newcommand{\height}{\mathsf{height}}
\newcommand{\hs}{\mathsf{st}}
\newcommand{\unfold}{\mathsf{unfold}}
\newcommand{\rest}{\mathord\restriction}
\newcommand{\CNF}{\mathsf{CNF}}
\newcommand{\rCNF}{\mathsf{pCNF}}
\newcommand{\acCNF}{\mathsf{acCNF}}
\newcommand{\CNFac}{\acCNF^\ge}
\newcommand{\CNFack}{\acCNF^{\ge k}}
\newcommand{\LS}{\mathsf{L}}
\newcommand{\StN}{\mathsf{St}^\ge}
\newcommand{\StNk}{\mathsf{St}^{\ge k}}
\newcommand{\StNterm}{\StN_{\mathsf{term}}}
\newcommand{\StNpointer}{\StN_{\mathsf{pointer}}}
\newcommand{\StNkterm}{\StNk_{\mathsf{term}}}
\newcommand{\StNkpointer}{\StNk_{\mathsf{pointer}}}
\newcommand{\StNdag}{\StN_{\mathsf{dag}}}
\newcommand{\StNtslp}{\StN_{\mathsf{tslp}}}
\newcommand{\StNkdag}{\StNk_{\mathsf{dag}}}
\newcommand{\StNktslp}{\StNk_{\mathsf{tslp}}}
\title{On the complexity of computing Strahler numbers}
\author{Moses Ganardi}{University of Kaiserslautern-Landau (RPTU), Germany}{moses.ganardi@cs.rptu.de}{https://orcid.org/0000-0002-0775-7781}{}
\author{Markus Lohrey}{Universit\"at Siegen, Germany}{lohrey@eti.uni-siegen.de}{https://orcid.org/0000-0002-4680-7198}{}
\authorrunning{M.~Ganardi, M.\ Lohrey}
\keywords{Strahler number, circuit complexity classes, context-free grammars}
\begin{document}

\maketitle
\begin{abstract}
It is shown that the problem of computing the Strahler number of a binary tree given as a term
 is complete for the circuit complexity
class uniform $\NC^1$. For several variants, where the binary tree is given by a pointer structure or in a succinct form by
a directed acyclic graph or a tree straight-line program, the complexity of computing the Strahler number is determined as well.
We show that the problem of deciding whether a given context-free grammar in Chomsky normal form produces a derivation tree with a Strahler number of at least $k$ is $\P$-complete. If the derivation tree is restricted to be acyclic, the problem becomes $\PSPACE$-complete.

\end{abstract}

\section{Introduction}

\subparagraph{Strahler numbers.}
The main topic of this paper is the complexity of computing \emph{Strahler numbers} of binary trees.
The \emph{Strahler number} of a binary tree $t$ is a parameter $\hs(t)$ that can be defined recursively as follows:
\begin{itemize}
\item If $t$ consists of a single node then $\hs(t) = 0$.
\item If the root of $t$ has the left (resp., right) subtree $t_1$ (resp., $t_2$) then
\begin{equation} \label{eq-strahler-operation}
	\hs(t) = \begin{cases} \hs(t_1)+1 & \text{if } \hs(t_1) = \hs(t_2), \\ \max\{\hs(t_1), \hs(t_2)\} & \text{if } \hs(t_1) \neq \hs(t_2). \end{cases}
\end{equation}
\end{itemize}
The Strahler number, also known as the \emph{Horton-Strahler number}, first appeared in the area of hydrology, where Horton used it in a paper from 1945 \cite{Ho45}
to define the order of a river. The correspondence to binary trees comes from the fact that a system of joining rivers
can be viewed as a binary tree (unless there are bifurcations, where a river splits into two streams).
In 1952, Strahler \cite{Stra52} (also a hydrologist) further developed Horton's ideas. 

There are numerous applications of Strahler numbers in computer science, where they appeared also under different names (e.g., register function, tree dimension).
Ershov \cite{Ers58} showed that the minimal number of registers needed to evaluate an arithmetic expression is exactly the Strahler number of the 
syntax tree of the arithmetic expression. The Strahler number has also found many applications in formal language theory \cite{ChMo90,EhRoVe81,GiSpa68,SOS23}. For context-free grammars, the relation comes from the following fact: Let $G$ be a context-free grammar in Chomsky normal form and let $t$ be a derivation tree of $G$.
Then $\hs(t)+1$ is exactly the minimal index among all derivations corresponding to $t$,
where the index of a derivation $S = w_0 \derive_G w_1 \derive_G w_2 \derive_G \cdots \derive_G w_n$ 
 is the maximal number of nonterminals in one of the $w_i$~\cite{GiSpa68}. 
Finite-index context-free grammars~\cite{AtigG11}, i.e., grammars where every produced word has a derivation of bounded index,
play an important role in the recent decidability proof of the reachability problem in one-dimensional pushdown VASS~\cite{BiziereC25}.
Strahler numbers have been also investigated in the context of Newton iteration \cite{EKL10,PiSaSo12}, parity games \cite{DaJuTh20}, and 
social networks \cite{ADDGG04}.
The distribution of the  Strahler number of a random tree has been studied by several authors \cite{DevKru95,DeKru96,FRV79,Kemp79,Kru99}.
For more information on Strahler numbers and their applications in computer science, the reader may consult
the surveys \cite{EsparzaLS14,Vie90}.

The above-mentioned applications naturally lead to the question for the precise complexity of computing the Strahler number
of a given tree. This problem has a straightforward linear time algorithm: simply compute the Strahler number for all subtrees
bottom-up using the definition of the Strahler number.
A straightforward recursive algorithm can be implemented on a deterministic Turing machine, running in logspace and polynomial time and equipped with an auxiliary stack,
which puts the problem in the class $\LogDCFL \subseteq \NC^2 \subseteq \DSPACE(\log^2 n)$~\cite{Sudborough78}. In particular, Strahler numbers can be computed in polylogarithmic time with polynomially many processors.
Alternatively, one can implement a recursive evaluation in $\mathcal{O}(\log n \log \log n)$ space.
To do so, one uses the fact that the Strahler number of a binary tree with $n$ leaves is bounded by $\log_2(n)$,
and hence its bit length is $\mathcal{O}(\log \log n)$.
To the best of our knowledge, the precise complexity of computing Strahler numbers has not yet been established.

\subparagraph{Contributions.}
Our first goal is to pinpoint the precise parallel complexity of computing the Strahler number. 
As explained above, the problem belongs to $\NC$, but for instance the existence
of a logspace algorithm is by no means obvious. Formally, we 
consider the decision problem,
asking whether $\hs(t) \geq k$ for a given tree $t$ and a given number $k$. 
We show that this problem is $\NC^1$-complete
if $t$ is given in term representation.\footnote{For instance, $bbaabaa$ (or $b(b(a,a),b(a,a))$ with brackets)
is the term representation of a complete binary tree of height 2, where $b$ denotes an inner node and $a$ denotes a leaf.}
Recall that $\NC^1$ is the class of all problems that can be solved by a uniform\footnote{All circuit complexity classes refer to their uniform variants in this paper. In Section~\ref{sec-CC} we will say more about uniformity.} family of bounded fan-in circuits of polynomial-size and logarithmic depth,
which is a subclass of deterministic logarithmic space ($\L$ for short).
If the term $t$ is given as a pointer structure, i.e., by an adjacency list or matrix, then checking $\hs(t) \geq k$ 
is complete for deterministic logspace.

As a corollary, given arithmetic expression $e$ in term representation,
one can compute an optimal straight-line code for $e$ in $\NC^1$,
i.e., a sequence of statements $x := y \circ z$ for registers $x,y,z$ and an elementary arithmetic operation $\circ$.
Here, optimal means that the number of used registers is minimal.
For this, one has to compute the Strahler number of every subexpression of $e$; see also \cite[Section~2]{EsparzaLS14}.

Let us give a high-level idea of the $\NC^1$-membership proof.
The first step is to ``balance'' the input tree $t$
by computing a so-called \emph{tree straight-line program} (TSLP) for $t$, whose depth is 
logarithmic in the size of $t$.
This can be done in $\TC^0$ by a result from \cite{GanardiL19}.  Roughly speaking, 
a TSLP is a recursive decomposition of a tree into subtrees
and so-called contexts (subtrees, where a smaller subtree is removed). Originally, TSLPs
were introduced as a formalism for grammar-based tree compression; see~\cite{Lohrey15} for more details.
The next step is to convert the TSLP for $t$ into a bounded fan-in Boolean circuit, that decides whether $\hs(t) \geq k$. 
Furthermore, the polynomial size and logarithmic depth of the TSLP should be preserved to obtain an $\NC^1$ upper bound.
A straightforward construction only leads to such a Boolean circuit with \emph{unbounded} fan-in OR-gates.
We obtain a bounded fan-in circuit by carefully analyzing 
the unary linear term functions computed by contexts when binary nodes are interpreted according to \eqref{eq-strahler-operation}.

For the $\NC^1$-hardness, we show that the Boolean formula
problem, which is one of the best known $\NC^1$-complete problems \cite{Bus87}, 
can be reduced to the problem of computing the Strahler number of
a tree given in term representation. A similar reduction from the monotone circuit value problem shows that the computation of the Strahler number is $\P$-complete
when the input tree is given succinctly by a directed acyclic graph (DAG) or a TSLP. 

We also consider the problem of checking $\hs(t) \geq k$ for a fixed value $k$ that is not part
of the input (the input only consists of the tree $t$). If $t$ is given in term representation (resp., pointer representation) then this problem is $\TC^0$-complete for all $k \geq 4$ (resp.,
$\LS$-complete for all $k \geq 3$). Moreover, if $t$ is given by a DAG, then this problem is $\LS$-complete for all $k \ge 3$, whereas for TSLP-represented trees the problem is $\NL$-complete for all $k \geq 2$.

In Section~\ref{sec-CNF}  we briefly report on some results concerning the maximal Strahler number of derivation trees of a given context-free grammar in Chomsky normal form (CNF). It is known to be undecidable, 
whether every word produced by a given context-free grammar has a derivation tree 
of Strahler number at most a given bound $k$~\cite[Theorem~5]{Gruska71b}. Here, we are interested in the question, whether a given CNF-grammar $G$ produces at least one derivation tree $t$
with $\hs(t) \geq k$ for a given number $k$. We show that this problem is $\P$-complete. 
For the upper  bound we compute
the maximal Strahler number among all derivations trees of a given CNF-grammar (which can be $\infty$) by a fixpoint iteration procedure. 
The lower bound follows from the above-mentioned $\P$-completeness result for directed acyclic graphs.

Finally, we also consider the restriction to \emph{acyclic derivation trees}.
A derivation tree is called acyclic if there is no nonterminal that appears more than once on a path in the derivation tree.
The motivation for this restriction comes from the recent paper \cite{LRZ25}, where it was shown that the intersection non-emptiness
problem for a given list of group DFA\footnote{A group DFA is a deterministic finite automaton, where for every input letter $a$
the $a$-labelled transitions induce a permutation of the set of states.} 
 plus a single context-free grammar is $\PSPACE$-complete. For general DFA, this problem is $\EXPTIME$-complete \cite{SwernofskyW15}.
 Moreover, if the context-free grammar $G$ is such that for some constant $k$, all acyclic derivation trees of $G$ have 
 Strahler number at most $k$, then the intersection problem (with the finite automata restricted to group DFA) is $\NP$-complete.
 In \cite{ LRZ25}, it was shown that the problem whether a given CNF-grammar has an acyclic derivation tree of Strahler number at least $k$ is in $\NP$, when $k$ is a fixed constant. We show that the problem is $\NP$-complete already for $k=2$.
 Finally, when $k$ is part of the input, we show that the problem becomes $\PSPACE$-complete.
  
 \subparagraph{Broader context: tree evaluation and tree balancing.}
 The problem of computing the Strahler number of a given tree is a special instance of a  \emph{tree evaluation problem}:
The input is a rooted tree where each leaf is labelled with a value from a domain $A$,
and each inner node carries a (suitably specified) function $f \colon A^r \to A$ where $r$ is the number of its children.
The goal is to compute the value of the root, obtained by evaluating the functions at each node from bottom to top.
For the case of Strahler numbers we have $A = \mathbb{N}$, every leaf is labelled with $0$ and there 
 is only one binary operation implicitly defined by \eqref{eq-strahler-operation} (or explicitly by  \eqref{eq-strahler-algebra} on page~\pageref{page-st}).
 The corresponding algebra will be called the \emph{Strahler algebra}.

Other prominent examples are the evaluation problems for Boolean formulas such as $(1 \vee 0) \wedge 1$
and arithmetic expressions over the natural numbers (or other rings) such as $(1 + 2) \times (3 + 4)$.
Boolean formula evaluation is $\NC^1$-complete~\cite{Bus87}.
In fact, the acceptance problem of a fixed tree automaton or, equivalently, evaluating an expression over a fixed finite algebra
is known to be in $\NC^1$ for every finite algebra~\cite{GanardiL19,Loh01rta}.
By an algebra, we simply mean a set equipped with a set of finitary operations.
More surprisingly, arithmetic expressions can be evaluated in deterministic logspace~\cite{Ben-OrC92,BCGR92,ChiuDL01}
despite the fact that the value of an expression may have polynomially many bits in terms of the size of the expression.

Any algorithm that performs a bottom-up computation over a tree can
be seen as an instance of tree evaluation,
assuming that the local computation at each node is sufficiently simple.
A classical example is Courcelle's theorem, stating that any 
monadic second-order (MSO) definable graph property $\Phi$ can be checked in linear time
over graphs of bounded tree-width~\cite{Courcelle90}.
The standard proof of Courcelle's theorem compiles the MSO formula $\Phi$ into a tree automaton
$\mathcal{A}_\Phi$ that, given a tree decomposition of a graph $G$, verifies
whether $\Phi$ holds in $G$.
As remarked above, tree automata can be simulated in $\NC^1 \subseteq \L$,
and therefore Courcelle's theorem also holds when linear time is replaced by logspace~\cite{ElberfeldJT10}
(assuming the logspace version of Bodlaender's theorem for computing small-width tree decompositions; see \cite{ElberfeldJT10}).
In fact, \cite{ElberfeldJT10} proves a more powerful \emph{solution histogram} version of Courcelle's theorem,
which, in the end, reduces to evaluating arithmetic expressions. 

Recently, the tree evaluation problem has attracted renewed attention due to a surprising result by Cook and Mertz~\cite{CookM24}.
They presented an algorithm that evaluates a complete binary tree of height $h$, whose inner nodes are labelled with 
binary operations over $\{1, \ldots, k\}$ and whose leaves are labelled with elements from $\{1, \ldots, k\}$,
in space $\mathcal{O}(h \log \log k + \log k)$.
A straightforward evaluation takes $\mathcal{O}(h \log k)$ space.
Since the height $h$ is logarithmic in the total input size $n$, the Cook-Mertz algorithm uses $\mathcal{O}(\log n \log \log n)$ space, which comes very close to $\mathcal{O}(\log n)$ space.
It is also a key ingredient in Ryan Williams' recent proof that
any $t$-time bounded Turing machine can be simulated in $\mathcal{O}(\sqrt{t \log t})$ space~\cite{Williams25}.
Notice that the Cook-Mertz algorithm does not give any nontrivial space bounds for the computation of the Strahler number
of a tree $t$, since the height of $t$ can be linearly large in its size.

A standard strategy to evaluate a tree $t$ of size $n$ using small space or in parallel polylogarithmic time is to first \emph{balance} $t$,
i.e., to transform it into an equivalent tree of  depth $\mathcal{O}(\log n)$ and size $\mathsf{poly}(n)$.
In a second step, the reduced depth can often be exploited to evaluate the tree in parallel or in small space.
For example, to evaluate a balanced arithmetic expression in logspace, one can use a result by Ben-Or and Cleve~\cite{Ben-OrC92} that transforms
an arithmetic expression of depth $d$ into a product of $4^d$ many
$(3 \times 3)$-matrices such that the value of the arithmetic expression appears as a 
 particular entry in the matrix product.
The matrix product can in turn be evaluated in logspace using results from~\cite{ChiuDL01}.

Balancing algorithms were first presented by Spira~\cite{Spira71} for Boolean formulas and by Brent~\cite{Brent74} for arithmetic expressions.
Later work showed that arithmetic expressions can be balanced in $\NC^1$ (observed implicitly in~\cite{BCGR92}) and in fact in $\TC^0$~\cite{GanardiL19}.
A generic framework for evaluating trees was presented in~\cite{KrebsLL17}, which implicitly balances the input tree in $\NC^1$.
The above-mentioned logspace version of Courcelle's theorem was improved to $\NC^1$~\cite{ElberfeldJT12} (under an appropriate input form) in subsequent work.
The first step of that algorithm is to balance a given tree decomposition in $\TC^0$.

In general, not every algebra admits such a depth-reduction result,
if one requires that the balanced tree is over the \emph{same} algebra~\cite[Theorem~1]{Kosaraju90}.
The core of most tree balancing approaches is a purely syntactic recursive decomposition of the input tree into subtrees
and contexts (subtrees where a subtree was removed) and the depth of this decomposition is bounded logarithmically
in the size of the input tree. Formally, this decomposition is
a tree straight-line program of logarithmic depth.
While subtrees evaluate to elements, contexts describe unary linear term functions over the algebra.
For example, over a commutative semiring a context computes an affine function $x \mapsto ax+b$,
and can be represented by the parameters $a,b$.
Furthermore, the composition of two affine functions can be implemented using semiring operations on these parameters.
The main challenge towards efficient tree balancing and tree evaluation over a particular algebra
is understanding the structure of its unary linear term functions (called the \emph{functional algebra} in~\cite{KrebsLL17}).
In general this can be difficult, as can be seen from the example of a finite algebra:
Given a tree automaton with $k$ states, the contexts can induce up to $k^k$ many state transformations.
In particular, the space bound of $\mathcal{O}(h \log \log k + \log k)$ achieved by the Cook-Mertz algorithm for an algebra of size $k$
cannot be immediately extended to unbalanced trees by applying the Cook-Mertz algorithm to a balanced tree straight-line program 
for the original tree, since $k$ would blow up to $k^k$.
For the special case of the Strahler algebra we provide a characterization of the unary linear term functions computed by contexts
in Section~\ref{sec-compute-st}.

A short version of this article appeared in \cite{GanardiL26}.
 
\section{Preliminaries} 

We assume some familiarity with formal language theory, in particular with context-free grammars; see e.g.~\cite{HoUl79} for details.
The set of all finite words over an alphabet $\Gamma$ is denoted with $\Gamma^*$; it includes the 
empty word $\varepsilon$. The length of a word $w \in \Gamma^*$ is $|w|$ and the number of occurrences
of $a \in \Gamma$ in the word $w$ is denoted with $|w|_a$.

\subsection{Directed acyclic graphs, trees, contexts}  \label{sec-dags-trees}

\subparagraph{Directed acyclic graph.}
We have to deal with node-labelled directed acyclic graphs (DAGs).
Let us fix a ranked alphabet $\Sigma$ (possibly infinite), where every $a \in \Sigma$ has a rank in $\mathbb{N}$.
Let $\Sigma_i \subseteq \Sigma$ be the set of symbols of rank $i \in \mathbb{N}$.
A  \emph{$\Sigma$-labelled DAG} is a tuple $\mathcal{D} = (V, v_0, \lambda, \gamma)$ with the following properties:
\begin{itemize}
\item $V$ is the finite set of nodes.
\item $v_0 \in V$ is a distinguished root.
\item $\lambda \colon V \to \Sigma$ is a mapping that assigns to every node $v \in V$ its label $\lambda(v)$.
We say that $v$ is a $\lambda(v)$-labelled node.
\item $\gamma \colon V \to V^*$ is a function such that $\lambda(v) \in \Sigma_{|\gamma(v)|}$.
It assigns to every node $v$ the list $\gamma(v)$ of $v$'s children (a node may occur more than once in this list).
\item We require that the directed graph $(V, \{ (u,v) : v \text{ appears in } \gamma(u) \})$ is acyclic.
\end{itemize} 
Sometimes we do not need the labelling function $\lambda$, in which case we omit $\lambda$ from the description of the DAG.

We also write $d(v) = |\gamma(v)|$ for the \emph{degree} of the node $v \in V$. Nodes of degree zero are also called \emph{leaves}.
For every $v \in V$ and $1 \leq i \leq d(v)$ we define $v \cdot i$ as the 
$i^{th}$ node in the word $\gamma(v)$.
This notation can be extended to words $\alpha \in \mathbb{N}^*$ (so-called \emph{address strings}) inductively:
$v \cdot \varepsilon = v$ and if $\alpha = \beta i$, $v \cdot \beta$ is defined and $1 \leq i \leq d(v \cdot \beta)$ then
$v \cdot \alpha = (v \cdot \beta) \cdot i$.
We define the \emph{size} $|\mathcal{D}|$ of $\mathcal{D}$ as $|\mathcal{D}| = \sum_{v \in V} (d(v)+1)$.

A path in $\mathcal{D}$ can be specified by its start node $v$
and an address string $\alpha = i_1 i_2 \cdots i_n \in \mathbb{N}^*$. The corresponding path consists
of the nodes 
$v, v\cdot i_1, v\cdot i_1i_2, \ldots, v\cdot \alpha$.
For a node $v \in V$ we define $\height_{\mathcal{D}}(v) = \max \{ |\alpha| : \alpha \in \mathbb{N}^*, v \cdot \alpha \text{ is defined} \}$.
Moreover, the \emph{height} (or \emph{depth}) of $\mathcal{D}$ is $\max \{ \height_{\mathcal{D}}(v) : v \in V\}$.

In the following, we will mainly consider \emph{binary} DAGs where $d(v) \leq 2$ for every $v \in V$.

\subparagraph{Trees.}
A  \emph{$\Sigma$-labelled tree} can be defined as a $\Sigma$-labelled DAG $t = (V, v_0, \lambda, \gamma)$
as above such that in addition for every $v \in V$ there is a unique address string $\alpha$ such that $v = v_0 \cdot \alpha$.
The node $v_0$ is the \emph{root} of the tree.
For a tree $t$ and a node 
$v$ we write $t(v)$ for the \emph{subtree} of $t$ rooted in $v$. It is the tree $(V', v, \lambda\rest_{V'}, \gamma\rest_{V'})$ where
$V' = \{ v \cdot \alpha : \alpha \in \mathbb{N}^*, v \cdot \alpha \text{ is defined}\}$.

From a DAG $\mathcal{D} = (V, v_0, \lambda, \gamma)$ one can define a tree $\unfold(\mathcal{D})= (V', \varepsilon, \lambda', \gamma')$ (the \emph{unfolding} of $\mathcal{D}$) as follows:
The set of nodes of $V'$ contains all address strings $\alpha$ such that $v_0\cdot\alpha$ is defined and the empty
string $\varepsilon$ is the root.
If $\alpha \in V'$ is such that $v = v_0 \cdot \alpha$,
then $\lambda'(\alpha) = \lambda(v)$ and $\gamma'(\alpha) = (\alpha 1) (\alpha 2) \cdots (\alpha d(v))$.
Figure~\ref{fig tree}(right) shows a DAG, whose unfolding is the tree on the left. The edge from a node to its $i^{th}$ 
child in the DAG is labelled with $i$. Clearly, the size of $\unfold(\mathcal{D})$ can be exponential in the size of $\mathcal{D}$.
This shows the potential of DAGs as a compact tree representation; see also \cite{DowneyST80,FlajoletSS90}.

Most trees in this paper are (unlabelled) \emph{binary trees}, in which case we have $d(v) \in \{0,2\}$ for all nodes $v$. 
For such trees we assume $\Sigma_0 = \{a\}$, and $\Sigma_2 = \{b\}$, i.e., internal nodes are labelled with $b$ and leaves are labelled with $a$.
Thus, the node labels do not carry any information and can be omitted.

\begin{figure}[t]
 \tikzset{alpha/.style={inner sep = 0pt, fill=white}}
  \tikzset{round/.style={inner sep = 1.2pt, circle, fill=black}}
\centering{
\begin{tikzpicture}
\node[alpha] (A) {$b$};
\node[alpha, below left = .8cm and 1.6cm of A] (A1) {$b$};
\node[alpha, below right = .8cm and 1.6cm  of A] (A2) {$b$};
\node[alpha, below left = .4cm and .8cm of A1] (A11) {$b$};
\node[alpha, below right = .4cm and .8cm  of A1] (A12) {$b$};
\node[alpha, below left = .4cm and .8cm of A2] (A21) {$b$};
\node[alpha, below right = .4cm and .8cm  of A2] (A22) {$b$};

\node[alpha, below left = .2cm and .4cm of A11] (A111) {$a$};
\node[alpha, below right = .2cm and .4cm  of A11] (A112) {$a$};
\node[alpha, below left = .2cm and .4cm of A12] (A121) {$b$};
\node[alpha, below right = .2cm and .4cm  of A12] (A122) {$a$};

\node[alpha, below left = .2cm and .4cm of A121] (A1211) {$a$};
\node[alpha, below right = .2cm and .4cm of A121] (A1212) {$a$};

\node[alpha, below left = .2cm and .4cm of A21] (A211) {$a$};
\node[alpha, below right = .2cm and .4cm  of A21] (A212) {$a$};
\node[alpha, below left = .2cm and .4cm of A22] (A221) {$b$};
\node[alpha, below right = .2cm and .4cm  of A22] (A222) {$a$};

\node[alpha, below left = .2cm and .4cm of A221] (A2211) {$a$};
\node[alpha, below right = .2cm and .4cm of A221] (A2212) {$a$};

\draw (A) -- (A1);
\draw (A) -- (A2);
\draw (A1) -- (A11);
\draw (A1) -- (A12);
\draw (A2) -- (A21);
\draw (A2) -- (A22);
\draw (A11) -- (A111);
\draw (A11) -- (A112);
\draw (A12) -- (A121);
\draw (A12) -- (A122);
\draw (A21) -- (A211);
\draw (A21) -- (A212);
\draw (A22) -- (A221);
\draw (A22) -- (A222);

\draw (A121) -- (A1211);
\draw (A121) -- (A1212);
\draw (A221) -- (A2211);
\draw (A221) -- (A2212);

\node[alpha,  right = 5cm of A] (B) {$b$};
\node[alpha, below = .6cm of B] (B1) {$b$};
\node[alpha, below = .6cm of B1] (B11) {$b$};
\node[alpha, below = .6cm of B11] (B111) {$b$};
\node[alpha, below = .6cm of B111] (B1111) {$a$};

\draw (B) to[bend left=30] node[right = -.65mm]{$\scriptstyle{2}$} (B1);
\draw (B) to[bend right=30] node[left = -.65mm]{$\scriptstyle{1}$} (B1);
\draw (B1) to[bend left=30] node[right = -.65mm]{$\scriptstyle{2}$}(B11);
\draw (B1) to[bend right=60] node[left = -.65mm]{$\scriptstyle{1}$} (B111);
\draw (B11) to[bend right=30] node[right = -.65mm]{$\scriptstyle{1}$} (B111);
\draw (B11) to[bend left=60] node[right = -.65mm]{$\scriptstyle{2}$} (B1111);
\draw (B111) to[bend left=30] node[right = -.8mm]{$\scriptstyle{2}$} (B1111);
\draw (B111) to[bend right=30] node[left = -.65mm]{$\scriptstyle{1}$} (B1111);

\end{tikzpicture}}
\caption{A binary tree.}  \label{fig tree}
\end{figure}

\subparagraph{Input representation of trees and DAGs.}
In circuit complexity (see the next section), the input representation of DAGs heavily influences the complexity. 
The representation as a tuple $(V, v_0, \lambda, \gamma)$ is also called \emph{pointer representation}.
In the pointer representation the edges of the DAG are given by adjacency list (namely the lists $\gamma(v)$).
In the case of trees, another well-known representation of a tree $t$ is the \emph{term representation}, where $t$ is represented by a term formed from the symbols in $\Sigma$.
For instance, the string $bbbaabbaaabbaabbaaa$ (which is written as
$b(b(b(a,a),b(b(a,a),a)),b(b(a,a),b(b(a,a),a)))$ for better readability) is the term representation
of the binary tree shown in Figure~\ref{fig tree}. It is obtained by listing the node labels of the binary tree in 
preorder.
With $\Bin$ we denote the set of all $x \in \{a,b\}^*$ that are the term representation of a binary tree.
It can be produced by the context-free grammar with the productions $S \to a \mid b S S$.

For binary DAGs, we also use the so-called extended connection representation, which extends the pointer representation
by a further relation; see also \cite{GanardiL19,Ruz81} and \cite[Definition~2.43]{Vollmer99}.
Consider a binary DAG $\mathcal{D} = (V, v_0, \lambda, \gamma)$ as above. 
The \emph{extended connection representation}, briefly \emph{ec-representation}, of $\mathcal{D}$, denoted by $\mathrm{ec}(\mathcal{D})$, is the tuple
$(V, v_0, \lambda,\gamma, \mathsf{ec}_{\mathcal{D}})$, where the set $\mathsf{ec}_{\mathcal{D}}$ consists of all  
so-called \emph{ec-triples} $(v, \alpha, v\cdot\alpha)$, where
$v \in V$, $\alpha \in \{1,2\}^*$ is an address string such that $v\cdot\alpha$ is defined and $|\alpha| \leq  \log_2 |\mathcal{D}|$.
Note that since $\mathcal{D}$ is binary, the number of   address strings with $|\alpha| \leq  \log_2 |\mathcal{D}|$ is bounded by 
$\mathcal{O}(|\mathcal{D}|)$.

\subparagraph{Contexts.}
Fix a so-called placeholder symbol $x \notin \{a,b\}$.
A \emph{binary context} is a binary tree $t$, where exactly one leaf $v \in V_0$ is labelled with
$x$. All other leaves are labelled with $a$ and internal nodes are labelled with $b$.
Given a binary context $t$ and a binary tree (resp., context) $t'$ we define 
the binary tree (resp., context) $t[x/t']$ by replacing the unique occurrence of $x$ in $t$ by $t'$.
For instance, we have $b(b(a,a),b(x,a))[x/b(a,a)] = b(b(a,a),b(b(a,a),a))$.

\subparagraph{Strahler numbers.} \label{page-st}
Let $s \colon \mathbb{N} \times \mathbb{N} \to \mathbb{N}$ be the binary operation with
\begin{equation} \label{eq-strahler-algebra}
s(x,y) = \begin{cases} 
x+1 & \text{ if } x = y, \\
\max(x,y) & \text{ if } x \neq y.
\end{cases}
\end{equation}
The algebraic structure $\mathcal{S} = (\mathbb{N}, s,0)$ is also called the \emph{Strahler algebra} in the following.
The Strahler number $\hs(t)$ of a binary tree $t \in \Bin$ is defined as follows:
\begin{eqnarray*}
\hs(a) & = & 0 \\
\hs(b(t_1, t_2)) & = & s(\hs(t_1), \hs(t_2)) 
\end{eqnarray*}
In other words:
$\hs(t)$ is obtained by evaluating $t$ in the Strahler algebra, where the binary symbol $b$ is interpreted by $s$ and the leaf symbol 
$a$ is interpreted by $0$. The tree from Figure~\ref{fig tree} has Strahler number 3.

It is well-known that if $t$ has $n$ leaves then $\hs(t) \leq \log_2 n$: Let $m = \hs(t)$. The case $m = 0$ is clear. 
If $m > 0$ then the root of $t$ must have at least two descendants 
with Strahler number $m-1$. By induction it follows that $t$ has at least $2^i$ many nodes with Strahler number $m-i$. Thus, $t$ has at least
$2^m$ many leaves (= nodes with Strahler number $0$). Moreover, 
the Strahler number of a tree $t$ is the largest $k$ such that a complete binary tree $t_k$ of depth $k$ can be embedded into $t$ (thereby,
edges of $t_k$ can be mapped to non-empty paths in $t$).

\subsection{Computational complexity} \label{sec-CC}

We assume that the reader is familiar with the complexity classes $\LS$ (deterministic logspace),
$\NL$ (nondeterministic logspace),
$\P$, $\NP$ and $\PSPACE$; see e.g.~\cite{AroBar09} for details.
A function $f \colon \Sigma^* \to \Gamma^*$ is  \emph{logspace computable} if it can be computed on a deterministic Turing-machine with
a read-only input tape, a write-only output tape and a working tape whose length is bounded logarithmically in the input length; such a machine is also called a logspace transducer. It is well known that the composition of logspace computable functions is logspace computable again.

In the rest of this section we briefly introduce some well-known concepts from circuit
complexity, more details can be found in the monograph \cite{Vollmer99}.

A \emph{(Boolean) circuit}  with $n$ inputs can be defined as a $\Sigma$-labelled DAG $\mathcal{B} = (V,v_0, \lambda, \gamma)$, where the set of node labels
$\Sigma$ consists of the symbols $x_1, \ldots, x_n$ of arity $0$ (the input variables) and additional 
Boolean functions of arbitrary arity (we identify here a $k$-ary Boolean function with a $k$-ary node label).
The set of these Boolean functions is also called the \emph{Boolean base} of $\mathcal{B}$.
Nodes of $\mathcal{B}$ are usually called \emph{gates} and the degree $d(v)$ of a gate $v$ is called its \emph{fan-in}.

A Boolean circuit $\mathcal{B} = (V,v_0, \lambda, \gamma)$ as above defines a mapping $\eta_\mathcal{B} : \{0,1\}^n \to \{0,1\}$ in the natural way:
Let $w = a_1 a_2 \cdots a_n \in \{0,1\}^n$. First define $\eta_v(w)$ for every gate $v$ inductively:
$\eta_v(w)=a_i$ if $\lambda(v) = x_i$ and $\eta_v(w) = f(\eta_{v_1}(w), \eta_{v_2}(w), \ldots, \eta_{v_d}(w))$ if 
$\gamma(v) = v_1 v_2 \cdots v_d$ and $\lambda(v) = f$ (a Boolean function of arity $d$).
Finally, we set $\eta_\mathcal{B}(w) = \eta_{v_0}(w)$.

The complexity class $\NC^1$ contains all languages $L \subseteq \{0,1\}^*$ such that there exists a circuit family $(\mathcal{B}_n)_{n \in \mathbb{N}}$ where
\begin{itemize}
\item $\mathcal{B}_n$ is a Boolean circuit with $n$ inputs over the Boolean base consisting of the unary function $\neg$ (negation) and 
the binary functions $\wedge$ (conjunction) and $\vee$ (disjunction),
\item $\mathcal{B}_n$ has size $n^{\mathcal{O}(1)}$ and depth $\mathcal{O}(\log n)$ and
\item for every $w \in \{0,1\}^n$, $\eta_{\mathcal{B}_n}(w)=1$ if and only if $w \in L$.
\end{itemize}
Important subclasses of $\NC^1$ are $\AC^0$ and $\TC^0$. The class $\AC^0$ is defined similarly to $\NC^1$ with the following modifications:
\begin{itemize}
\item The Boolean base of $\mathcal{B}_n$ consists of $\neg$ and disjunctions and conjunctions of any arity.
\item The depth of the circuit $\mathcal{B}_n$ is bounded by a fixed constant.
\end{itemize}
If one includes in the first point also majority functions of any
arity in the Boolean base, then one obtains the class $\TC^0$.
The $m$-ary majority function returns $1$ if and only if more than $m/2$ many input bits are $1$.

We only use the $\DLOGTIME$-uniform variants of $\AC^0$, $\TC^0$ and $\NC^1$.
For  $\AC^0$ and $\TC^0$, $\DLOGTIME$-uniformity means that
for a given tuple $(1^n, u, v)$, where $1^n$ is the unary encoding of $n \in \mathbb{N}$ and
 $u$ and $v$ are binary encoded gates of the $n$-th circuit $\mathcal{B}_n$, one can 
\begin{enumerate}[(i)]
\item compute the label of gate $u$ in time
$\mathcal{O}(\log n)$ and
\item \label{point-ii} check in  time $\mathcal{O}(\log n)$ whether $u$ is an input gate for $v$.
\end{enumerate}
Note that since the number of gates of $\mathcal{B}_n$ is polynomially bounded in $n$, the gates of $\mathcal{B}_n$ can be encoded by bit strings of length $\mathcal{O}(\log n)$. 
Thus the time bound $\mathcal{O}(\log n)$ is linear in the input length $|u|+|v|$.

The definition of $\DLOGTIME$-uniform $\NC^1$ is similar, but instead of \eqref{point-ii}
one requires that for given $1^n$, $u$, $v$ as above and an address string $\alpha \in \{1,2\}^*$ with
$|\alpha| \leq \log_2 |\mathcal{B}_n|$ one can check in time $\mathcal{O}(\log n)$ whether $u = v \cdot \alpha$
 \cite{BIS90,Ruz81,Vollmer99}. In other words, the relations from the ec-representation of $\mathcal{B}_n$ 
 can be verified in time $\mathcal{O}(\log n)$. We denote the $\DLOGTIME$-uniform variants of $\AC^0$, $\TC^0$ and $\NC^1$ with 
$\uAC^0$, $\uTC^0$ and $\uNC^1$, respectively.
  It is known that $\uNC^1$ coincides with $\mathsf{ALOGTIME}$
 (logarithmic time on an alternating random access Turing machine). The following inclusions hold between the complexity classes introduced above:
\[
\uAC^0 \subsetneq \uTC^0 \subseteq \uNC^1 = \mathsf{ALOGTIME} \subseteq \LS \subseteq \NL \subseteq \P \subseteq \NP \subseteq \PSPACE .
\]
The definitions of the above circuit complexity classes 
can be easily extended to functions $f\colon \{0,1\}^* \to \{0,1\}^*$.
This can be done by encoding $f$ by the language $L_f = \{ 1^i 0 w : w \in \{0,1\}^*, \text{ the $i$-th bit of $f(w)$ is 1} \}$.

Hardness for $\uNC^1$ (resp., $\uTC^0$) is always understood with respect to $\uTC^0$-computable (resp., $\uAC^0$-computable)
many-one reductions.

Typical problems in $\uTC^0$ are the computation of the integer quotient of binary encoded integers, and the sum and product of an arbitrary number of binary encoded integers \cite{HeAlBa02}. The canonical $\uTC^0$-complete language is
$\mathsf{Majority} = \{ x \in \{0,1\}^* : |x|_1 > |x|/2 \}$. Also the language $\Bin$ from Section~\ref{sec-dags-trees} is 
$\uTC^0$-complete. Membership in $\uTC^0$ was shown in \cite{Loh01rta}, and $\uTC^0$-hardness can be easily shown by a reduction
from the $\uTC^0$-complete language $\{ w\in \{0,1\}^* : |w|_0 = |w|_1 \}$.

A famous $\uNC^1$-complete problem is the Boolean formula problem: the input is a binary tree $t$ in term representation using the binary symbols
$\wedge$ and $\vee$ and the constant symbols $0$ (for false) and $1$ (for true), and the question is whether $t$ evaluates to $1$ in the 
Boolean algebra. Buss has shown the following theorem (note that the negation operator $\neg$ is not needed for $\uNC^1$-hardness 
in \cite{Bus87}):

\begin{theorem}[\cite{Bus87}] \label{thm buss}
The Boolean formula problem is complete for $\uNC^1$.
\end{theorem}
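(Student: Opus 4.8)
The statement has two halves, which I would prove separately.

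\emph{$\uNC^1$-hardness.} I would reduce an arbitrary $L \in \uNC^1$ to the Boolean formula problem by unfolding the circuit into a tree. Fix a $\DLOGTIME$-uniform family $(\mathcal{B}_n)$ of $\{\neg,\wedge,\vee\}$-circuits of depth $O(\log n)$ and polynomial size deciding $L$. First I would normalise each $\mathcal{B}_n$, without changing its function, its depth (up to a constant factor), or its polynomial size: introduce for every gate $u$ a twin gate computing $\neg u$ and propagate negations by De Morgan (possible with no duplication, since the twins are shared), so that negation is pushed down to the inputs; equivalently, we may take the inputs to be literals $x_1,\bar x_1,\dots,x_n,\bar x_n$ fed the bits $w_i$ and $1-w_i$, together with constants $0,1$, and every other gate a binary $\wedge/\vee$-gate. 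Then ``level'' the circuit with dummy $\vee$-gates so that every path from the output gate $v_0$ to a leaf has the same length $D = O(\log n)$. The unfolding of this circuit is a complete binary tree of depth $D$, i.e.\ a term of length $2^{D+1}-1 = n^{O(1)}$ over $\{\wedge,\vee\}\cup\{x_i,\bar x_i,0,1\}$; replacing each leaf labelled $x_i$ by $w_i$ and each leaf labelled $\bar x_i$ by $1-w_i$ produces a \emph{monotone} formula (no $\neg$) that evaluates to $1$ iff $w \in L$ — this is why negation is not needed in the target instances. Finally I would check that this map is $\uTC^0$-computable: the $p$-th symbol of the output in preorder corresponds to the unique address string $\alpha \in \{1,2\}^{\le D}$ whose node in the complete binary tree of depth $D$ has preorder index $p$, and the bijection $p \leftrightarrow \alpha$ is elementary arithmetic on $O(\log n)$-bit numbers; since $D \le \log_2|\mathcal{B}_n|$ (pad the circuit size up if necessary), the ec-representation of $\mathcal{B}_n$ contains the triple $(v_0,\alpha,v_0\cdot\alpha)$, so a single unbounded $\vee$ over the gates of $\mathcal{B}_n$, each disjunct being a $\DLOGTIME$-uniformity query, recovers the gate $u = v_0\cdot\alpha$, whose label is then emitted (with literals resolved against $w$).

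\emph{Membership in $\uNC^1$.} Using $\uNC^1 = \ALOGTIME$, it suffices to evaluate a monotone term $F$ of size $n$ on an alternating random-access machine. The engine is the classical ``$1/3$--$2/3$'' lemma behind Spira's balancing~\cite{Spira71}: every binary tree with $n$ nodes has a node $v$ whose subtree $F(v)$ has between $n/3$ and $2n/3$ nodes. Writing $F = C[x/F_v]$ where $F_v = F(v)$ and $C$ is the binary context obtained by deleting $F_v$, one has $|F_v| \le 2n/3$ and $|C| = |C[x/0]| = |C[x/1]| \le 2n/3 + 1$. Since every gate is monotone, the unary function $b \mapsto \val(C[x/b])$ is one of the three monotone unary Boolean functions (constant $0$, constant $1$, identity $x$) and is determined by the two bits $\val(C[x/0])$ and $\val(C[x/1])$; a direct case analysis then gives
\[\val(F) \;=\; \val(C[x/0]) \,\vee\, \bigl(\val(F_v)\wedge\val(C[x/1])\bigr).\]
Thus evaluating $F$ reduces to evaluating the three strictly smaller monotone terms $F_v$, $C[x/0]$ and $C[x/1]$, so the recursion tree has depth $O(\log n)$. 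The alternating machine would verify ``$\val(F)=c$'' by existentially guessing the three bit-values, checking the identity above, and universally branching into the three recursive checks; a call passes to its parent only $O(1)$ bits together with an $O(\log n)$-bit pointer into the input naming the current sub-instance (a subtree, or a subtree with one subtree removed).

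\emph{Main obstacle.} The genuine difficulty is confined to the membership direction: making the recursion fit into \emph{logarithmic} alternating time rather than $O(\log^2 n)$. Done naively, locating a balanced node $v$ and describing $F_v$, $C[x/0]$, $C[x/1]$ costs $\Theta(\log n)$ time at each of the $\Theta(\log n)$ levels. Buss's argument avoids recomputation: along each branch the machine maintains a succinct handle on the current sub-instance together with just enough cached size information — essentially a prefix-sum / balanced-parentheses view of the preorder encoding of $F$ — so that moving to the next level costs only $O(1)$ amortised steps, reading the few relevant input symbols by random access. Pinning this down amounts to isolating a family of ``admissible'' sub-instances closed under $F \mapsto (F_v, C[x/0], C[x/1])$ and showing that its size bookkeeping updates locally; this is where the work lies. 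I note two alternative routes to membership, both of the same flavour: one may balance $F$ into a tree straight-line program of logarithmic depth in $\TC^0$~\cite{GanardiL19} and then read off a bounded-fan-in, log-depth Boolean circuit — easy here because the two-element Boolean algebra and its finitely many unary context-functions are finite, so there is no fan-in blow-up; or one may simply invoke the cited fact that expressions over any fixed finite algebra lie in $\uNC^1$.
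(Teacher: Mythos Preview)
The paper does not prove this theorem: it is stated as a citation of Buss's result~\cite{Bus87}, with only the parenthetical remark that negation is not needed for the hardness direction. There is therefore no ``paper's own proof'' to compare against.

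Your sketch is a reasonable outline of Buss's original argument and of the standard folklore around it. The hardness direction is essentially correct: unfolding a $\DLOGTIME$-uniform depth-$O(\log n)$ circuit into a polynomial-size term is the standard move, and your use of the ec-representation to recover $v_0\cdot\alpha$ from the address string $\alpha$ is exactly what the uniformity condition for $\uNC^1$ is designed to support. For membership, you correctly identify Spira's $1/3$--$2/3$ decomposition and the identity $\val(F)=\val(C[x/0])\vee(\val(F_v)\wedge\val(C[x/1]))$ for monotone contexts, and you are right that the real content lies in getting the alternating time down from $O(\log^2 n)$ to $O(\log n)$ by amortising the navigation cost---this is precisely Buss's contribution, and your description of it (maintaining a succinct handle plus cached size information that updates in $O(1)$ per level) is accurate in spirit, though of course the details of Buss's ``pebbling'' bookkeeping are where the actual work is. The alternative membership routes you mention (balancing via~\cite{GanardiL19}, or the general finite-algebra evaluation result) are valid and indeed are the ones this paper relies on elsewhere.
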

The following results are well-known and easy to show. Let $t$  be an arbitrary binary tree. 
\begin{itemize}
\item From the term representation of $t$ one can compute in $\uTC^0$ its pointer representation.
\item From the pointer representation of $t$ one can compute in logspace its term representation.
This transformation cannot be done in $\uNC^1$ unless $\LS = \uNC^1$ holds \cite{BeaudryM95}. 
\end{itemize}
The following lemma has been shown in \cite{GanardiL19}.

\begin{lemma} [\mbox{\cite[Lemma~3.4]{GanardiL19}}] \label{lemma-unfold}
For any $c > 0$ there exists a $\uTC^0$-computable function, which maps the ec-representation of a DAG $\mathcal{D}$ of size $n$
and depth at most $c \cdot \log_2 n$ to the term representation of the tree $\unfold(\mathcal{D})$.
\end{lemma}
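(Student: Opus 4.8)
The plan is to exploit the shallowness of $\mathcal{D}$. Write $n=|\mathcal{D}|$, so the height of $\mathcal{D}$ is at most $c\log_2 n$, and recall that $\mathcal{D}$ is binary, so address strings are over $\{1,2\}$. Then the set $A=\{\alpha\in\{1,2\}^*: v_0\cdot\alpha \text{ is defined}\}$ of nodes of $\unfold(\mathcal{D})$ satisfies $|A|\le 2^{\lfloor c\log_2 n\rfloor+1}\le 2n^c$; in particular the term representation of $\unfold(\mathcal{D})$ has polynomial length. The crucial point is that every node of the unfolding is \emph{directly addressable} by a constant-depth circuit: given $v\in V$ and $\alpha\in\{1,2\}^*$ with $|\alpha|\le c\log_2 n$, one computes $v\cdot\alpha$ (or detects that it is undefined) by splitting $\alpha$ into a constant number $\lceil c\rceil$ of blocks of length at most $\log_2|\mathcal{D}|$ and, for each block $\beta$, looking up the unique $w$ with $(v,\beta,w)\in\mathsf{ec}_{\mathcal{D}}$ in the relation $\mathsf{ec}_{\mathcal{D}}$, which is part of the input. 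Since this search over $w\in V$ is just a constant-depth $\mathrm{OR}$ of input lookups, the map $(v,\alpha)\mapsto v\cdot\alpha$ and the predicate ``$v\cdot\alpha$ is defined'' are $\uAC^0$-computable; no information has to be propagated along paths of the DAG, which is what would otherwise force logarithmic depth.

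First I would compute, for every node $u\in V$, the number $N(u):=|\{\alpha\in\{1,2\}^*: u\cdot\alpha\text{ is defined}\}|$; this is exactly the size of the subtree of $\unfold(\mathcal{D})$ rooted at any address $\beta$ with $v_0\cdot\beta=u$, and it depends only on $u$. There are at most $2n^c$ candidate strings $\alpha$ (all of length at most $c\log_2 n$), so in parallel over all pairs $(u,\alpha)$ I evaluate the indicator of ``$u\cdot\alpha$ is defined'' as above, and then, for each $u$, sum these $\le 2n^c$ indicator bits; iterated addition of polynomially many $\mathcal{O}(\log n)$-bit numbers is in $\uTC^0$. In particular $N(v_0)$ is the number of nodes of $\unfold(\mathcal{D})$, i.e.\ the length of the desired output string $w$.

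Next I would produce, for each position $p\in\{1,\dots,N(v_0)\}$, the $p$-th symbol $w[p]$ of the preorder listing of $\unfold(\mathcal{D})$. A node at address $\alpha=i_1\cdots i_k\in A$ occupies preorder position
\[
\mathsf{pre}(\alpha)\;=\;1+k+\sum_{1\le j\le k,\ i_j=2} N\bigl((v_0\cdot i_1\cdots i_{j-1})\cdot 1\bigr),
\]
which, once the table $(N(u))_{u\in V}$ is available, is a single iterated addition of at most $k\le c\log_2 n$ precomputed numbers, the relevant ancestors $v_0\cdot i_1\cdots i_{j-1}$ and their left children being obtained by the $\uAC^0$ addressing procedure. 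Hence $w[p]$ is computed in $\uTC^0$ by searching in parallel over all $\alpha\in A$ for the unique one with $\mathsf{pre}(\alpha)=p$ and outputting the label of $v_0\cdot\alpha$ (namely $b$ if that node has degree $2$ and $a$ otherwise, in the unlabelled binary case, and $\lambda(v_0\cdot\alpha)$ in general). Assembling these bits over $p=1,\dots,N(v_0)$ gives the output string; since the constituents (lookups in $\mathsf{ec}_{\mathcal{D}}$, comparisons, iterated additions, counting) are all $\DLOGTIME$-uniform $\uTC^0$ operations and the overall layout is a uniform parallel composition of them, the resulting function is $\uTC^0$-computable, and the output length $N(v_0)$ is itself computed in $\uTC^0$, so the standard bit-indexed encoding of a $\uTC^0$-computable function applies.

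I expect the only genuinely nontrivial step to be the first observation, that having the ec-representation turns ``navigate to a node of the unfolding'' into a constant-depth operation, so that subtree sizes and preorder positions collapse to a single layer of counting / iterated addition instead of a depth-$\Theta(\log n)$ recursion through the DAG; everything afterwards is bookkeeping. The hypothesis $\height(\mathcal{D})\le c\log_2 n$ is used in exactly two places: to bound $|A|$ polynomially, and to bound the number of $\mathsf{ec}_{\mathcal{D}}$-lookups per address string by the constant $\lceil c\rceil$.
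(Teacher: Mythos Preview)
The paper does not actually prove this lemma; it is quoted verbatim from \cite{GanardiL19} and used as a black box. So there is no ``paper's own proof'' to compare against here.

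That said, your argument is correct and is essentially the standard way to establish this kind of result. The key observation you isolate---that the ec-representation lets one compute $v\cdot\alpha$ for $|\alpha|\le c\log_2 n$ by chaining a \emph{constant} number $\lceil c\rceil$ of table lookups, each an $\uAC^0$ search over the input triples---is exactly the point of carrying the ec-representation around in \cite{GanardiL19}. Once navigation is constant-depth, subtree sizes $N(u)$ reduce to counting over polynomially many candidate addresses (hence $\uTC^0$), and the preorder-position formula
\[
\mathsf{pre}(i_1\cdots i_k)=1+k+\sum_{j:\,i_j=2} N\bigl((v_0\cdot i_1\cdots i_{j-1})\cdot 1\bigr)
\]
is an iterated addition of $\mathcal{O}(\log n)$ precomputed numbers, again $\uTC^0$. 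Your identification of where the depth bound $c\log_2 n$ is used (polynomial bound on $|A|$ and constant bound on the number of ec-lookups) is accurate. One small remark: the lemma as stated applies to $\Sigma$-labelled DAGs whose nodes may have degree $1$ as well, but your preorder formula already handles this case correctly since $i_j=2$ forces the parent to have degree $2$, so the left sibling always exists; outputting $\lambda(v_0\cdot\alpha)$ rather than just $a$/$b$ covers the general ranked-alphabet case.
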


\subsection{Tree straight-line programs} 
\label{sec-tslps}

In this section, we introduce \emph{tree straight-line programs (TSLPs)}, which have been studied mainly as a compressed representation of trees; see~\cite{Lohrey15} for a survey. Here, we define tree straight-line programs only for unlabelled binary trees.
A  tree straight-line program (TSLP) is a tuple $\mathcal{G} = (N_0, N_1, S, \rho)$ with the following properties:
\begin{itemize}
\item $N_0$ is a finite set of \emph{tree variables}. Tree variables are considered as symbols of rank~$0$.
\item $N_1$ is a finite set of \emph{context variables}. Context variables are considered as symbols of rank~$1$. Let $N = N_0 \cup N_1$.
We assume that $N_0 \cap N_1 = \emptyset$.
\item $S \in N_0$ is the \emph{start variable}.
\item $\rho$ maps every $A \in N_0$ to an expression $\rho(A)$ that has one of the following three forms, where
$B,C \in N_0$ and $D \in N_1$: $a$, $b(B,C)$, $D(C)$ (recall from Section~\ref{sec-dags-trees} that $a$ labels the leaves of a binary
tree and $b$ labels internal nodes).
\item $\rho$ maps every $A \in N_1$ to an expression $\rho(A)$ that has one of the following three forms, where
$B \in N_0$ and $C,D\in N_1$: $b(x,B)$, $b(B,x)$, $D(C(x))$ (here, $x$ is the placeholder symbol from contexts; see Section~\ref{sec-dags-trees}).
\item The binary relation $\{ (A,B) \in N \times N : B \text{ occurs in } \rho(A) \}$ must be acyclic.
\end{itemize}
For a TSLP $\mathcal{G} = (N_0, N_1, S, \rho)$ one should see the function $\rho$ as a set of term rewrite rules
$A \to \rho(A)$ for $A \in N$.
With these rewrite rules, we can derive from every $ A\in N_0$ (resp., $A \in N_1$) a binary tree
(resp., a binary context; see Section~\ref{sec-dags-trees}) $\val_{\mathcal{G}}(A)$ (the value of $A$). We omit the index $\mathcal{G}$ if it is clear
from the context. Formally, we define $\val_{\mathcal{G}}(A)$ as follows:
\begin{itemize}
\item if $A \in N_0$ and $\rho(A)=a$ then $\val_{\mathcal{G}}(A) = a$,
\item if $A \in N_0$ and $\rho(A)=b(B,C)$ then $\val_{\mathcal{G}}(A) = b(\val_{\mathcal{G}}(B),\val_{\mathcal{G}}(C))$,
\item if $A \in N_0$ and $\rho(A)=D(C)$ then $\val_{\mathcal{G}}(A) = \val_{\mathcal{G}}(D)[x/\val_{\mathcal{G}}(C)]$,
\item if $A \in N_1$ and $\rho(A)=b(x,B)$ then $\val_{\mathcal{G}}(A) = b(x,\val_{\mathcal{G}}(B))$,
\item if $A \in N_1$ and $\rho(A)=b(B,x)$ then $\val_{\mathcal{G}}(A) = b(\val_{\mathcal{G}}(B),x)$,
\item if $A \in N_1$ and $\rho(A)=D(C(x))$ then $\val_{\mathcal{G}}(A) = \val_{\mathcal{G}}(D)[x/\val_{\mathcal{G}}(C)]$.  
\end{itemize}
Finally, we define the binary tree $\val(\mathcal{G}) = \val_{\mathcal{G}}(S)$ (recall that $S \in N_0$).

\begin{example}
Consider the TSLP $\mathcal{G}$  with $N_0 = \{S, A, B, C, D\}$, $N_1 = \{ E \}$ and the following rules:
$S \to b(A,A), \ A \to b(B,C), \ C \to E(B), \ B \to E(D), \ E(x) \to b(x,D), \ D \to a$.
Then $\val(\mathcal{G})$ is the tree from Figure~\ref{fig tree}.
\end{example}
A TSLP $\mathcal{G} = (N_0, N_1, S, \rho)$ can be encoded by a $\Sigma$-labelled DAG $(N_0 \cup N_1, S, \lambda, \gamma)$
with $\Sigma_0 = \{ a\}$, $\Sigma_1 = \{b_1, b_2 \}$ and $\Sigma_2 = \{ b, \circ_0, \circ_1 \}$ 
 in the following way:
\begin{itemize}
\item if $A \in N_0$ and $\rho(A)=a$ then $\lambda(A) = a$ and $\gamma(A) = \varepsilon$,
\item if $A \in N_0$ and $\rho(A)=b(B,C)$ then $\lambda(A) = b$ and $\gamma(A) = BC$,
\item if $A \in N_0$ and $\rho(A)=D(C)$ then $\lambda(A) = \circ_0$ and $\gamma(A) = DC$, 
\item if $A \in N_1$ and $\rho(A)=b(x,B)$ then $\lambda(A) = b_1$ and $\gamma(A) = B$,
\item if $A \in N_1$ and $\rho(A)=b(B,x)$ then $\lambda(A) = b_2$ and $\gamma(A) = B$,
\item if $A \in N_1$ and $\rho(A)=D(C(x))$ then $\lambda(A) = \circ_1$ and $\gamma(A) = DC$.   
\end{itemize}
In particular, we can speak about the ec-representation of a TSLP or the height of a variable in a TSLP.
We define the size $|\mathcal{G}|$ of the TSLP $\mathcal{G}$ as the size of the corresponding DAG, which is bounded
by $3 |N|$. It is easy to see that the tree $\val(\mathcal{G})$ has at most $2^{\mathcal{O}(|\mathcal{G}|)}$ many nodes.

Note that for a TSLP $\mathcal{G}$, where $N_1 = \emptyset$ (hence, every $\rho(A)$ is either $a$ or $b(B,C)$ for $B,C \in N_0$),
the unfolding of the above DAG is $\val(\mathcal{G})$. In general, TSLPs can be more succinct than DAGs: take for instance a caterpillar tree
$t = b(b(\ldots b(a,a),a), \ldots, a)$ of size $n$. It can be represented by a TSLP of size $\mathcal{O}(\log n)$, whereas every DAG that unfolds into $t$ has size $\Omega(n)$.
The following result from  \cite{GanardiL19} will be important in the next section.

\begin{theorem}[\mbox{\cite[Theorem~5.6]{GanardiL19}}] \label{theorem tree->tslp}
From a binary tree $t$ of size $n$ given in term representation one can compute in $\uTC^0$ 
the ec-representation of a TSLP 
$\mathcal{G}$ of depth $\mathcal{O}(\log n)$ and size $\mathcal{O}(n)$ such that $\val(\mathcal{G}) = t$.
\end{theorem}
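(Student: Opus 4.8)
This is a tree-balancing theorem, so the plan has two parts: (a) exhibit, for an arbitrary binary tree $t$ of size $n$, a TSLP $\mathcal{G}$ with $\val(\mathcal{G})=t$ of depth $\mathcal{O}(\log n)$ and size $\mathcal{O}(n)$; and (b) show that $\mathcal{G}$, \emph{together with its ec-representation}, can be extracted from the term representation of $t$ by a $\uTC^0$ transducer.

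For (a) I would use a weight-balanced, centroid-style decomposition in the spirit of Brent's formula balancing. If $t$ has size $n>1$, walk from the root always into the child with the larger subtree; the subtree sizes along this path drop from $n$ to $1$ and shrink by at most a factor close to $\tfrac12$ per step, so there is a node $v$ with $n/3-\mathcal{O}(1)\le|t(v)|\le 2n/3$. Writing $C$ for the context obtained by deleting $t(v)$ we get $t=C[x/t(v)]$ with $|C|,|t(v)|\le 2n/3+\mathcal{O}(1)$. A context of size $n$ is handled analogously by walking from its root towards the hole (staying on the spine while the spine sub-context is the heavier side, and switching into an off-spine subtree otherwise): this splits it, using $\mathcal{O}(1)$ auxiliary variables, into contexts and subtrees each of size at most $2n/3+\mathcal{O}(1)$. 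Iterating, every piece reaches constant size after $\mathcal{O}(\log n)$ rounds, so the derivation DAG of the resulting TSLP has depth $\mathcal{O}(\log n)$. Bounding the \emph{number} of distinct pieces by $\mathcal{O}(n)$ — rather than $\mathcal{O}(n\log n)$ or worse — requires fixing the split node \emph{canonically} and proving a rigidity property: that each node of $t$ serves as the distinguished split node, resp.\ as an interface node, of only $\mathcal{O}(1)$ pieces. This is the combinatorial core of the construction.

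For (b), note that a depth-$\mathcal{O}(\log n)$ recursion does not by itself yield a constant-depth circuit; the decomposition must be described \emph{non-recursively}. From the term representation one computes in $\uTC^0$, for every position, the size of the subtree rooted there: this is a matching-bracket / counting task, and here one uses that $\Bin\in\uTC^0$ and that iterated addition is in $\uTC^0$. From the precomputed sizes, the canonical split node of each piece — and hence the whole set of ``cut nodes'' — becomes a fixed arithmetic predicate; consequently the set of pieces, a canonical naming of them (e.g.\ by the root node together with an $\mathcal{O}(\log n)$-bit ``level'' read off from the surrounding subtree sizes), and the rules $A\to a$, $A\to b(B,C)$, $A\to D(C)$, $A\to b(x,B)$, $A\to b(B,x)$, $A\to D(C(x))$ are all $\uTC^0$-computable local functions of the cut structure. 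The ec-relation — given two variable names and an address string $\alpha$ with $|\alpha|\le\log_2|\mathcal{G}|$, decide whether the second variable is reached from the first by following $\alpha$ in the derivation DAG — is again a fixed predicate: since $|\alpha|=\mathcal{O}(\log n)$, one decodes $\alpha$ step by step into moves in the DAG, each move being a table lookup on $\mathcal{O}(\log n)$-bit names, and composing $\mathcal{O}(\log n)$ such moves stays in $\uTC^0$ by closure under iterated arithmetic on numbers with $\mathcal{O}(\log n)$ bits.

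The main obstacle is reconciling \emph{all} of the requirements at once: logarithmic depth, linear size, \emph{constant}-depth $\DLOGTIME$-uniform computability, and producing the ec-representation rather than just the plain DAG of $\mathcal{G}$. Each in isolation is routine; together they force the entire decomposition (``which piece contains position $p$'', ``which piece is the $i$-th ancestor piece of a given piece'') to be expressed as arithmetic on the precomputed subtree sizes. Concretely, verifying depth $\mathcal{O}(\log n)$ needs the invariant that passing to a child piece shrinks the relevant size parameter by a constant factor; the size bound $\mathcal{O}(n)$ rests on the rigidity property above, which I expect to be the hardest single point; and the $\uTC^0$ bound for the ec-relation needs that an address string of logarithmic length can be ``fast-forwarded'' through the derivation DAG in constant depth — which is precisely why one works with the ec-representation and with iterated arithmetic on short numbers.
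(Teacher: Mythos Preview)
The paper does not contain a proof of this theorem: it is quoted from \cite[Theorem~5.6]{GanardiL19} and used as a black box in the proof of Theorem~\ref{thm-main-NC1}. There is therefore no ``paper's own proof'' to compare your proposal against.

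As an outline of the argument in \cite{GanardiL19}, parts (a) and the first half of (b) are on the right track, and you correctly flag the linear size bound as the combinatorial core. However, the last step of (b) --- computing the ec-relation in $\uTC^0$ --- has a genuine gap. You write that one ``decodes $\alpha$ step by step into moves in the DAG, each move being a table lookup on $\mathcal{O}(\log n)$-bit names, and composing $\mathcal{O}(\log n)$ such moves stays in $\uTC^0$ by closure under iterated arithmetic''. But $\uTC^0$ is \emph{not} closed under $\mathcal{O}(\log n)$-fold sequential composition of arbitrary table lookups; that would already place arbitrary $\mathcal{O}(\log n)$-depth bounded-fan-in circuits in $\uTC^0$, i.e.\ $\uNC^1\subseteq\uTC^0$. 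Iterated addition and multiplication are in $\uTC^0$ for associative, structured reasons that do not transfer to iterated function application. What actually makes the ec-relation computable in $\uTC^0$ in \cite{GanardiL19} is that the canonical naming of TSLP variables encodes enough positional information about the underlying interval of $t$ that the endpoint of a length-$\mathcal{O}(\log n)$ path $\alpha$ in the derivation DAG can be read off \emph{directly} from $\alpha$ and the precomputed subtree-size data, without iterating opaque single steps. Your sketch needs to make this direct computability explicit; as written it only establishes a $\uNC^1$ upper bound for producing the ec-representation.
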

The size bound $\mathcal{O}(n)$ for the TSLP $\mathcal{G}$ in Theorem~\ref{theorem tree->tslp} can be even replaced by $\mathcal{O}(n/\log n)$ \cite[Theorem~5.6]{GanardiL19}, but this is not important for our purpose.

\section{Complexity of computing the Strahler number} \label{sec-compute-st}

In this section we consider the problem of checking whether the Strahler number of a given binary tree is at least a given threshold.
The problem $\StN$ is defined as follows:
\begin{itemize}
\item Input: a binary tree $t$ and a number $k$.
\item Question: Is $\hs(t) \geq k$?
\end{itemize}
If we fix the value $k \geq 1$, then we obtain the following problem $\StNk$:
\begin{itemize}
\item Input: a binary tree $t$.
\item Question: Is $\hs(t) \geq k$?
\end{itemize}
These problem descriptions are actually incomplete, since we did not fix the input encoding of $t$, which influences
the complexity of the problems.
We obtain the following variations: In $\StNterm$ (resp., $\StNpointer$) the tree $t$ is given 
by its term (resp., pointer) representation. In $\StNdag$ (resp., $\StNtslp$) the tree $t$
is given succinctly by a binary DAG $\mathcal{D}$ (resp., a TSLP $\mathcal{G}$) such that $t = \unfold(\mathcal{D})$ (resp., $t = \val(\mathcal{G})$).
The problems $\StNkterm$, $\StNkpointer$, $\StNkdag$, and $\StNktslp$ are defined analogously.
Our main result is:

\begin{theorem} \label{thm-main-NC1}
$\StNterm$ is $\uNC^1$-complete.
\end{theorem}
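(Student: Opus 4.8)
The plan is a balance‑then‑evaluate strategy. First I would dispose of a trivial case: a binary tree with $m$ leaves has Strahler number at most $\log_2 m$, and the number of leaves of $t$ is $(|t|+1)/2$, so one can check in $\uTC^0$ whether $k>\log_2 m$ and output ``no'' in that case; afterwards $k\le\log_2 n$ and in particular every Strahler number occurring below is bounded by $\log_2 n$. Next, using Theorem~\ref{theorem tree->tslp} I would compute in $\uTC^0$ the ec‑representation of a TSLP $\mathcal{G}$ of depth $\mathcal{O}(\log n)$ and size $\mathcal{O}(n)$ with $\val(\mathcal{G})=t$. It then remains to turn $\mathcal{G}$ into a $\DLOGTIME$‑uniform bounded fan‑in circuit of polynomial size and logarithmic depth deciding $\hs(\val(\mathcal{G}))\ge k$.

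\textbf{Interpreting the TSLP in the Strahler algebra.} I would read $\mathcal{G}$ over the Strahler algebra: every tree variable $A$ denotes the number $\hs(\val(A))\in\{0,\dots,\lfloor\log_2 n\rfloor\}$, and every context variable $A$ denotes the unary term function of the Strahler algebra computed by the context $\val(A)$. The crucial ingredient, which is the real content of Section~\ref{sec-compute-st}, is a description of these unary term functions: each of them is either the identity or, restricted to $\{0,\dots,\lfloor\log_2 n\rfloor\}$, a ``staircase'' that is the identity above some threshold $b$ and a step function with at most two constant steps below $b$; hence each is determined by $\mathcal{O}(1)$ numbers bounded by $\log_2 n$, and this class is closed under composition. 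Keeping numbers and thresholds in a redundant (thermometer) encoding with $\mathcal{O}(\log n)$ bits, the rules $\rho(A)=a$ and $\rho(A)=b(B,C)$ become bounded fan‑in gadgets of constant depth, via the identity $[s(x,y)\ge j]=[x\ge j]\vee[y\ge j]\vee([x\ge j{-}1]\wedge[y\ge j{-}1])$.

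\textbf{The hard rules and the obstacle.} The delicate rules are $\rho(A)=D(C)$ and $\rho(A)=D'(D''(x))$: the naive gadget for applying, respectively composing, context functions uses unbounded fan‑in OR‑gates (``guess the intermediate value''). Here one exploits the staircase characterization to implement application and composition of context functions by bounded fan‑in subcircuits, arranged so that their depths add up to $\mathcal{O}(\log n)$ over the $\mathcal{O}(\log n)$ levels of $\mathcal{G}$; on the thermometer encoding the operations needed on the $\mathcal{O}(1)$ thresholds are essentially $\min$, $\max$ and increment. $\DLOGTIME$‑uniformity is then routine since $\mathcal{G}$ is given by its ec‑representation and the gadgets are local. \emph{I expect this structural analysis of the Strahler ``functional algebra'', together with the bounded fan‑in realization of composition that preserves logarithmic depth, to be the main obstacle: it is exactly the point where the infinite size of the Strahler algebra, and hence the potentially unbounded fan‑in, has to be tamed.}

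\textbf{Hardness.} For $\uNC^1$‑hardness I would reduce from the Boolean formula value problem (Theorem~\ref{thm buss}), which may be taken to be restricted to monotone formulas $F$ over $\{\wedge,\vee,0,1\}$ of logarithmic depth (negations can be pushed to the leaves by a $\uTC^0$ rewrite, and for the $\uNC^1$‑complete instances the depth may be assumed logarithmic since $\uNC^1=\ALOGTIME$). A Boolean value ``at level $\ell$'' is encoded by a binary tree of Strahler number $\ell$ (false) or $\ell+1$ (true); writing $c_j$ for the complete binary tree of depth $j$, which has Strahler number exactly $j$, the gadget $T\mapsto b(T,c_{\ell+1})$ lifts a level‑$\ell$ encoding to a level‑$(\ell+1)$ encoding keeping the bit. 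Because $s(x,y)=x+1$ exactly when $x=y$, the gadget $b(T_1,T_2)$ computes $\wedge$ (from two level‑$\ell$ encodings to a level‑$(\ell+1)$ encoding), and $b(T_2,b(T_1,b(T_1,T_2)))$ computes $\vee$ likewise, using two copies of each argument. Processing $F$ bottom‑up — assigning each subformula a level equal to its height plus one, lifting the two children of a gate to the common level one below the gate's level, and applying the matching gadget — yields a term $t_F$ and a number $k_F$ (the root's level plus one) with $\hs(t_F)\ge k_F$ iff $F$ evaluates to $1$. Since $F$ has logarithmic depth, the doubling at $\vee$‑gates blows up $|t_F|$ only polynomially, the complete binary trees used for lifting are of polynomial size, and the whole construction is $\uTC^0$‑computable, so this is a legal reduction. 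The points to verify here are the correctness of the $\vee$‑gadget (a short case analysis on the Strahler operation) and the polynomial size bound on $t_F$.
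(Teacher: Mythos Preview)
Your high-level plan matches the paper's: balance via Theorem~\ref{theorem tree->tslp}, characterize the unary term functions of contexts, build a log-depth bounded-fan-in circuit, and for hardness reduce from Boolean formula evaluation with Strahler-algebra gadgets for $\wedge$ and $\vee$. Your description of the context functions as two-step staircases is exactly the paper's Lemma~\ref{lemma-l_B-h_B}: every context computes some $[\ell,h]$.

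There is, however, a real gap in the membership argument. You propose to carry the parameters $\ell,h$ (and the tree values $\hs_A$) in thermometer encoding and claim that application and composition need only $\min$, $\max$ and increment. For $h$ this is true ($h' = \max(h_B,h_C)$ in the composition $[\ell_B,h_B]\circ[\ell_C,h_C]$), but for $\ell$ it is not: by Lemma~\ref{lemma-composition} the new $\ell'$ is one of $\ell_B$, $\ell_C$, or $0$, and \emph{which one} depends on the comparisons $h_C+2\le \ell_B$, $h_C+1=\ell_B$, $\ell_B\le h_C\le h_B$, $h_B<h_C$. A comparison between two thermometer-encoded values of length $\Theta(\log n)$ needs $\Omega(\log\log n)$ depth with bounded fan-in, so you get total depth $\Theta(\log n\cdot\log\log n)$, not $\mathcal{O}(\log n)$. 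Your sentence ``arranged so that their depths add up to $\mathcal{O}(\log n)$'' is precisely the place where a concrete mechanism is missing.

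The paper's fix is to change what the gates compute: instead of computing the \emph{values} $\hs_A,\ell_A,h_A$ and then comparing them, one takes as gates the \emph{comparison predicates} themselves, i.e.\ all Boolean statements $[X\le Y+i]$ where $X,Y$ range over the symbols $A_{\hs},A_\ell,A_h$ and $|i|\le\log_2 n$. For each TSLP rule and each such gate, the equivalence derived from the case analysis of Lemma~\ref{lemma-composition} (and the analogous rules for $b(B,C)$ and $D(C)$) expresses the gate as a \emph{constant-size} Boolean combination of gates of the same kind at the children --- no unbounded fan-in appears, because the case conditions are themselves gates of this form. This is the missing idea that tames the unbounded fan-in you correctly flag as the main obstacle.

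Your hardness argument is fine and in fact slightly leaner than the paper's: your $\vee$-gadget $b(T_2,b(T_1,b(T_1,T_2)))$ is correct (the four cases check out) and uses two copies of each argument rather than three. Two remarks: the paper cites \cite{GanardiL19} to balance the input formula to logarithmic depth in $\uTC^0$ (your appeal to ``$\uNC^1$-complete instances'' is informal --- what is needed is a $\uTC^0$ reduction to log-depth formulas); and Buss's theorem already yields $\uNC^1$-hardness for monotone $\{\wedge,\vee,0,1\}$-formulas, so pushing negations to the leaves is unnecessary.
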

As a gentle introduction to the problem, we first present a weaker result, namely that one can
calculate the Strahler number of a tree with $n$ leaves in $\mathcal{O}(\log n \log \log n)$ space,
and then show how to reduce the space complexity to $\mathcal{O}(\log n)$.
We only sketch the proof, since these space bounds are subsumed by the $\uNC^1$ upper bound,  proven later in this section.

The idea is to compute the Strahler number recursively by traversing the tree, in a depth-first order.
We perform a depth-first traversal through the tree, maintaining a single pointer to the current node using $\mathcal{O}(\log n)$ space,
and a constant-sized information, indicating the direction of the next traversal step.
Additionally, we store a list of the Strahler numbers $s_1, \dots, s_k$ of the maximal subtrees by inclusion $t_1, \dots, t_k$
that have been completely traversed in that order (i.e., $s_i$ has been traversed before $s_j$ for $i < j$).
Whenever both subtrees of a node have been traversed, we can combine their Strahler numbers to obtain the Strahler number
of the parent node.
Since each Strahler number $s_i$ is bounded by $\log n$ where $n$ is the number of leaves in the input tree, it can be stored
in  $\mathcal{O}(\log \log n)$ bits. However, if the tree is traversed in an arbitrary order, the number $k$ of subtrees could be up to linear in $n$.
The solution is to visit \emph{heavy} subtrees first, i.e.\ if a node is visited for the first time, the next step moves to the larger subtree of the current node (if both subtrees have the same size, one moves to the left subtree).
Note that the size of a subtree can be computed in logspace.
This ensures that $|t_i| \ge |t_{i+1}| + \dots + |t_k|$ and therefore $|t_i| \ge 2|t_{i+2}|$ for $i \leq k-2$. In particular, $k$ is bounded by $\mathcal{O}(\log n)$
and the total space complexity is $\mathcal{O}(k \log \log n)= \mathcal{O}(\log n \log \log n)$.

To shave off the $\log \log n$ factor, we store the sequence of Strahler numbers $s_1, \dots, s_k$ using a delta encoding.
Let us say that a number $s_i$ is \emph{dominated} if there exists $j>i$ such that $s_j > s_i$.
Such a dominated number $s_i$ can be replaced by $0$, which does not change the Strahler number of the tree.
The subsequence of undominated numbers $s_{i_1}, s_{i_2}, \dots, s_{i_\ell}$ is monotonically decreasing
and can be represented using a delta encoding $s_{i_1}-s_{i_2},\ldots, s_{i_{\ell-1}}-s_{i_\ell}, s_{i_\ell}$. Each number
$s$ in this sequence will be represented in unary encoding by $1^s \#$. As an example, the sequence $s_1, \ldots, s_k \ = \ 3,2,5,3,4,4,2,1$ is encoded
by  $001\#0\#11\#1\#1\#$.
The resulting word over the alphabet $\{0,1, \# \}$ has length $k + s_{i_1} \le \mathcal{O}(\log n)$.

Now we turn to proving the $\uNC^1$ bound from Theorem~\ref{thm-main-NC1}. To this end, we will first compute from the input tree $t$ a TSLP using
Theorem~\ref{theorem tree->tslp}. To make use of the TSLP-representation of $t$, we need a simple description of 
unary linear term functions in the Strahler algebra. For this, we start with some preparations.
 
Consider a TSLP $\mathcal{G} = (N_0, N_1, S, \rho)$ as defined in Section~\ref{sec-tslps}. For a tree variable $A \in N_0$ we write 
$\hs_A$ for the Strahler number $\hs(\val(A))$.
For a context variable $B \in N_1$ we define a function $\hs_B : \mathbb{N} \to \mathbb{N}$  as
follows: Consider the binary context $t = \val(B)$. Take an integer $n \in \mathbb{N}$ and take a binary
tree $t'$ with $\hs(t') = n$. The concrete choice of $t'$ is not important. Then we define $\hs_B(n) = \hs(t[x/t'])$.
Intuitively speaking, the evaluation of the binary context $t$ in the Strahler algebra yields the unary linear term function $\hs_B$.
If we substitute the placeholder $x$ by a number $n$ then we can evaluate the resulting expression in the Strahler
algebra and the result is $\hs_B(n)$. The functions $\hs_B$ can be described by two integers in the following way:
For $\ell,h \in \mathbb{N}$ with $0 \le \ell \leq h$ we define the function $[\ell,h] :  \mathbb{N} \to \mathbb{N}$ as follows:
\begin{equation} \label{eq-l-h}
[\ell,h](x) = \begin{cases}
h & \text{ if } x < \ell \\
h+1 & \text{ if }  \ell \le x \le h \\
x & \text{ if }  x >  h
\end{cases}
\end{equation}

\begin{lemma} \label{lemma-composition}
The set of functions $[\ell,h]$ is closed under composition. More precisely, for all $\ell \leq h$, $m \le i$ and $x \in \mathbb{N}$ we have the following:
\begin{equation} \label{eq-comp}
[m,i]([\ell,h](x)) = \begin{cases}
[m,i](x) & \text{ if } h+2 \leq m \\
[\ell,i](x) & \text{ if } h+1 = m  \\ 
[0,i](x) & \text{ if } m \le h \leq i \\
[\ell,h](x) & \text{ if } i < h 
\end{cases}
\end{equation}
\end{lemma}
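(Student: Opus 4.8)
The plan is to prove Lemma~\ref{lemma-composition} by a direct case analysis, first establishing the claimed range restrictions ($0 \le \ell \le h$ and $0 \le m \le i$ are preserved, and in each branch of~\eqref{eq-comp} the resulting function is again of the form $[\ell',h']$ with $\ell' \le h'$), and then verifying the identity by chasing the value of $x$ through both sides. The key observation is that the inner function $[\ell,h]$ always produces an output in the set $\{h, h+1\} \cup \{x : x > h\}$, i.e.\ its image is $\{h\} \cup \{n : n \ge h+1, n \text{ is } h+1 \text{ or } n=x \text{ for some } x>h\}$; more usefully, $[\ell,h](x) \in \{h,h+1\}$ whenever $x \le h$, and $[\ell,h](x) = x > h$ whenever $x > h$. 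So the composed function $[m,i] \circ [\ell,h]$ only ever evaluates $[m,i]$ at arguments in $\{h,h+1\}$ or at arguments strictly greater than $h$.

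First I would split on the relative position of $m$ and $i$ against $h$ and $h+1$, matching the four cases in the statement. In the case $h+2 \le m$: both $h$ and $h+1$ are $< m$, so $[m,i]$ maps each of them to $i$; and for $x > h$ we have $[\ell,h](x)=x$, so the composition is $x \mapsto [m,i](x)$ on that range, and on $x \le h$ it is constantly $i = [m,i](x)$ since $x \le h < m$. Hence the composition equals $[m,i]$ everywhere. In the case $h+1 = m$: for $x > h$ the composition is $[m,i](x) = [h+1,i](x)$, which for $x=h+1$ gives $i+1$ and for $x>h+1$ gives $x$; for $x \le h$ we must look at whether $[\ell,h](x)$ is $h$ or $h+1$, i.e.\ whether $x < \ell$ or $\ell \le x \le h$: in the first case $[m,i](h) = i$ (as $h < m$), in the second $[m,i](h+1) = [m,i](m) = i+1$. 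Comparing with $[\ell,i]$: for $x < \ell$ it gives $i$, for $\ell \le x \le i$ it gives $i+1$, for $x > i$ it gives $x$; since $i \ge h$ one checks these agree (the boundary $x = h+1$: if $h+1 \le i$ then $[\ell,i](h+1) = i+1$, matching; if $h+1 > i$, but we are in the subcase $i < h$ which contradicts $h+1=m \le i$, so this cannot happen here). The remaining two cases $m \le h \le i$ and $i < h$ are handled analogously, using that $[m,i]$ collapses $\{h,h+1\}$ into $\{i, i+1, \dots\}$ or acts as the identity-past-$i$ region as appropriate; in the case $i < h$ the inner threshold $h$ already dominates, and $[m,i]$ acts as the identity on every value $\ge h+1 > i$, hence on the entire image of $[\ell,h]$, giving the composition $[\ell,h]$.

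I do not expect any real obstacle here: the whole lemma is a finite bookkeeping argument, and the main care needed is in the boundary values $x = \ell-1, x = h, x = h+1, x = i$ and in confirming that the four stated cases are exhaustive and mutually exclusive (they partition according to $m$ versus $h+1, h+2$ and, when $m \le h+1$, according to $i$ versus $h$). The one subtlety worth flagging is that the four cases as written overlap in their hypotheses only at $m = h+1$ versus the rest, and one should note that "$h+1 = m$" together with "$i < h$" is impossible (it would force $h+1 = m \le i < h$), so the case distinction is genuinely well-posed; similarly "$m \le h \le i$" and "$i < h$" are exclusive. I would present the verification compactly in a small table of the values of $x \mapsto [m,i]([\ell,h](x))$ on the relevant intervals $(-\infty,\ell-1], \{x: \ell \le x \le h\}, \{h+1\}, \{x : h+1 < x \le i\}, (i,\infty)$ rather than writing out every algebraic substitution.
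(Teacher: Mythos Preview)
Your proposal is correct and follows essentially the same approach as the paper: both proofs first observe that $[\ell,h](x)$ equals $h$, $h+1$, or $x$ according to whether $x<\ell$, $\ell\le x\le h$, or $x>h$ (the paper records this as an intermediate equation), and then carry out the same four-way case split on the position of $m$ and $i$ relative to $h$. One small slip to fix when you write it out: in the case $i<h$ you say $[m,i]$ is the identity on values $\ge h+1$, but the image of $[\ell,h]$ also contains $h$ itself; this is harmless since $h>i$ still gives $[m,i](h)=h$, but phrase it as ``identity on all values $>i$'' to be accurate.
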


\begin{proof}
For all $x \in \mathbb{N}$ we have:
\begin{equation} \label{eq-comp2}
[m,i]([\ell,h](x)) = \begin{cases}
[m,i](h) & \text{ if } x < \ell \\
[m,i](h+1) & \text{ if } \ell \le x \le h  \\ 
[m,i](x) & \text{ if } x > h  
\end{cases}
\end{equation}
We now distinguish the four cases from \eqref{eq-comp}:

\medskip
\noindent
\emph{Case 1.} $h+2 \leq m$: We have to show that $[m,i]([\ell,h](x)) = [m,i](x)$ for all $x$.
Since $h+1 < m$ we obtain from \eqref{eq-comp2}:
\begin{equation*} 
[m,i]([\ell,h](x)) = \begin{cases}
i = [m,i](x) & \text{ if } x \le h \\
[m,i](x) & \text{ if } x > h  
\end{cases}
\end{equation*}
\emph{Case 2.} $h+1 = m$: We have to show $[m,i]([\ell,h](x)) = [\ell,i](x)$ for all $x$.
With  \eqref{eq-comp2} we get
\begin{equation*} 
\left.
[m,i]([\ell,h](x)) \ = \ \begin{cases}
[h+1,i](h) = i & \text{ if } x < \ell \\
[h+1,i](h+1) = i+1& \text{ if } \ell \le x \le h  \\ 
[h+1,i](x) = i+1 & \text{ if } h < x \leq i \\
[h+1,i](x) = x & \text{ if } x > i
\end{cases}
 \right\} = [\ell,i](x).
\end{equation*}
\emph{Case 3.} $m \le h \leq i$. We have to show $[m,i]([\ell,h](x)) = [0,i](x)$ for all $x$.
Equation  \eqref{eq-comp2} yields
\begin{equation*} 
[m,i]([\ell,h](x)) = \begin{cases}
i+1 = [0,i](x)& \text{ if } x \le i, \\
x = [0,i](x) & \text{ if } x > i.
\end{cases}
\end{equation*}
\emph{Case 4.} $i < h$:  We have to show that $[m,i]([\ell,h](x)) = [\ell,h](x)$ for all $x$.
Equation \eqref{eq-comp2} simplifies for $i < h$ to
\begin{equation*} 
\left.
[m,i]([\ell,h](x)) = \begin{cases}
h & \text{ if } x < \ell \\
h+1 & \text{ if } \ell \le x \le h \\ 
x & \text{ if } x > h
\end{cases}
 \right\} = [\ell,h](x) .
\end{equation*}
This concludes the proof of the lemma.
\end{proof}

\begin{lemma} \label{lemma-l_B-h_B}
Let $\mathcal{G} = (N_0, N_1, S, \rho)$ be a TSLP and let $A \in N_1$. Then there exist numbers $\ell_A$ and $h_A$ such that 
$\hs_A = [\ell_A, h_A]$.
\end{lemma}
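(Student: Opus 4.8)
The proof will go by induction along the acyclic relation $\{(A,B) \in N \times N : B \text{ occurs in } \rho(A)\}$ of the TSLP, which amounts to an induction on the height of the context variable $A \in N_1$. I will treat the three possible shapes of $\rho(A)$ separately, using Lemma~\ref{lemma-composition} to deal with the single recursive shape.

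First I would dispose of the two base shapes $\rho(A) = b(x,B)$ and $\rho(A) = b(B,x)$ with $B \in N_0$. In both cases $\val(A)$ is obtained from a single $b$-node whose other child is the fixed tree $\val(B)$ of Strahler number $m := \hs_B$, so substituting any tree $t'$ with $\hs(t') = n$ for the placeholder yields $\hs_A(n) = s(n,m)$ (the two shapes give the same value since $s$ is symmetric, and $\hs_A(n)$ is independent of the choice of $t'$ as remarked before the lemma). Comparing the definition \eqref{eq-strahler-algebra} of $s$ with the definition \eqref{eq-l-h} of $[\ell,h]$ one sees at once that $s(n,m) = [m,m](n)$: the value is $m$ for $n < m$, it is $m+1$ for $n = m$, and it is $n$ for $n > m$. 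Hence $\hs_A = [\ell_A,h_A]$ with $\ell_A = h_A = \hs_B$, and $0 \le \ell_A \le h_A$ holds.

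For the recursive shape $\rho(A) = D(C(x))$ with $C,D \in N_1$, I would first note that $\val(A) = \val(D)[x/\val(C)]$ entails $\val(A)[x/t'] = \val(D)[x/(\val(C)[x/t'])]$, so that for $\hs(t') = n$ we get $\hs(\val(C)[x/t']) = \hs_C(n)$ and therefore $\hs_A(n) = \hs_D(\hs_C(n))$; that is, $\hs_A = \hs_D \circ \hs_C$. By the induction hypothesis $\hs_C = [\ell_C,h_C]$ and $\hs_D = [\ell_D,h_D]$ with $0 \le \ell_C \le h_C$ and $0 \le \ell_D \le h_D$, so Lemma~\ref{lemma-composition} applies with $[\ell,h] = [\ell_C,h_C]$ and $[m,i] = [\ell_D,h_D]$ and shows that, according to how $h_C$ compares with $\ell_D$ and $h_D$, the composition $\hs_A$ equals one of $[\ell_D,h_D]$, $[\ell_C,h_D]$, $[0,h_D]$, or $[\ell_C,h_C]$. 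In each case the exhibited pair $(\ell_A,h_A)$ again satisfies $0 \le \ell_A \le h_A$: this is clear for $[\ell_D,h_D]$, $[0,h_D]$ and $[\ell_C,h_C]$, while for $[\ell_C,h_D]$ — the case $h_C + 1 = \ell_D$ — we have $\ell_C \le h_C < h_C + 1 = \ell_D \le h_D$. Setting $(\ell_A,h_A)$ to that pair finishes the induction.

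I do not expect a genuine obstacle: the combinatorial heart of the argument has already been isolated in Lemma~\ref{lemma-composition}. The points that need a little care are the identity $s(n,m) = [m,m](n)$ in the base case, the careful unwinding of the TSLP substitution semantics to recognise that stacking contexts corresponds to composing their associated functions $\hs_C$ and $\hs_D$, and the verification that each of the four outputs in Lemma~\ref{lemma-composition} still meets the admissibility condition $0 \le \ell \le h$, so that it is indeed of the declared form $[\ell,h]$.
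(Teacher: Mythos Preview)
Your proof is correct and follows essentially the same approach as the paper: both arguments reduce to the identity $s(x,m) = [m,m](x)$ for the atomic contexts $b(x,B)$ and $b(B,x)$, and then invoke Lemma~\ref{lemma-composition} for closure under composition. Your version is slightly more explicit, phrasing things as a structural induction on $\rho(A)$ and verifying the admissibility condition $0 \le \ell_A \le h_A$ in each case, whereas the paper simply observes that every context decomposes into atomic pieces and that the class of functions $[\ell,h]$ is closed under composition.
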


\begin{proof}
Every context $\val(A)$ for $A \in N_1$ can be obtained from composing contexts of the form $b(x,t)$ and $b(t,x)$
where $t = \val(B)$ for some $B \in N_0$. By Lemma~\ref{lemma-composition} it therefore suffices to show that for every $m \in \mathbb{N}$ the
mapping $x \mapsto s(x,m)$ (where $s$ is from \eqref{eq-strahler-algebra})  is of the form $[\ell,h]$ (then, since $s$ is commutative,
the same holds for the mapping  $x \mapsto s(m,x)$). It is straightforward to check that
$s(x,m) = [m,m](x)$ 
for all $x \in \mathbb{N}$.
\end{proof}
We are now in the position to prove Theorem~\ref{thm-main-NC1}.

\begin{proof}[Proof of Theorem~\ref{thm-main-NC1}]
We start with the $\uNC^1$ upper bound.
Let $t \in \Bin$ be a binary tree given in term representation. Let $n$ be the number of leaves of $t$. Hence, we must have $\hs(t) \leq \log_2 n$.
Our goal is to compute in $\uTC^0$ from $t$ and an integer $0 \le k \le \log_2 n$ a Boolean circuit $\mathcal{B}_{t,k}$ of depth $\mathcal{O}(\log n)$ such that
$\mathcal{B}_{t,k}$ evaluates to true if and only if $\hs(t) \ge k$. 
The Boolean circuit $\mathcal{B}_{t,k}$ is represented in ec-representation, which ensures that it can be unfolded
in $\uTC^0$ into an equivalent Boolean formula (Lemma~\ref{lemma-unfold}) and then evaluated in $\uNC^1$ by Theorem~\ref{thm buss}.

In a first step, we use Theorem~\ref{theorem tree->tslp} to
compute in $\uTC^0$ the ec-representation of a TSLP 
$\mathcal{G} = (N_0, N_1, S, \rho)$ of depth $\mathcal{O}(\log n)$ and size $\mathcal{O}(n)$ such that $\val(\mathcal{G}) = t$.
Let $N = N_0 \cup N_1$. Since the ec-representation of $\mathcal{G}$ is available and 
the depth of $\mathcal{G}$ is bounded by $\mathcal{O}(\log n)$, one can ensure in $\uTC^0$ that
all variables in $N$ can be reached from the start variable $S$ (this property is actually satisfied when $\mathcal{G}$ is constructed
according to \cite{GanardiL19}). In particular, every variable $A \in N_0$ produces a subtree of $t$ and every variable $A \in N_1$
produces a subcontext of $t$.
Hence, every number $\hs_A$ is bounded by $\log_2 n$ and we will see in a moment that the same holds for the numbers $\ell_A$ and $h_A$
from Lemma~\ref{lemma-l_B-h_B}.

In the following we consider the following set of formal integer variables:
$$
\Delta(\mathcal{G}) = \{ A_\hs : A \in N_0 \} \cup \{ A_\ell , A_h : A \in N_1 \} .
$$
For $X \in \Delta(\mathcal{G})$ we define the integer $v(X)$ by $v(A_\hs) = \hs_A$, 
$v(A_\ell) = \ell_A$ and $v(A_h) = h_A$.

We will define a Boolean circuit $\mathcal{B}_{t,k}$ that contains for all $i \in \mathbb{Z}$ with 
$|i| \leq \log_2 n$ and 
all $X, Y \in \Delta(\mathcal{G})$ a gate $[X \leq Y + i]$ with the obvious meaning: the gate evaluates to true if and only if
$v(X) \leq v(Y) + i$. 
Let $V$ be the set of all such gates $[X \leq Y + i]$.
It is convenient to allow also gates $[X \leq i]$ and $[X \geq i]$. Formally, they can be replaced
by $[X \leq A_\hs + i]$ and $[A_\hs \leq X - i]$, where $A \in N_0$ is a variable with $\rho(A) = a$ (so that $\hs_A = 0$).
The output gate of $\mathcal{B}_{t,k}$ is $[S_\hs \ge k]$.

The number $i$ is called the \emph{offset} of the gate $g = [X \leq Y+i]$ 
and we define $N(g) = \{A,B\}$ if  $X \in \{ A_\hs, A_\ell, A_h\}$ and $Y \in \{ B_\hs, B_\ell, B_h\}$. For gates $g_1, g_2 \in V$, we write
$g_1 \succ g_2$ if every variable $B \in N(g_2)$ appears in $\rho(A)$ for some $A \in N(g_1)$. Then the length of every chain 
$g_1 \succ g_2 \succ g_3 \succ \cdots \succ  g_m$ is bounded by the depth of $\mathcal{G}$, which is $\mathcal{O}(\log n)$.

To define the wires of $\mathcal{B}_{t,k}$, 
first note that the numbers $v(X)$ ($X \in \Delta(\mathcal{G})$)
are computed according to the following rules:
\begin{enumerate}[(i)]
\item \label{evaluate-1} if $A \in N_0$ and $\rho(A)=a$ then $\hs_A = 0$,
\item \label{evaluate-2} if $A \in N_0$ and $\rho(A) = b(B,C)$ then 
\[
\hs_A = \begin{cases}
\hs_B & \text{ if } \hs_B > \hs_C, \\
\hs_C &\text{ if } \hs_B < \hs_C, \\
\hs_B+1 & \text{ if } \hs_B = \hs_C,
\end{cases}
\]
\item \label{evaluate-3} if $A \in N_0$ and $\rho(A) = B(C)$ then, since $\hs_B = [\ell_B, h_B]$,
\[
\hs_A = \begin{cases}
h_B & \text{ if } \hs_C < \ell_B, \\
h_B+1 &\text{ if } \ell_B \le \hs_C \le h_B, \\
\hs_C & \text{ if } \hs_C > h_B,
\end{cases}
\]
\item \label{evaluate-4} if $A \in N_1$ and $\rho(A) = b(x,B)$ or $\rho(A) = b(B,x)$ then
$\ell_A = h_A = \hs_B$, and
\item \label{evaluate-5} if $A \in N_1$ and $\rho(A) = B(C(x))$ then by Lemma~\ref{lemma-composition} we have:
\begin{equation} \label{case-B(C(x))}
\ell_A = \begin{cases}
\ell_B & \text{ if } h_C + 2 \le \ell_B, \\
\ell_C & \text{ if }  h_C + 1 = \ell_B, \\
0 & \text{ if $\ell_B \le h_C \le h_B$,} \\
\ell_C & \text{ if } h_B < h_C,
\end{cases} \qquad
h_A = \begin{cases}
h_B & \text{ if } h_C + 2 \le \ell_B, \\
h_B & \text{ if }  h_C + 1 = \ell_B, \\
h_B & \text{ if } \ell_B \le h_C \le h_B, \\
h_C & \text{ if } h_B < h_C.
\end{cases}
\end{equation}
\end{enumerate}
Points \eqref{evaluate-4} and \eqref{evaluate-5} imply that all numbers $\ell_A$ and $h_A$ for $A \in N_1$ are equal to some $\hs_B$ ($B \in N_0$) and therefore bounded by $\log_2 n$.

From the equalities in \eqref{evaluate-1}--\eqref{evaluate-5}, it is now straightforward to construct for every gate $g \in V$ 
 a Boolean circuit $\mathcal{B}_g$ of constant size with output gate $g$. All input gates $g'$ of 
  $\mathcal{B}_g$ satisfy $g \succ g'$.
Let us consider for instance the gate $[A_h \leq D_\ell + i]$ for $A, D \in N_1$ and assume that $\rho(A) = B(C(x))$ and 
$\rho(D) = b(x,E)$. Then, $\ell_D = \hs_E$ and the equation for $h_A$ in \eqref{case-B(C(x))} implies
\[
h_A \leq \ell_D + i \  \Longleftrightarrow  \ (h_B \leq \hs_E + i \wedge h_C  \le h_B ) \vee  (h_C \leq \hs_E + i \wedge h_B < h_C).
\]
This equivalence directly yields the Boolean circuit for $[A_h \leq D_\ell + i]$. Its input gates are: 
$[B_h \leq E_\hs + i]$, $[C_h  \le B_h]$,  $[C_h \leq E_\hs + i]$, and $[B_h \le C_h-1]$. 

When constructing a Boolean circuit $\mathcal{B}_g$ one may obtain gates, where the absolute value of the offset $i$ is larger than $\log_2 n$.
Such gates can be replaced by \textsf{true} or \textsf{false}: $[X \leq Y + i]$ with $i > \log_2 n$ 
can be replaced by \textsf{true} (since $v(X) \leq \log_2 n $ and $v(Y) \geq 0$)
and $[X \leq Y - i]$ with $i > \log_2 n$  can be replaced by \textsf{false}.

The circuit $\mathcal{B}_{t,k}$ results from the union of the above constant-size circuits. 
Since the depth of $\mathcal{G}$ is bounded by $\mathcal{O}(\log n)$, it follows that the depth
of $\mathcal{B}$ is also bounded by $\mathcal{O}(\log n)$. Moreover, the ec-representation of $\mathcal{B}$ can 
be easily computed in $\uTC^0$ from the ec-representation of the TSLP $\mathcal{G}$. 
This shows the upper bound from Theorem~\ref{thm-main-NC1}.

For the lower bound we give a reduction from the $\uNC^1$-complete 
Boolean formula value problem; see Theorem~\ref{thm buss}. Binary conjunction $\wedge$ is simulated by the operation 
\begin{equation} \label{eq-conjunction}
f_\wedge(x,y) = s(x+1,y+1) = s(s(x,x),s(y,y)),
\end{equation}
where $s$ from \eqref{eq-strahler-algebra}, and binary disjunction $\vee$ is simulated by the operation
\begin{equation} \label{eq-disjunction}
f_\vee(x,y) = s(s(x+1,y),s(x,y+1)) = s(s(s(x,x),y),s(x,s(y,y))).
\end{equation}
We obtain for every $a \geq 0$:
\begin{align}
 & f_\wedge(a,a) = f_\wedge(a,a+1) = f_\wedge(a+1,a) = a+2 \text{ and } f_\wedge(a+1,a+1) = a+3, \label{eq-wedge} \\
 & f_\vee(a,a) = a+2 \text{ and } f_\wedge(a,a+1) = f_\wedge(a+1,a) =  f_\wedge(a+1,a+1) = a+3. \label{eq-vee}
\end{align} 
A given Boolean formula (built from binary operators $\wedge$ and $\vee$; negation is not needed in \cite{Bus87})
can be transformed in $\uTC^0$ into an equivalent Boolean formula $\Phi$ of depth
$d \leq \mathcal{O}(\log |\Phi|)$; see \cite{GanardiL19}.
We can also assume that every path from the root to a leaf has the same length $d$.
By replacing in $\Phi$ every $\wedge$ (resp., $\vee$)
by $f_\wedge$ (resp., $f_\vee$) and replacing 
every occurrence of the truth value $\mathsf{true}$ (resp., $\mathsf{false}$) by $1= s(0,0)$ (resp., $0$),
we obtain an expression that evaluates in the Strahler algebra to $2d+1$ (resp., $2d$) if the Boolean formula $\Phi$ evaluates to 
$\mathsf{true}$ (resp., $\mathsf{false}$). Note that replacing  $f_\wedge(x,y)$ and $f_\vee(x,y)$ by their left-hand sides
from \eqref{eq-conjunction} and \eqref{eq-disjunction}
yields a DAG (since $x$ and $y$ appear more than once in \eqref{eq-conjunction} and \eqref{eq-disjunction}) whose ec-representation can be computed in $\uTC^0$ from $\Phi$.
Since the depth of this DAG is $\mathcal{O}(\log |\Phi|)$ it can be unfolded into the 
term representation of a tree in $\uTC^0$ by Lemma~\ref{lemma-unfold}.
\end{proof}
For $\StNkterm$ with $k \geq 4$ we can show $\uTC^0$-completeness via a reduction from $\mathsf{Majority}$:
\begin{theorem} \label{theorem-uTC0}
The problem $\StNkterm$  is $\uTC^0$-complete  for every $k \geq 4$.
In particular, there is a $\uAC^0$-computable function $t : \{0,1\}^* \to \Bin$ such that the following holds for every $w \in \{0,1\}^*$:
if $w \in \mathsf{Majority}$ then $\hs(t(w)) = 4$, otherwise  $\hs(t(w)) = 3$.
\end{theorem}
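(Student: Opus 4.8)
The plan is to get the upper bound by a shallow evaluation directly on the term representation, and the lower bound by constructing the promised $\uAC^0$‑reduction from $\mathsf{Majority}$ and then lifting it to all $k\ge 4$.

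\emph{Upper bound.} For every fixed $k$ I would show $\StNkterm\in\uTC^0$. Monotonicity of $s$ (i.e.\ $s(x,y)\ge\max(x,y)$, so the Strahler number of a subtree never exceeds that of the whole tree) together with the case distinction in \eqref{eq-strahler-algebra} gives, for every binary tree $t$ and every $j\ge 0$: $\hs(t)\ge j+1$ iff $t$ has an internal node $v$ both of whose subtrees have Strahler number $\ge j$, and $\hs(t)\ge 1$ iff $t$ has an internal node. On a term‑represented input one first computes, in $\uTC^0$ and for every position $p$, the last position of the subtree rooted at $p$ — namely the first position at which the running prefix sum of the weights $b\mapsto+1$, $a\mapsto-1$ drops below its value at $p$, obtained by iterated addition — which makes the descendant relation and the positions of the left and right child of every internal node available in $\uTC^0$. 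One then computes, in $k$ rounds and as functions of the position $p$, the predicates ``$\hs(t(p))\ge j$'' for $j=1,\dots,k$: the case $j=1$ says ``$p$ is internal'', and the step from $j$ to $j+1$ is the unbounded disjunction, over positions $q$ lying in the subtree of $p$, of ``$q$ is internal and the $j$‑th predicate holds at the left child of $q$ and at the right child of $q$''. Each round is an $\uAC^0$ step on top of the $\uTC^0$ precomputation, and $k$ is a constant, so the circuit has constant depth; the output is the $k$‑th predicate evaluated at the root. Hence $\StNkterm\in\uTC^0$.

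\emph{Lower bound.} It suffices to construct the stated $\uAC^0$‑computable map $w\mapsto t(w)\in\Bin$ with $\hs(t(w))=4$ for $w\in\mathsf{Majority}$ and $\hs(t(w))=3$ otherwise: this is exactly a many‑one reduction from $\mathsf{Majority}$ to $\mathsf{St}^{\ge 4}_{\mathsf{term}}$, so, with the upper bound, $\mathsf{St}^{\ge 4}_{\mathsf{term}}$ is $\uTC^0$‑complete. For arbitrary $k\ge 4$ I would compose with a lifting step: attaching a copy of a tree $u$ to each of the $2^c$ leaves of a complete binary tree of depth $c$ produces a tree of Strahler number $\hs(u)+c$ — iterate $\hs(b(u,u))=\hs(u)+1$, which holds because both subtrees of the root have equal Strahler number — and for fixed $c$ its term representation is a fixed interleaving of $2^c$ copies of the term representation of $u$, hence is $\uAC^0$‑computable. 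Applying this with $c=k-4$ to $t(w)$ gives a tree of Strahler number $k$ if $w\in\mathsf{Majority}$ and $k-1$ otherwise, which is an $\uAC^0$‑reduction from $\mathsf{Majority}$ to $\StNkterm$ for every $k\ge 4$.

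\emph{Main obstacle.} The heart of the argument, and the step I expect to be hardest, is the construction of $t(w)$. The reduction must lie in $\uAC^0$ and therefore cannot count the ones of $w$ even approximately, yet the Strahler number of $t(w)$ must decide the $\uTC^0$‑complete predicate $\mathsf{Majority}$; and since the answer has to be in $\{3,4\}$, the tree $t(w)$ must be ``thin'', containing no embedded complete binary tree of depth $5$. This rules out the two obvious attempts: putting constant‑size bit‑gadgets at the leaves of a balanced binary tree (whose Strahler number would grow like $\log n$), and stringing them along a caterpillar (whose running Strahler value is non‑decreasing along the spine and already maximal after a bounded number of ones, since $s(x,y)\ge\max(x,y)$, so it cannot distinguish ``slightly more ones than zeros'' from ``many more ones than zeros''). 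The intended approach is to arrange $\uAC^0$‑uniform constant‑size bit‑gadgets so that, via the embedding characterization of $\hs$ (page~\pageref{page-st}), a complete binary tree of depth $4$ embeds into $t(w)$ precisely when the ones of $w$ outnumber the zeros, exploiting both the ``$\max$'' behaviour of $s$ and its ``$+1$ on a tie'' behaviour to realize this comparison without accumulating Strahler number. Verifying that such an arrangement computes $\mathsf{Majority}$ exactly and keeps $\hs$ in $\{3,4\}$ on every input is the main technical content; the $\uAC^0$‑uniformity of the construction and the explicit values $4$ and $3$ are then routine.
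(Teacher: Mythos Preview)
Your upper bound argument is correct and essentially matches the paper's: both compute the descendant relation on the term representation in $\uTC^0$ (you via prefix sums, the paper via membership of factors in $\Bin$) and then exploit the embedding characterization of the Strahler number, which for fixed $k$ unfolds into a constant-depth circuit. Your lifting from $k=4$ to arbitrary $k\ge 4$ by attaching $t(w)$ to the leaves of a complete binary tree of depth $k-4$ is also fine; the paper only remarks that ``it suffices to consider the case $k=4$'' without spelling this out.

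The genuine gap is exactly where you say it is: you do not construct $t(w)$. You correctly rule out the two naive layouts, but the diagnosis ``arrange constant-size bit-gadgets so that a depth-$4$ complete binary tree embeds iff ones outnumber zeros'' is still only a specification, not a construction, and it is not clear from your outline how to meet it. The paper's trick is not a per-bit gadget at all; it is a \emph{positional} encoding along a single spine. After doubling the input (so blocks of zeros have even length), one defines a $\uAC^0$ map $f$ that turns each maximal block $0^{2k}$ into $b^k a^k$ and each $1$ into $b$. The string $f(w)\,b\,T_2\,f(\bar w)\,T_2$ then describes a caterpillar with exactly $2n$ ``holes'' along the spine: $|w|_1$ of them above a fixed $T_2$-subtree and $|w|_0$ of them below it. These holes are filled bottom-up by the suffix $a^{n-1}\,T_3\,a^n$, so the single $T_3$ lands in the $n$-th hole from the bottom. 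Thus $T_3$ sits above the $T_2$ iff $|w|_0<n$, i.e.\ iff $w\in\mathsf{Majority}$; all other side trees have Strahler number $\le 2$, so the spine evaluates to $4$ in the first case and $3$ in the second. The point you are missing is that the majority comparison is realised \emph{geometrically}---by where a marker falls along a spine whose hole counts are $|w|_1$ and $|w|_0$---rather than by any local combination of bit-gadgets; this is what lets a $\uAC^0$ layout decide a $\uTC^0$-hard predicate through the Strahler evaluation.
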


\begin{proof}[Proof of Theorem~\ref{theorem-uTC0}]
We first show that every problem $\mathsf{St}_{\mathsf{term}}^{\ge k}$ (for a fixed $k$) is in $\uTC^0$. 
Let $t \in \Bin$. We have to check whether the complete binary tree $t_k$ of depth $k$ embeds into $t$.
For this, we have to check whether there exist $2^{k+1}-1$ different positions in $t$ such that the corresponding
nodes are in the correct descendant relations in order to yield an embedding of $t_k$. Hence, it suffices
to show that in  $\uTC^0$ one can check whether for two positions $i < j$ in $t$ (that are identified with 
 the corresponding tree nodes), $j$ is a proper descendant of $i$. This holds if and only if there exists a position $k \geq j$ in $t$ such
that $t[i,k]$ (the substring of $t$ starting in position $i$ and ending in position $k$) belongs to $\Bin$. Since
$\Bin$ belongs to $\uTC^0$ \cite{Loh01rta}, this can be checked in $\uTC^0$ as well.

For the hardness part it suffices to consider the case $k=4$.
We start with the morphism $d :  \{0,1\}^* \to \{0,1\}^*$ that doubles each symbol: $d(0) = 00$ and $d(1) = 11$.
Clearly, $w \in \mathsf{Majority}$ if and only if $d(w) \in \mathsf{Majority}$ and $d(w)$ can be computed in $\uAC^0$.
Hence it suffices to consider strings in the image of $d$ ($\mathsf{im}(d)$ for short) in the following. Every $w \in \mathsf{im}(d)$ 
can be uniquely factorized as 
$w = 0^{2k_0} 1 0^{2k_1} 1  0^{2k_2} \cdots 1 0^{2k_m}$, where $m = |w|_1 \geq 0$ (which is even) 
and $k_1, \ldots, k_m \geq 0$ (actually, every second $k_i$ is zero due to the factors $11$ but we do not need this fact in the following).
Then we define
$$
f(w) \ = \ b^{k_0} a^{k_0} b^{1+k_1} a^{k_1} b^{1+k_2} a^{k_2} \cdots b^{1+k_m} a^{k_m}.
$$
In other words: every $1$ in $w$ is replaced by $b$ and every maximal block $0^{2 k}$ of zeros in $w$ is replaced by $b^k a^k$.
\begin{claim}
The function $f$ can be computed in $\uAC^0$.
\end{claim}

 \begin{claimproof}
To see this, note that  $f(w)$ is obtained from $w$ by replacing every 
 $1$ by $b$ and every $0$ by either $b$ or $a$ according to the following rule:
 Assume that an occurrence of $0$ is the $i^{\text{th}}$ $0$ in a maximal block of $2k$ zeros. Then this occurrence of $0$ is replaced by
 $b$ if $i \leq k$ and otherwise by $a$.
 This case distinction can be easily implemented by a bounded depth Boolean circuit of unbounded fan-in.
  \end{claimproof}
Finally, for a bit string $w \in \mathsf{im}(d)$ of length $2n$ we define
$$
t(w) \ = \ f(w) \, b  \, T_2  \, f(\bar{w})  \, T_2  \, a^{n-1}  \, T_3  \, a^n ,
$$
where $\bar{w} \in \mathsf{im}(d)$ is obtained by flipping every bit in $w$,
$T_2 = bbaabaa$ (the term representation of a tree of Strahler number 2) and $T_3 = bT_2 T_2$ (the term representation of a tree of Strahler number $3$).
Since $f$ can be computed in $\uAC^0$, the same holds for $t$.

The string $t(w)$ is the term representation of a binary tree and satisfies the following:

\begin{claim} \label{claim-tc0}
If $|w|_0 \ge n$ then $\hs(t(w)) = 3$, otherwise  $\hs(t(w)) = 4$.
\end{claim}
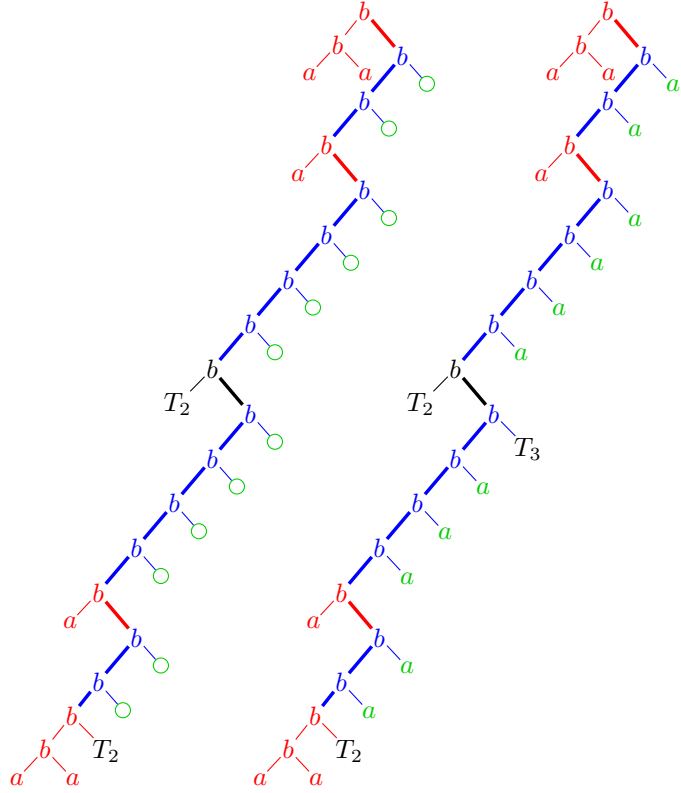
\begin{figure}[t]
 \tikzset{alpha/.style={inner sep = 0.5pt, fill=white}}
  \tikzset{round/.style={inner sep = 2pt, draw = darkgreen, circle, fill=white}}
\centering{
\begin{tikzpicture}
\node[alpha] (1) {$\textcolor{red}{b}$};
\node[alpha, below left = .15cm and .15cm of 1] (1') {$\textcolor{red}{b}$};
\node[alpha, below left = .15cm and .15cm  of 1'] (1a) {$\textcolor{red}{a}$};
\node[alpha, below right = .15cm and .15cm  of 1'] (1aa) {$\textcolor{red}{a}$};
\node[alpha, below right = .30cm and .30cm of 1] (2) {$\textcolor{blue}{b}$};
\node[round, below right = .15cm and .15cm of 2] (h2) {};
\node[alpha, below left = .30cm and .30cm of 2] (3) {$\textcolor{blue}{b}$};
\node[round, below right = .15cm and .15cm of 3] (h3) {};
\node[alpha, below left = .30cm and .30cm of 3] (4) {$\textcolor{red}{b}$};
\node[alpha, below left = .15cm and .15cm  of 4] (4a) {$\textcolor{red}{a}$};
\node[alpha, below right = .30cm and .30cm of 4] (5) {$\textcolor{blue}{b}$};
\node[round, below right = .15cm and .15cm of 5] (h5) {};
\node[alpha, below left = .30cm and .30cm of 5] (6) {$\textcolor{blue}{b}$};
\node[round, below right = .15cm and .15cm of 6] (h6) {};
\node[alpha, below left = .30cm and .30cm of 6] (7) {$\textcolor{blue}{b}$};
\node[round, below right = .15cm and .15cm of 7] (h7) {};
\node[alpha, below left = .30cm and .30cm of 7] (8) {$\textcolor{blue}{b}$};
\node[round, below right = .15cm and .15cm of 8] (h8) {};
\node[alpha, below left = .30cm and .30cm of 8] (9) {$b$};
\node[alpha, below right = .30cm and .30cm of 9] (10) {$\textcolor{blue}{b}$};
\node[alpha, below left = .15cm and .15cm of 9] (T2) {$T_2$};
\node[round, below right = .15cm and .15cm of 10] (h10) {};
\node[alpha, below left = .30cm and .30cm of 10] (11) {$\textcolor{blue}{b}$};
\node[round, below right = .15cm and .15cm of 11] (h11) {};
\node[alpha, below left = .30cm and .30cm of 11] (12) {$\textcolor{blue}{b}$};
\node[round, below right = .15cm and .15cm of 12] (h12) {};
\node[alpha, below left = .30cm and .30cm of 12] (13) {$\textcolor{blue}{b}$};
\node[round, below right = .15cm and .15cm of 13] (h13) {};
\node[alpha, below left = .30cm and .30cm of 13] (14) {$\textcolor{red}{b}$};
\node[alpha, below left = .15cm and .15cm of 14] (14a) {$\textcolor{red}{a}$};
\node[alpha, below right = .30cm and .30cm of 14] (15) {$\textcolor{blue}{b}$};
\node[round, below right = .15cm and .15cm of 15] (h15) {};
\node[alpha, below left = .30cm and .30cm of 15] (16) {$\textcolor{blue}{b}$};
\node[round, below right = .15cm and .15cm of 16] (h16) {};
\node[alpha, below left = .15cm and .15cm of 16] (16') {$\textcolor{red}{b}$};
\node[alpha, below left = .15cm and .15cm of 16'] (16'') {$\textcolor{red}{b}$};
\node[alpha, below right = .15cm and .15cm of 16'] (T2') {$T_2$};
\node[alpha, below left = .15cm and .15cm  of 16''] (16a) {$\textcolor{red}{a}$};
\node[alpha, below right = .15cm and .15cm  of 16''] (16aa) {$\textcolor{red}{a}$};
\draw[red, line width = 1.3pt] (1) -- (2) ;
\draw[blue, line width = 1.3pt] (2) -- (3) -- (4) ;
\draw[red, line width = 1.3pt] (4) -- (5) ;
\draw[blue, line width = 1.3pt] (5) -- (6) -- (7) -- (8) -- (9) ;
\draw[line width = 1.3pt] (9) -- (10) ;
\draw[blue, line width = 1.3pt] (10) -- (11) -- (12) -- (13) -- (14);
\draw[red, line width = 1.3pt] (14) -- (15) ;
\draw[blue, line width = 1.3pt] (15) -- (16) -- (16');
\draw[red] (16')-- (16'');
\draw[red] (1) -- (1');
\draw[red] (1') -- (1a);
\draw[red] (1') -- (1aa);
\draw[blue] (2) -- (h2);
\draw[blue] (3) -- (h3);
\draw[blue] (5) -- (h5);
\draw[blue] (6) -- (h6);
\draw[blue] (7) -- (h7);
\draw[blue] (8) -- (h8);
\draw[blue] (10) -- (h10);
\draw[blue] (11) -- (h11);
\draw[blue] (12) -- (h12);
\draw[blue] (13) -- (h13);
\draw[blue] (15) -- (h15);
\draw[blue] (16) -- (h16);
\draw[red] (16') -- (T2');
\draw[red] (16'') -- (16a);
\draw[red] (16'') -- (16aa);
\draw[red] (4) -- (4a);
\draw[red] (14) -- (14a);
\draw (9) -- (T2);

\node[alpha, right = 3cm of 1] (1) {$\textcolor{red}{b}$};
\node[alpha, below left = .15cm and .15cm of 1] (1') {$\textcolor{red}{b}$};
\node[alpha, below left = .15cm and .15cm  of 1'] (1a) {$\textcolor{red}{a}$};
\node[alpha, below right = .15cm and .15cm  of 1'] (1aa) {$\textcolor{red}{a}$};
\node[alpha, below right = .30cm and .30cm of 1] (2) {$\textcolor{blue}{b}$};
\node[alpha, below right = .15cm and .15cm of 2] (h2) {$\textcolor{darkgreen}{a}$};
\node[alpha, below left = .30cm and .30cm of 2] (3) {$\textcolor{blue}{b}$};
\node[alpha, below right = .15cm and .15cm of 3] (h3) {$\textcolor{darkgreen}{a}$};
\node[alpha, below left = .30cm and .30cm of 3] (4) {$\textcolor{red}{b}$};
\node[alpha, below left = .15cm and .15cm  of 4] (4a) {$\textcolor{red}{a}$};
\node[alpha, below right = .30cm and .30cm of 4] (5) {$\textcolor{blue}{b}$};
\node[alpha, below right = .15cm and .15cm of 5] (h5) {$\textcolor{darkgreen}{a}$};
\node[alpha, below left = .30cm and .30cm of 5] (6) {$\textcolor{blue}{b}$};
\node[alpha, below right = .15cm and .15cm of 6] (h6) {$\textcolor{darkgreen}{a}$};
\node[alpha, below left = .30cm and .30cm of 6] (7) {$\textcolor{blue}{b}$};
\node[alpha, below right = .15cm and .15cm of 7] (h7) {$\textcolor{darkgreen}{a}$};
\node[alpha, below left = .30cm and .30cm of 7] (8) {$\textcolor{blue}{b}$};
\node[alpha, below right = .15cm and .15cm of 8] (h8) {$\textcolor{darkgreen}{a}$};
\node[alpha, below left = .30cm and .30cm of 8] (9) {$b$};
\node[alpha, below right = .30cm and .30cm of 9] (10) {$\textcolor{blue}{b}$};
\node[alpha, below left = .15cm and .15cm of 9] (T2) {$T_2$};
\node[alpha, below right = .15cm and .15cm of 10] (h10) {$T_3$};
\node[alpha, below left = .30cm and .30cm of 10] (11) {$\textcolor{blue}{b}$};
\node[alpha, below right = .15cm and .15cm of 11] (h11) {$\textcolor{darkgreen}{a}$};
\node[alpha, below left = .30cm and .30cm of 11] (12) {$\textcolor{blue}{b}$};
\node[alpha, below right = .15cm and .15cm of 12] (h12) {$\textcolor{darkgreen}{a}$};
\node[alpha, below left = .30cm and .30cm of 12] (13) {$\textcolor{blue}{b}$};
\node[alpha, below right = .15cm and .15cm of 13] (h13) {$\textcolor{darkgreen}{a}$};
\node[alpha, below left = .30cm and .30cm of 13] (14) {$\textcolor{red}{b}$};
\node[alpha, below left = .15cm and .15cm of 14] (14a) {$\textcolor{red}{a}$};
\node[alpha, below right = .30cm and .30cm of 14] (15) {$\textcolor{blue}{b}$};
\node[alpha, below right = .15cm and .15cm of 15] (h15) {$\textcolor{darkgreen}{a}$};
\node[alpha, below left = .30cm and .30cm of 15] (16) {$\textcolor{blue}{b}$};
\node[alpha, below right = .15cm and .15cm of 16] (h16) {$\textcolor{darkgreen}{a}$};
\node[alpha, below left = .15cm and .15cm of 16] (16') {$\textcolor{red}{b}$};
\node[alpha, below left = .15cm and .15cm of 16'] (16'') {$\textcolor{red}{b}$};
\node[alpha, below right = .15cm and .15cm of 16'] (T2') {$T_2$};
\node[alpha, below left = .15cm and .15cm  of 16''] (16a) {$\textcolor{red}{a}$};
\node[alpha, below right = .15cm and .15cm  of 16''] (16aa) {$\textcolor{red}{a}$};
\draw[red, line width = 1.3pt] (1) -- (2) ;
\draw[blue, line width = 1.3pt] (2) -- (3) -- (4) ;
\draw[red, line width = 1.3pt] (4) -- (5) ;
\draw[blue, line width = 1.3pt] (5) -- (6) -- (7) -- (8) -- (9) ;
\draw[line width = 1.3pt] (9) -- (10) ;
\draw[blue, line width = 1.3pt] (10) -- (11) -- (12) -- (13) -- (14);
\draw[red, line width = 1.3pt] (14) -- (15) ;
\draw[blue, line width = 1.3pt] (15) -- (16) -- (16');
\draw[red] (1) -- (1');
\draw[red] (1') -- (1a);
\draw[red] (1') -- (1aa);
\draw[blue] (2) -- (h2);
\draw[blue] (3) -- (h3);
\draw[blue] (5) -- (h5);
\draw[blue] (6) -- (h6);
\draw[blue] (7) -- (h7);
\draw[blue] (8) -- (h8);
\draw[blue] (10) -- (h10);
\draw[blue] (11) -- (h11);
\draw[blue] (12) -- (h12);
\draw[blue] (13) -- (h13);
\draw[blue] (15) -- (h15);
\draw[blue] (16) -- (h16);
\draw[red] (16')-- (16'');
\draw[red] (16') -- (T2');
\draw[red] (16'') -- (16a);
\draw[red] (16'') -- (16aa);
\draw[red] (4) -- (4a);
\draw[red] (14) -- (14a);
\draw (9) -- (T2);
\end{tikzpicture}}
\caption{The tree fragment $f(w) \, b  \, T_2  \, f(\bar{w})  \, T_2$ for $w = 000011001111$ on the left and $t(w)$ on the right.} \label{fig-TC0}
\end{figure}
 
 \begin{claimproof}
 Let us look at the example where $w = d(001011) = 000011001111$ has length $2n=12$, which satisfies $|w|_0 \ge n = 6$.
 On the right of Figure~\ref{fig-TC0}, the tree 
 \[
t(w)  =   f(w) \, b  \, T_2  \, f(\bar{w})  \, T_2  \, a^{n-1}  \, T_3  \, a^n 
 =  \textcolor{red}{bbaa} \, \textcolor{blue}{bb} \, \textcolor{red}{ba} \, \textcolor{blue}{bbbb} \, b  \, T_2  \, \textcolor{blue}{bbbb} \, \textcolor{red}{ba} \, \textcolor{blue}{bb} \, \textcolor{red}{bbaa} \, T_2 \,
\textcolor{darkgreen}{a^5}  \, T_3  \, \textcolor{darkgreen}{a^6} 
\]
is shown.
 Note that a string $b^k a^k$ produces a caterpillar tree of depth $k$ branching off from the root to the left and leaving a ``hole'' at the position right below the root. These are the red patterns in Figure~\ref{fig-TC0}. The $b$'s (replacing the $1$'s when applying $f$) yield the blue nodes
 and edges in Figure~\ref{fig-TC0}.
 
 Figure~\ref{fig-TC0} (left) shows the fragment of $t(w)$ that is produced by the prefix $f(w) \, b  \, T_2  \, f(\bar{w})  \, T_2$. 
The green circles represent holes. The first $|w|_1$ holes are produced by $f(w)$ followed by $|w|_0$ holes produced by $b  \, T_2  \, f(\bar{w})  \, T_2$ (note that $|w|_0 +  |w|_1 = 2n$). 
These $2n$ holes are then filled bottom-up by the $2n$ trees from the suffix $a^{n-1}  \, T_3  \, a^n$, which
finally yields $t(w)$.
 All $2n$ holes are filled with $a$ except the $n^{\text{th}}$ hole from bottom, which is filled by $T_3$.
Note that the tree $t(w)$ has a main spine that is highlighted by the thick edges in Figure~\ref{fig-TC0}.
The nodes on this spine are called the spine nodes below.
All subtrees that are attached to the spine have Strahler number at most $2$ except for the unique occurrence of $T_3$.
The crucial observation now is the following:
\begin{itemize}
\item If $|w|_0 \ge n$ then 
$T_3$ is attached to a spine node that is below the spine node to which the upper occurrence of 
$T_2$ is attached (this is the case in Figure~\ref{fig-TC0}). 
This implies $\hs(t(w)) = 3$.
\item If $|w|_0 < n$ then $T_3$ is attached to a spine node that is above the spine node to which the upper occurrence of 
$T_2$ is attached. This implies that $\hs(t(w)) = 4$.  
\end{itemize} 
This proves Claim~\ref{claim-tc0}.
\end{claimproof}
 Claim~\ref{claim-tc0} implies the second statement of the theorem: if $w \in \mathsf{Majority}$ then 
 $|w|_1 > n$, i.e., $|w|_0 = 2n - |w|_1 < n$, and Claim~\ref{claim-tc0} gives $\hs(t(w)) = 4$.
 Similarly, if $|w|_1 \le n$ then $|w|_0 \geq n$ and Claim~\ref{claim-tc0} gives $\hs(t(w)) = 3$.
\end{proof}
It is easy to see that the problem $\mathsf{St}_{\mathsf{term}}^{\ge 2}$ belongs to $\uAC^0$: if $t \in \Bin$ then $\hs(t) \geq 2$ if and only
if the string $t$ contains at least two occurrences of $baa$, which can be tested in $\uAC^0$. We do not know whether $\mathsf{St}_{\mathsf{term}}^{\ge 3}$ still belongs to $\uAC^0$.

For input trees given in pointer representation, we get:

\begin{theorem} \label{thm-main-LS}
$\StNpointer$ and  $\mathsf{St}_{\mathsf{pointer}}^{\ge k}$ for every $k \geq 3$ are $\LS$-complete.
\end{theorem}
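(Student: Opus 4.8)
The plan is to treat membership and hardness separately. For the upper bound, recall that from the pointer representation of a binary tree $t$ one can compute its term representation by a logspace transducer (as noted above). Composing this transducer with the $\LS$-algorithm for $\StNterm$ obtained from Theorem~\ref{thm-main-NC1} (using $\uNC^1 \subseteq \LS$), where the threshold $k$ is simply carried along, gives a logspace algorithm for $\StNpointer$, and hence also for $\StNkpointer$ for every fixed $k$ (for the fixed-$k$ version one may instead invoke Theorem~\ref{theorem-uTC0}). For the lower bound it suffices to prove $\LS$-hardness of $\StNkpointer$ for every fixed $k \ge 3$, since $\StNpointer$ then inherits hardness: a tree $t$ together with the constant $3$ is an instance of $\StNpointer$.

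To establish $\LS$-hardness of $\StNkpointer$ for a fixed $k \ge 3$, I would give a $\uAC^0$-computable many-one reduction from the well-known $\LS$-complete problem $\mathsf{ORD}$ (which is $\LS$-complete already under $\uAC^0$-reductions): an instance consists of a finite set $V$, a successor function $\sigma$ arranging $V$ into a chain $v_1 < v_2 < \cdots < v_n$ with identified endpoints $v_1, v_n$, and two elements $a, b \in V$, and the question is whether $a$ precedes $b$. By a trivial $\uAC^0$-preprocessing (prepending and appending a dummy element) we may assume $a \neq b$ and $a, b \notin \{v_1, v_n\}$. From such an instance I construct the pointer representation of a binary tree $t$: there is a node $\mathsf{nd}(v)$ for each $v \in V$; the node $\mathsf{nd}(v_n)$ is the root of a fixed tree of Strahler number $k-2$; and for $v \neq v_n$ the node $\mathsf{nd}(v)$ is an internal node whose right child is $\mathsf{nd}(\sigma(v))$ and whose left child, its \emph{attached subtree}, is a fixed tree of Strahler number $k-1$ if $v = a$, a fixed tree of Strahler number $k-2$ if $v = b$, and a single leaf otherwise (note $k-2 \ge 1$). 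The root of $t$ is $\mathsf{nd}(v_1)$. All attached gadgets have constant size as $k$ is fixed, their node sets are pairwise disjoint and disjoint from $V$, and well-formedness of the $\mathsf{ORD}$-instance guarantees that $t$ is indeed a binary tree with a \emph{spine} $\mathsf{nd}(v_1), \mathsf{nd}(v_2), \ldots, \mathsf{nd}(v_n)$; its pointer representation is clearly $\uAC^0$-computable from the instance.

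It remains to evaluate $\hs(t)$ along the spine, from $\mathsf{nd}(v_n)$ upward. Using $s(0,m) = m$ for $m \ge 1$ and $s(m,m) = m+1$, the spine value is non-decreasing from bottom to top, equals $k-2$ at $\mathsf{nd}(v_n)$, and is changed only by the attached subtrees of $a$ and $b$, which are encountered going up in the order given by the chain. If $a$ precedes $b$, one first meets the attached subtree of $b$ (Strahler number $k-2$), raising the value to $s(k-2,k-2) = k-1$, and then that of $a$ (Strahler number $k-1$), raising it to $s(k-1,k-1) = k$, so $\hs(t) = k$. If $b$ precedes $a$, one first meets the attached subtree of $a$, obtaining $s(k-2,k-1) = k-1$, and then that of $b$, which leaves the value at $s(k-1,k-2) = k-1$, so $\hs(t) = k-1$. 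Hence $\hs(t) \ge k$ if and only if $a$ precedes $b$; since $\mathsf{ORD}$ is $\LS$-complete under $\uAC^0$-reductions, this yields the claimed $\LS$-hardness and, with the membership part, the theorem. I expect the only delicate point to be precisely this last computation: confirming that the monotone spine evaluation, the behaviour of $s$ on the values $k-2$ and $k-1$, and the fixed Strahler numbers of all gadgets together pin $\hs(t)$ down to $k$ when $a$ precedes $b$ and to $k-1$ otherwise; in particular one must check that no spine node (near the endpoints or between $a$ and $b$) and no attached subtree produces Strahler number $k$ spuriously. The tree construction and the membership argument are routine.
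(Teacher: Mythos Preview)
Your proposal is correct and follows essentially the same approach as the paper. The upper bound argument is identical, and for the lower bound both you and the paper reduce from the $\LS$-complete order problem on a chain (you call it $\mathsf{ORD}$, the paper calls it line graph accessibility; both cite Etessami) by building a spine-shaped binary tree with two distinguished gadgets whose relative order along the spine determines whether the Strahler number reaches the threshold. The only cosmetic differences are that the paper spells out the construction only for $k=3$ (attaching subtrees of Strahler number $2$ and $1$ at the two marked nodes and giving the spine endpoint Strahler number $1$), leaving the extension to larger $k$ implicit, whereas you parameterize the gadget sizes by $k$ directly.
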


\begin{proof}
The  upper bound for $\StNpointer$ follows from Theorem~\ref{thm-main-NC1} and the fact that the pointer representation
of a tree can be transformed in logspace into its term representation.

For the $\LS$-hardness of $\mathsf{St}_{\mathsf{pointer}}^{\ge 3}$ we give a reduction from the line graph accessibility problem.
A (directed) line graph is a directed graph $(V,E)$ with node set $V = \{v_1, v_2, \ldots, v_n\}$ and edge set $E=\{ (v_i, v_{i+1}) : 1 \le i \le n-1\}$.
The input for the line graph accessibility problem is a directed line graph $G=(V,E)$ and two different nodes $u,v \in V$ and it is asked whether there is
a path from $u$ to $v$ in $G$.
The line graph accessibility problem is known to be $\LS$-complete \cite{Etessami97}.

The reduction from the line graph accessibility problem to $\mathsf{St}_{\mathsf{pointer}}^{\ge 3}$ is similar to the reduction from the proof of 
Theorem~\ref{theorem-uTC0} but less technical.
Fix a line graph $G=(V,E)$ and two nodes $u, v \in V$ with $u \neq v$.  
Let $w \in V$ be the unique node of out-degree $0$ in $G$.
By adding a node to $V$ we can assume that $u \neq w \neq v$. 
From the line graph $G$ we construct a binary tree by adding first a single new child to every
node in $V \setminus \{ u, v, w\}$ and two children to $w$ (the new children all leaves in the tree).
 Finally, we add a binary tree of Strahler number $2$ (resp., $1$) whose root will be the second
child of $u$ (resp., $v$). Let $t$ be the resulting binary tree. If there is a path from $u$ to $v$ in 
$G$ then $\hs(t) = 3$, otherwise $\hs(t) = 2$.
\end{proof}
Finally, we also considered the cases where the input tree is given in a compressed form by either a binary DAG or a TSLP.

\begin{theorem} \label{thm-compressed} The following hold:
\begin{enumerate}[(i)]
\item $\StNdag$ and $\StNtslp$ are \P-complete.
\item For every fixed $k \geq 3$,  $\StNkdag$ is $\LS$-complete. 
\item For every fixed $k \geq 2$,  $\StNktslp$ is $\NL$-complete. 
\end{enumerate}
\end{theorem}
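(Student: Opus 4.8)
For membership in $\P$ one evaluates bottom‑up. For a DAG $\mathcal D$, all occurrences of a node $v$ in $\unfold(\mathcal D)$ carry isomorphic subtrees, so $\hs(\unfold(\mathcal D)(v))$ depends only on $v$; it is bounded by $\height_{\mathcal D}(v)\le|\mathcal D|$, hence has $O(\log|\mathcal D|)$ bits, and can be computed for all $v$ in reverse topological order using $s$ from \eqref{eq-strahler-algebra}. For a TSLP $\mathcal G$ one computes the numbers $\hs_A$ ($A\in N_0$) and $\ell_A,h_A$ ($A\in N_1$) in reverse topological order over the DAG of $\mathcal G$ by the rules \eqref{evaluate-1}--\eqref{evaluate-5} from the proof of Theorem~\ref{thm-main-NC1}; Lemmas~\ref{lemma-composition} and~\ref{lemma-l_B-h_B} guarantee $\hs_A=[\ell_A,h_A]$, and all these numbers are at most $\log_2$ of the number of leaves of $\val(\mathcal G)$, i.e.\ $O(|\mathcal G|)$, so the computation runs in polynomial time. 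For $\P$‑hardness I would reduce from the monotone circuit value problem, which we may assume is given by a synchronous (layered) circuit. We simulate binary $\wedge$ and $\vee$ by $f_\wedge$ and $f_\vee$ from \eqref{eq-conjunction} and \eqref{eq-disjunction} and the constants $\mathsf{false},\mathsf{true}$ by $0$ and $1=s(0,0)$; by \eqref{eq-wedge} and \eqref{eq-vee} the invariant ``a gate at layer $d$ evaluates to $2d$ if the Boolean gate is $\mathsf{false}$ and to $2d+1$ if it is $\mathsf{true}$'' propagates upward exactly as in the proof of Theorem~\ref{thm-main-NC1} (synchronicity ensures that the two inputs of each gadget sit at the same layer). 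Sharing equal gadgets yields in logspace a binary DAG $\mathcal D$ with $\hs(\unfold(\mathcal D))=2L+1$ if the circuit outputs $\mathsf{true}$ and $2L$ otherwise, where $L$ is the number of layers; taking $k=2L+1$ gives a reduction to $\StNdag$. Finally, a binary DAG is a special case of a TSLP with $N_1=\emptyset$ whose value is $\unfold(\mathcal D)$, so $\StNdag$ reduces to $\StNtslp$.

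\textbf{Parts (ii) and (iii): the upper bounds.}
Both rest on the characterization from Section~\ref{sec-dags-trees}: $\hs(t)\ge k$ holds iff the complete binary tree of depth $k$ embeds into $t$ with edges mapped to non‑empty paths. Unrolling this yields: $\hs(t)\ge k$ iff there are $2^k-1$ nodes $(u_\alpha)_{\alpha\in\{1,2\}^{\le k-1}}$ of $t$ such that every $u_\alpha$ is internal and, for $|\alpha|\le k-2$, the node $u_{\alpha i}$ lies in the $i$‑th subtree of $u_\alpha$. For fixed $k$ this is a witness of constant size whose validity reduces to a constant number of ``subtree containment'' queries. For a DAG $\mathcal D$ the nodes $u_\alpha$ may be chosen among the nodes of $\mathcal D$ (all relevant structure depends only on the DAG node), each encoded with $O(\log|\mathcal D|)$ bits, and ``$u_{\alpha i}$ lies in the $i$‑th subtree of $u_\alpha$'' becomes reachability of $u_{\alpha i}$ from the $i$‑th child of $u_\alpha$ in $\mathcal D$; guessing the witness and verifying these reachabilities gives $\StNkdag\in\NL$. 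For a TSLP $\mathcal G$ one argues symmetrically by tracing the reason for $\hs_S\ge k$ backwards through the DAG of $\mathcal G$, carrying a constant amount of bookkeeping for the capped values of the quantities $\hs$, $\ell$, $h$ (legitimate by Lemma~\ref{lemma-composition}); again this reduces to finitely many reachability queries in the DAG of $\mathcal G$, so $\StNktslp\in\NL$.

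\textbf{Part (ii): unambiguity.}
To sharpen $\NL$ to $\UL\cap\coUL$ for $\StNkdag$ I would exploit that the queries above are plain reachability queries in the \emph{acyclic} graph $\mathcal D$ to a set of constant‑size local patterns. The plan is to fix a canonical witness (using, say, a left‑first preference at each branching) and to answer the resulting reachability subproblems with \emph{unique} certificates, in the spirit of the Reinhardt--Allender analysis of unambiguous reachability, after a logspace transformation of $\mathcal D$ that makes $\height$ strictly decrease along every edge; the same bookkeeping also handles the complementary instances, giving membership in $\coUL$. I expect this to be the main obstacle of the theorem: turning the bounded‑many reachability computations into genuinely unambiguous ones is the only step that goes beyond a routine reachability argument.

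\textbf{Part (iii): hardness.}
For $\NL$‑hardness of $\StNktslp$ with $k\ge2$ I would reduce from reachability in acyclic graphs of out‑degree at most $2$ (which is $\NL$‑complete). Given such a graph $\mathcal D'=(V,E)$ with source $s$ and target $t$, introduce a context variable $C_v$ for each $v\in V$. From \eqref{case-B(C(x))} one reads off that the $h$‑parameter of a composed context $D(C(x))$ equals $\max(h_D,h_C)$; accordingly put $\rho(C_v)=C_{w_1}(C_{w_2}(x))$ when $v\ne t$ has out‑neighbours $w_1,w_2$, using a fixed context of $h$‑parameter $0$ in place of a missing out‑neighbour (and also for a sink $v\ne t$), and put $\rho(C_t)=b(x,B)$ where $B$ derives a tree of Strahler number $2$, so that $h_{C_t}=2$. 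A reverse‑topological induction then shows $h_{C_s}=2$ if $s\leadsto t$ and $h_{C_s}=0$ otherwise. For $k=2$ set $\rho(S)=C_s(A_0)$ with $\rho(A_0)=a$; a look at \eqref{eq-l-h} shows $\hs(\val(\mathcal G))\ge2$ iff $s\leadsto t$. For $k>2$ one additionally composes $C_s$ on the outside with a fixed context whose function is $[2,k-1]$ (built from contexts $b(x,\cdot)$ over trees of Strahler numbers $2,\dots,k-1$ via Lemma~\ref{lemma-composition}), which turns the threshold $2$ into $k$. The whole construction is logspace computable, so $\StNktslp$ is $\NL$‑hard, and together with the previous paragraphs $\NL$‑complete.
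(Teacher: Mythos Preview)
Your treatment of part~(i) is correct and essentially matches the paper.

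For part~(ii), your $\NL$ argument via the embedding characterization is sound (and is a different route from the paper), but the sharpening to $\UL\cap\coUL$ has a genuine gap that you yourself flag. The obstacle is real and your plan does not close it: reachability in DAGs is not known to be in $\UL$ (Reinhardt--Allender gives only the nonuniform $\NL/\mathsf{poly}=\UL/\mathsf{poly}$), so ``fixing a canonical witness and answering the reachability subproblems with unique certificates'' would require something new about reachability itself. The paper avoids reachability altogether. The idea you are missing is that the Strahler recursion decomposes into \emph{mutually exclusive} cases: for a node $v$ with children $v_1,v_2$ and $m\ge 2$, exactly one of the $2m-1$ conjunctions
\[
(\hs_{v_1}\ge m{-}1\wedge\hs_{v_2}\ge m{-}1),\quad (\hs_{v_1}\ge m\wedge\hs_{v_2}=i),\quad (\hs_{v_2}\ge m\wedge\hs_{v_1}=i)\qquad(0\le i\le m{-}2)
\]
can hold, and similarly $\hs_v=m$ splits into $2m+1$ exclusive cases. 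One maintains an ``active statement'' of the form $[\hs_v\ge m?]$ or $[\hs_v=m?]$ together with a stack of pending statements; at each step one nondeterministically picks one disjunct, pushes one conjunct and makes the other active. Exclusivity makes the computation unambiguous by construction, and a simple invariant on the $m$-values bounds the stack height by $O(k)$. For $\coUL$ one guesses $i<k$ (a unique correct choice) and runs the same machinery from $[\hs_{v_0}=i?]$.

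For part~(iii), your lower bound via the $h$-parameter of contexts is correct (indeed $h$ of a composition is always the max of the two $h$'s, by \eqref{case-B(C(x))}) and is a neat variant of the paper's construction, which instead builds a caterpillar-shaped $\val(\mathcal G)$. Your upper bound, however, is only a sketch: ``tracing backwards with capped values'' does not obviously stay in $\NL$, because a rule $\rho(A)=b(B,C)$ makes $\hs_A$ depend on \emph{both} children, so a naive backward verification is tree-shaped and alternating rather than nondeterministic. The paper makes this work by an alternating logspace algorithm with at most $k$ alternations (each universal branch decrements the current threshold), then applies Immerman--Szelepcs\'enyi. The nontrivial ingredient is a structural claim for context chains: if $\rho(A)=C(B)$ then $\hs_A\ge m$ holds iff, in an auxiliary DAG recording how contexts compose, there is a path from $A$ to some leaf $D\in N_0$ with $\hs_D\ge m$, or two distinct such paths to leaves $D_1,D_2$ with $\hs_{D_i}\ge m-1$. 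This lets one absorb long context compositions into a single existential reachability phase before a single universal split.
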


\begin{proof}[Proof of Theorem~\ref{thm-compressed}(i)]
It suffices to show the lower bound for DAGs and the upper bound for TSLPs.
For the lower bound for DAGs one can reuse the reduction from the Boolean formula value problem
in the proof of Theorem~\ref{thm-main-NC1} in order to reduce the $\P$-complete monotone Boolean circuit value 
problem to $\StNdag$.

For the upper bound for TSLPs let $\mathcal{G} = (N_0, N_1, S, \rho)$ be a TSLP. Then we can compute 
bottom-up for every $A \in N_0$ the value $\hs_A$ (see the paragraph after Theorem~\ref{theorem tree->tslp})
and for every $B \in N_1$ the values $\ell_B$ and $h_B$ 
from Lemma~\ref{lemma-l_B-h_B}. All these values are bounded by 
$\log_2 n$, where $n$ is the number of leaves of the tree $\val(\mathcal{G})$
(this was shown in the proof of Theorem~\ref{thm-main-NC1}).  Hence, all values are bounded
by $\mathcal{O}(|\mathcal{G}|)$ and can be computed bottom-up in polynomial time according to the rules \eqref{evaluate-1}--\eqref{evaluate-5}
from the proof of Theorem~\ref{thm-main-NC1}.
\end{proof}

\begin{algorithm}
\caption{Recursive Strahler number computation}
\label{algo:rec}
\begin{algorithmic}[1]
    \Procedure{ST}{$v, k$}
        \If{$v$ is a leaf}
            \State \Return \textbf{true}
        \EndIf
        
        \State $v_1 v_2 \gets \gamma(v)$ %
        \State $b_1 \gets \text{ST}(v_1, k - 1)$  \label{rec1} 
        \State $b_2 \gets \text{ST}(v_2, k - 1)$ \label{rec2}
        
        \If{$b_1 = b_2$}
            \State \Return $b_1$ \label{case1}
        \ElsIf{$b_1 = 1 \land b_2 = 0$}
            \State \Return $\text{ST}(v_2, k)$ \label{case2}
        \ElsIf{$b_1 = 0 \land b_2 = 1$}
            \State \Return $\text{ST}(v_1, k)$ \label{case3}
        \EndIf
    \EndProcedure
\end{algorithmic}
\end{algorithm}

\begin{proof}[Proof of Theorem~\ref{thm-compressed}(ii)]
Let $\mathcal{D} = (V, v_0,\gamma)$ be the binary DAG from the input.
\cref{algo:rec} is a straightforward recursive algorithm that tests whether the Strahler number of a node $v \in V$ is at most $k$. It returns 
$1$ if $\hs(v) \le k$ and $0$ if $\hs(v) > k$.
The correctness of the algorithm follows immediately from the fact that
\begin{align*}
	\hs(v) \le k \iff \quad & (\hs(v_1) \le k-1 \wedge \hs(v_2) \le k-1) \\
	\vee ~ & (\hs(v_1) \le k-1 \wedge \hs(v_2) = k) \\
	\vee ~ & (\hs(v_1) = k \wedge \hs(v_2) \le k-1) ,
\end{align*}
where the three disjuncts correspond to the cases in lines \ref{case1}, \ref{case2}, and \ref{case3}.
Furthermore, we claim that the algorithm can be implemented in $O(k \cdot \log n)$ space
where $n$ is the input size.
Observe that the recursive calls in lines \ref{rec1} and \ref{rec2} decrease the parameter $k$ by one.
Furthermore, the recursive calls in lines \ref{case2} and \ref{case3} are tail recursive calls.
Therefore the entire procedure can be implemented using a stack of $k$ many $O(\log n)$-bit entries, each containing a constant number
of nodes and bits.
The $\LS$-hardness for $k \geq 3$ comes from Theorem~\ref{thm-main-LS}.
\end{proof}

\begin{proof}[Proof of Theorem~\ref{thm-compressed}(iii)]
We show that $\StNktslp$ can be accepted by an alternating logspace machine with $k$ alternations.
Since $k$ is a fixed constant, the Immerman--Szelepcsényi
theorem then implies that $\StNktslp$ belongs to $\NL$ \cite[Corollary~2]{Imm88}. 

Let $\mathcal{G} = (N_0, N_1, S, \rho)$ be the input TSLP. We have to check whether 
$\hs(\val(\mathcal{G})) \le k$ for a fixed value $k$. 
Recall the notation $\hs_A$ for a variable $A$; see the paragraph after 
Theorem~\ref{theorem tree->tslp}. W.l.o.g.~we can replace every $\rho(A) = b(x,B)$ by $\rho(A) = b(B,x)$ (this does not 
change the Strahler number of $\val(\mathcal{G})$).

For every $A \in N_0$ such that $\rho(A) = C(B)$ for some $C \in N_1$ and $B \in N_0$
we define a binary DAG 
$\mathcal{D}_A = (N, A, \gamma)$ (without a node-labelling function $\lambda$),
where $\gamma$ is defined as follows:
\begin{itemize}
\item $\gamma(A) = CB$,
\item $\gamma(D) = EF$ if $D \in N_1$ and $\rho(D) = E(F(x))$,
\item $\gamma(D) = E$ if $D \in N_1$ and $\rho(D) = b(E,x)$,
\item $\gamma(D) = \varepsilon$ in all other cases.
\end{itemize}

\begin{claim} \label{claim-NL} 
For every $m \geq 1$ and $A \in N_0$ such that $\rho(A) = C(B)$ ($C \in N_1$ and $B \in N_0$)
we have $\hs_A \geq m$ if and only if one of the following cases holds:
\begin{itemize}
\item In the DAG $\mathcal{D}_A$ there exists a path from $A$ to a leaf $D \in N_0$ such that $\hs_D \geq m$.
\item In the DAG $\mathcal{D}_A$ there exist two different paths  ending in leaves $D_1 \in N_0$ and $D_2 \in N_0$, respectively,
such that $\hs_{D_1} \geq m-1$ and $\hs_{D_2} \geq m-1$.
\end{itemize} 
\end{claim}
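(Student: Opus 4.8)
The plan is to unfold the identity $\val(A) = \val(C)[x/\val(B)]$ into an explicit combinatorial picture of $\val(A)$, to match the hanging subtrees of that picture with the root-to-leaf paths of $\mathcal{D}_A$, and then to reduce the biconditional to a statement about iterated applications of the operation $s$ from \eqref{eq-strahler-algebra}. After the normalization (every $\rho(D) = b(x,B)$ replaced by $b(B,x)$), every context value is a right spine, so $\val(C) = b(t_1, b(t_2, \ldots, b(t_p, x)))$, where each $t_j = \val(E_j)$ is a genuine subtree attached on the left of the $j$-th spine node and $x$ marks the hole; hence $\val(A) = b(t_1, b(t_2, \ldots, b(t_p, \val(B))))$. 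I would check that the out-edges of $\mathcal{D}_A$ record exactly how $\val(C)$ is built: the edge $A \to B$ points to the subtree plugged into the hole, while a path $A \to C \to \cdots$ descends through the composition nodes ($\gamma(D) = EF$ for $\rho(D) = E(F(x))$) and finally branches off through an edge $\gamma(D) = E$ with $\rho(D) = b(E,x)$, landing in exactly one $t_j$; conversely every spine position is hit by such a path, and distinct positions give distinct paths since the two children of a composition node are distinguished by their positions in $\gamma$. Consequently, counted with multiplicity, the reachable leaves $D$ of $\mathcal{D}_A$ carry precisely the values $\hs_B$ and $\hs(t_1), \ldots, \hs(t_p)$, and ``two different paths to leaves'' amounts to picking two (not necessarily distinct) subtrees from among $\val(B), t_1, \ldots, t_p$.

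Evaluating $\val(A)$ bottom-up along the spine gives $\hs_A = s(\hs(t_1), s(\hs(t_2), \ldots, s(\hs(t_p), \hs_B)))$; equivalently, $\hs_A = [\ell_C, h_C](\hs_B)$ by Lemma~\ref{lemma-l_B-h_B}, which with Lemma~\ref{lemma-composition} furnishes a second, more algebraic route. It then suffices to characterize when such an iterated expression reaches $m$ in terms of its arguments. For the implication from the right-hand side I would use two facts: $\hs$ is monotone under the subtree relation because $s(x,y) \ge \max(x,y)$, so a single hanging subtree (or $\val(B)$) of Strahler number $\ge m$ already forces $\hs_A \ge m$; and if two arguments of the fold are $\ge m-1$, then at the fold step belonging to the outer of the two, the operation $s$ receives two values $\ge m-1$ (by monotonicity the running value is already that large), while $s(u,v) \ge m$ whenever $u,v \ge m-1$ (equal values give $\ge (m-1)+1$, distinct values give $\max \ge m$), after which the running value never decreases. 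For the converse I would invoke the embedding characterization of the Strahler number (it is $\ge m$ iff the complete binary tree of depth $m$ embeds with edges sent to non-empty paths), look at the node of $\val(A)$ that receives the root of the embedded tree, and split into cases: either it lies inside a single $t_j$ or inside $\val(B)$, so that subtree has Strahler number $\ge m$, or it lies on the spine, so the two subtrees of that spine node, namely $t_j$ and the suffix $b(t_{j+1}, \ldots, b(t_p, \val(B)))$, each receive a depth-$(m-1)$ complete binary tree.

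The main obstacle is this spine case of the converse: the suffix $b(t_{j+1}, \ldots, b(t_p, \val(B)))$ is again a tree of exactly the same shape, and the depth-$(m-1)$ complete binary tree inside it may be \emph{assembled} from several shallow hanging subtrees stacked along the remaining spine rather than contained in any single one of them, so it need not be witnessed by one path to a leaf of Strahler number $\ge m-1$. To close the argument I would prove the statement by induction on $m$ (and, for fixed $m$, on the spine length $p$), repeatedly pushing such a staircase upward: a suffix carrying a depth-$(m-1)$ complete binary tree either already contains a hanging subtree of Strahler number $\ge m-1$, or it splits at a higher spine node into two suffixes each carrying a depth-$(m-2)$ complete binary tree, which raises the threshold back to $m-1$; it is precisely this recursion that the DAG $\mathcal{D}_A$ is built to make searchable. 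The recursive occurrences $\hs_D \ge m$ and $\hs_D \ge m-1$ at leaves $D$ of $\mathcal{D}_A$ are then discharged by the induction hypothesis applied to $\mathcal{D}_D$ when $\rho(D)$ again has the form $C'(B')$, and by the defining recursion of $\hs$ when $\rho(D) = b(\cdot,\cdot)$ or $\rho(D) = a$.
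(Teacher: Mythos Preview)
Your identification of the obstacle is exactly right, and in fact it is fatal: the induction you sketch cannot close the gap, because the claim as stated is false. Take a TSLP in which $\val(A)$ is the caterpillar $b(t_1,b(t_2,b(t_3,\val(B))))$ with $\hs(t_1)=2$, $\hs(t_2)=1$, $\hs(t_3)=\hs_B=0$ (realized, say, by $\rho(A)=C(B)$, $\rho(C)=C_1(C_{23}(x))$, $\rho(C_{23})=C_2(C_3(x))$, $\rho(C_i)=b(E_i,x)$ with $\hs_{E_1}=2$, $\hs_{E_2}=1$, $\hs_{E_3}=0$, and $\rho(B)=a$). Then
\[
\hs_A \;=\; s\bigl(2,\,s(1,\,s(0,0))\bigr) \;=\; s\bigl(2,\,s(1,1)\bigr) \;=\; s(2,2) \;=\; 3,
\]
yet the leaves of $\mathcal{D}_A$ are $E_1,E_2,E_3,B$ with Strahler numbers $2,1,0,0$, each reached by a \emph{unique} root-to-leaf path. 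For $m=3$ no leaf has Strahler number $\ge 3$, and only one path ends at a leaf of Strahler number $\ge 2$, so neither alternative in the claim holds. This is precisely the ``staircase'' you worried about: the value $3$ is built up along the spine as $0\to 1\to 2\to 3$ without any two hanging subtrees individually reaching $m-1=2$.

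The paper's own proof simply asserts the caterpillar equivalence ``$\hs_A\ge m$ iff some $\hs(A_i)\ge m$ or two of them are $\ge m-1$'' without justification, so it glosses over exactly the step you flagged. Your ``if'' direction (monotonicity of $s$ together with $s(u,v)\ge m$ whenever $u,v\ge m-1$) is fine and matches what the paper intends, but no argument for the converse can succeed because the converse is not true. A correct formulation would have to allow more than two paths, since a suffix of the caterpillar can accumulate a large Strahler number from many shallow hanging subtrees rather than from a single one at level $m-1$.
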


 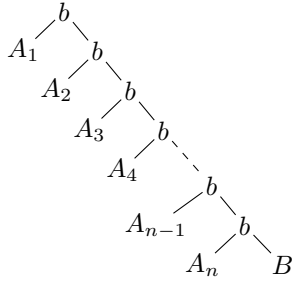
\begin{figure}[t]
 \tikzset{alpha/.style={inner sep = 1pt, fill=white}}
 \tikzset{alpha2/.style={inner sep = 0pt, fill=white}}
\centering{
\begin{tikzpicture}
\node[alpha] (1) {$b$};
\node[alpha2, below left = .2cm and .2cm of 1] (1') {$A_1$};
\node[alpha, below right = .2cm and .2cm of 1] (2) {$b$};
\node[alpha2, below left = .2cm and .2cm of 2] (2') {$A_2$};
\node[alpha, below right = .2cm and .2cm of 2] (3) {$b$};
\node[alpha2, below left = .2cm and .2cm of 3] (3') {$A_3$};
\node[alpha, below right = .2cm and .2cm of 3] (4) {$b$};
\node[alpha2, below left = .2cm and .2cm of 4] (4') {$A_4$};
\node[alpha, below right = .4cm and .4cm of 4] (5) {$b$};
\node[alpha2, below left = .2cm and .2cm of 5] (5') {$A_{n-1}$};
\node[alpha, below right = .2cm and .2cm of 5] (6) {$b$};
\node[alpha2, below left = .2cm and .2cm of 6] (6') {$A_n$};
\node[alpha, below right = .2cm and .2cm of 6] (7) {$B$};
\draw (1) -- (2) -- (3) -- (4) ;
\draw[dashed] (4) -- (5) ;
\draw (5) -- (6) ;
\draw (1) -- (1') ;
\draw (2) -- (2') ;
\draw (3) -- (3') ;
\draw (4) -- (4') ;
\draw (5) -- (5') ;
\draw (6) -- (6') ;
\draw (6) -- (7) ;
 \end{tikzpicture} }
\caption{\label{fig-caterpillar} A caterpillar tree from Claim~\ref{claim-NL}.}
\end{figure}

\begin{claimproof}
Note that from $A$ we can derive a caterpillar tree as shown in Figure~\ref{fig-caterpillar}. 
To simplify notation, let us write $A_{n+1}$ for $B$. All $A_i$ ($0 \le i \le n+1$) in 
Figure~\ref{fig-caterpillar} are from $N_0$ and are leaves of the DAG $\mathcal{D}_A$.
 Then $\hs(A) \ge m$ if and only if (i) there 
is an $1 \le i \le n+1$ such that $\hs(A_i) \ge m$ or (ii) there are $1 \le i < j \le n+1$ such that $\hs(A_i) \ge m-1$ 
and  $\hs(A_j) \ge m-1$. These two cases (i) and (ii) correspond to the two cases from the claim.
\end{claimproof}

\noindent
Claim~\ref{claim-NL}  leads to the following alternating logspace algorithm: 
As in the proof of Theorem~\ref{thm-compressed}(ii), the algorithm stores an active statement $\mathcal{S}_a = [\hs_A \geq m]$
for some $A \in N_0$ and $0 \le m \le k$ (we do not need the stack used in the proof of Theorem~\ref{thm-compressed}(ii)).
Initially, it sets $\mathcal{S}_a := [\hs_S \ge k]$. If at some point we have $\mathcal{S}_a = [\hs_A \geq 0]$ for some
$A \in N_0$ then the algorithm accepts. If $\mathcal{S}_a = [\hs_A \geq m]$ for some $m>0$ and 
$\rho(A) = a$ then the algorithm rejects.

Now assume that $\mathcal{S}_a = [\hs_A \geq m]$ with $m > 0$, $A \in N_0$ and $\rho(A) \neq a$. Then we distinguish the following two cases:

\medskip
\noindent
\emph{Case 1.} $\rho(A) = C(B)$ (and hence $\gamma(A) = CB$ in the DAG $\mathcal{D}_A$)
 for some $C \in N_1$ and $B \in N_0$. Then the algorithm nondeterministically chooses one of the following two branches:
\begin{itemize} 
\item guess existentially a path in $\mathcal{D}_A$ from $A$ to a leaf $D$, set $\mathcal{S}_a := [\hs_D \geq m]$ and continue, 
\item guess existentially a (possibly empty) path in $\mathcal{D}_A$ from $A$ to a node $D$ such that $\gamma(D) = D_1 D_2$ for some 
$D_1, D_2$. Then guess universally an $i \in \{1,2\}$. Finally, guess existentially a path in $\mathcal{D}_A$ from $D_i$ to a leaf $E \in N_0$,
set $\mathcal{S}_a := [\hs_E \geq m-1]$ and continue.
\end{itemize}
Note that the first (resp., second) continuation corresponds to the first (resp., second) point in Claim~\ref{claim-NL}.
 
\medskip
\noindent
\emph{Case 2.} $\rho(A) = b(A_1, A_2)$  for some $A_1 , A_2 \in N_0$.
Then the algorithm guesses existentially one of the following three branches:
\begin{itemize} 
\item set $\mathcal{S}_a = [\hs_{A_1} \geq m]$ and continue,
\item set $\mathcal{S}_a = [\hs_{A_2} \geq m]$ and continue,
\item universally guess an $i \in \{1,2\}$ and set $\mathcal{S}_a = [\hs_{A_i} \geq m-1]$.
\end{itemize}
This algorithm is clearly correct. Moreover, the number of alternations is bounded by $k$ since after each universal 
guess the value $\num(\mathcal{S}_a)$ (defined as in the proof of Theorem~\ref{thm-compressed}(ii)) is decremented and the algorithm stops when $\num(\mathcal{S}_a)=0$ (or earlier).

We show $\NL$-hardness for $k=2$ by a reduction from the following variant of the graph accessibility problem for DAGs.
Consider a binary DAG $\mathcal{D} = (V, v_0, \gamma)$ as defined in Section~\ref{sec-dags-trees} 
(but without the labelling function $\lambda$), where
$d(v) \in \{0,2\}$ for every node $v \in V$ and $d(v_0)=2$. The question is, whether for a given target node $v_t$ 
with $\gamma(v_t) = \varepsilon$, there
is a path from $v_0$ to $v_t$. The standard graph accessibility problem for DAGs can be easily reduced to our variant.
We construct a TSLP $\mathcal{G} = (N_0, N_1, S, \rho)$ as follows:
\begin{itemize}
\item $N_0 = \{ S, A, B \}$, $N_1 = V$ (of course we assume $N_0 \cap N_1 = \emptyset$),
\item $\rho(S) = v_0(B)$,
\item $\rho(v) =  v_1(v_2(x))$ if $v \in V$ and $\gamma(v) = v_1 v_2$,
\item $\rho(v) = b(x, A)$ if $v \in V$, $\gamma(v) = \varepsilon$ and $v \neq v_t$,
\item $\rho(v) = b(x, B)$ if  $v = v_t$,
\item $\rho(A) = a$, $\rho(B) = b(A,A)$.
\end{itemize}
If there is no path from $v_0$ to $v_t$ then $\val(\mathcal{G})$ is a caterpillar tree of the form
$b^m a^{m+1}$ for some $m$ and hence $\hs(\val(\mathcal{G})) = 1$. On the other hand, if 
there is a path from $v_0$ to $v_t$ then $\val(\mathcal{G})$ has the form $b^m a^i b a^j$ 
with $i,j \ge 2$ and $i+j = m+2$ and hence $\hs(\val(\mathcal{G})) = 2$.
\end{proof}

\section{Strahler number of derivation trees} \label{sec-CNF}

In this section, we consider the Strahler numbers of derivation trees of 
context-free grammars. 
Since we want to obtain binary trees and since the derived words have no relevance for us,
we consider context-free grammars $G = (N,S,P)$, where $N$ is the set of nonterminals, $S \in N$ is the start nonterminal
and $P$ is the set of productions such that each of them has the form $A \to \varepsilon$ or $A \to BC$ for $A,B,C \in N$.
Slightly abusing standard terminology, we call such a grammar a \emph{Chomsky normal form grammar} or CNF-grammar for short.
The notion of a derivation tree is defined as usual: a \emph{derivation tree for} $A \in N$ 
is an $N$-labelled binary tree such that (i) the root is labelled with $A$,
(ii) if an internal  node $v$ is labelled with $B \in N$
then there is a production $B \to CD$ such that the left (resp., right) child of $v$ is labelled with $C$ (resp., $D$) and (iii)
if $v$ is a $B$-labelled leaf then $(B \to \varepsilon) \in P$. A \emph{derivation tree of} $G$ is a derivation for the start nonterminal $S$.
The grammar $G$ is \emph{productive} if for every nonterminal $A \in N$ there is a derivation tree.
It is well-known that a given CNF-grammar can be transformed in polynomial time into an equivalent 
productive CNF-grammar. 
A derivation tree $t$ is called acyclic if there is no nonterminal that appears twice
along a path from the root to a leaf. The motivation for considering acyclic derivations trees and their Strahler numbers comes
from \cite{LRZ25}; see the discussion in the introduction.

In this section we consider the following problem $\CNF^\ge$ (resp., $\acCNF^\ge$):
\begin{itemize}
\item Input: a CNF-grammar $G$ and a number $k$ (given in unary encoding).
\item Question:  Is there a derivation tree (resp., acyclic derivation tree)  $t$ of $G$ with $\hs(t) \geq k$? 
\end{itemize}
If the number $k$ is fixed and not part of the input, we obtain the problems $\CNF^{\ge k}$ and $\acCNF^{\ge k}$.
The following results pinpoint the complexity of these problems.

\begin{theorem} \label{thm-CNF}
The following holds:
\begin{enumerate}[(i)]
\item
$\CNF^\ge$ and $\CNF^{\ge k}$ for every $k \geq 1$ are $\P$-complete.
\item $\CNFack$ is $\NP$-complete for every $k \geq 2$.
\item $\CNFac$ is $\PSPACE$-complete.
\end{enumerate}
\end{theorem}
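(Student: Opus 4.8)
The plan is to prove Theorem~\ref{thm-CNF} in three parts, reusing machinery developed earlier in the paper.

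\textbf{Part (i): $\P$-completeness of $\CNF^\ge$ and $\CNF^{\ge k}$.}
First I would establish $\P$-hardness of $\CNF^{\ge k}$ for the fixed constant $k = 1$ (which then trivially gives hardness of $\CNF^\ge$ and of $\CNF^{\ge k}$ for all $k$, by padding a tree that already has large Strahler number into every derivation tree via extra productions). The natural route is a reduction from $\StNdag$, which is $\P$-complete by Theorem~\ref{thm-compressed}(i): given a binary DAG $\mathcal{D} = (V,v_0,\gamma)$, build a CNF-grammar $G$ whose nonterminals are the nodes of $V$, with a production $v \to v_1 v_2$ whenever $\gamma(v) = v_1 v_2$ and a production $v \to \varepsilon$ whenever $d(v) = 0$, and start symbol $v_0$. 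Then $\unfold(\mathcal{D})$ is precisely the \emph{unique} derivation tree of $G$, so $\hs(\unfold(\mathcal{D})) \ge k$ iff $G$ has a derivation tree of Strahler number $\ge k$. (One should first make sure $\mathcal{D}$ is such that every node is reachable and productive; if not, this is a minor logspace preprocessing.) For the matching $\P$ upper bound for the harder problem $\CNF^\ge$ (where $k$ is part of the input), I would first make $G$ productive in polynomial time, and then compute by a fixpoint iteration, for each nonterminal $A$, the value $M(A) = \sup\{\hs(t) : t \text{ a derivation tree for } A\} \in \mathbb{N} \cup \{\infty\}$. Initialize $M_0(A) = 0$ for every $A$ with $(A\to\varepsilon) \in P$ and $M_0(A) = -\infty$ otherwise; then iterate $M_{i+1}(A) = \max\{M_i(A)\} \cup \{\, s(M_i(B), M_i(C)) : (A \to BC) \in P \,\}$ using the Strahler operation $s$ from \eqref{eq-strahler-algebra}, but capped at some threshold $\tau$. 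The subtle point is that $M(A)$ may be $\infty$: if during the iteration some value strictly exceeds the value $|N|$ (more precisely, if the iteration has not stabilized after $|N|\cdot(|N|+1)$ rounds, or if a value crosses a threshold that a loop in the grammar can pump), then $M(A) = \infty$. I would argue that $M(A)$ is finite iff $M(A) \le |N|$ — because a derivation tree of Strahler number $m$ embeds a complete binary tree of depth $m$ (see the remark on page~\pageref{page-st}), so if $m > |N|$ then some root-to-leaf path in that complete binary tree repeats a nonterminal, and that loop can be pumped to increase the Strahler number unboundedly. Hence the iteration over the finite domain $\{0,1,\dots,|N|\} \cup \{\infty\}$ stabilizes after polynomially many rounds, each round computable in polynomial time; finally answer ``yes'' iff $M(S) \ge k$. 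The main obstacle in this part is getting the $\infty$-case exactly right and proving the ``$m > |N| \Rightarrow$ pumpable'' claim cleanly.

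\textbf{Part (ii): $\NP$-completeness of $\CNFack$ for fixed $k \ge 2$.}
Membership in $\NP$ for fixed $k$ is already cited from \cite{LRZ25} (and can also be re-derived: an acyclic derivation tree of Strahler number $\le k$ has a nondeterministically guessable polynomial-size ``skeleton''). So the work is $\NP$-hardness, already for $k = 2$. I would reduce from a standard $\NP$-complete problem — the most convenient is probably $3$-SAT or rather a ``two disjoint paths / exact selection'' flavored problem — by building a CNF-grammar $G$ in which an acyclic derivation tree of Strahler number $\ge 2$ forces the grammar to ``commit'' to exactly two nonterminal-disjoint subderivations that together encode a satisfying assignment. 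The key leverage is that Strahler number $\ge 2$ requires two descendants of Strahler number $\ge 1$, i.e., two ``branches'', each of which (being acyclic) has bounded length and so can encode a sequence of variable choices; the acyclicity constraint, which is what makes the problem $\NP$-hard rather than $\P$-easy, prevents simply pumping a single long path. Concretely, I would design nonterminals $A_{i,b}$ for each variable $x_i$ and value $b$, arrange productions so that a path through the tree reads off an assignment, use the binary branching at the root to spawn two independent ``checker'' subtrees over the clauses, and set things up so that both checkers succeed (each reaching a leaf, giving Strahler $1$ on each side, hence $2$ at the root) iff the guessed assignment satisfies all clauses. The main obstacle here is the careful grammar gadget design: ensuring that the \emph{only} way to achieve Strahler $\ge 2$ acyclically corresponds to a genuine satisfying assignment, with no unintended derivation trees sneaking in.

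\textbf{Part (iii): $\PSPACE$-completeness of $\CNFac$.}
For the $\PSPACE$ upper bound, I would give an alternating-polynomial-time (equivalently, polynomial-space) algorithm that, given $G$ and $k$, verifies the existence of an acyclic derivation tree of Strahler number $\ge k$: maintain a current nonterminal $A$, a target Strahler threshold $m \le k$, and the set $P$ of nonterminals already used on the current root-to-node path (this set has size $\le |N|$, so fits in polynomial space); at an internal step guess a production $A \to BC$, then either recurse existentially into one child with the same threshold $m$ (and the other child's requirement dropped), or recurse universally into both children with threshold $m-1$ — mirroring Case~1/Case~2 of the proof of Theorem~\ref{thm-compressed}(iii), but now additionally checking acyclicity by forbidding any nonterminal in $P$ and adding $A$ to $P$ before recursing. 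Since $\mathsf{AP} = \PSPACE$, this gives membership. For $\PSPACE$-hardness I would reduce from a quantified problem such as $\mathrm{QBF}$ (or from the word problem for a polynomial-space Turing machine): the alternation of $\exists/\forall$ quantifiers in the QBF instance is simulated by the alternation between the ``pick one branch'' (existential) and ``both branches must work'' (universal) structure of the Strahler recursion, with the threshold $k$ equal to the number of quantifier alternations, and acyclicity again forcing each quantifier block to be processed exactly once along a path. The main obstacle in this part is making the correspondence between QBF's quantifier tree and the Strahler recursion tree precise — in particular ensuring the acyclicity constraint does not accidentally block legitimate derivations and that the $\forall$-branching genuinely forces \emph{both} subtrees rather than allowing a cheap one-sided escape.
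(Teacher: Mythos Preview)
Your overall architecture is sound, but there are two genuine gaps and several places where your sketch diverges substantially from the paper's construction.

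\textbf{Part (i), upper bound.} Your claim ``$M(A)$ finite iff $M(A)\le |N|$'' is correct, but your argument for it is not: finding a repeated nonterminal $B$ on a single root--to--leaf path of the embedded complete binary tree only gives $B \Rightarrow^+ \alpha B\beta$, and pumping \emph{that} loop does not increase the Strahler number (think of $A\to AB$, $B\to\varepsilon$: pumping yields caterpillars of Strahler~$1$). The paper instead first tests whether some $A$ satisfies $A\Rightarrow^* AA$ (which immediately gives $\hs_S=\infty$), and then proves directly that if no such $A$ exists, every derivation tree has Strahler $\le |N|$: at a node where both children have Strahler $h-1$, the label $A_0$ cannot occur in \emph{both} subtrees (else $A_0\Rightarrow^*A_0A_0$), so one can descend into a subtree avoiding $A_0$, and iterate to produce $|N|+1$ pairwise distinct labels. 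This two--sided argument is what makes the bound work; a single path does not.

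\textbf{Part (i), lower bound.} Your reduction from $\StNdag$ maps $(\mathcal{D},k)$ to $(G_{\mathcal D},k)$, which shows $\P$-hardness of $\CNF^\ge$ (with $k$ in the input), not of $\CNF^{\ge 1}$. For fixed $k$ the paper simply reduces from CNF-grammar \emph{emptiness} (which is $\P$-complete); your padding trick is fine once that base case is established.

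\textbf{Part (iii), upper bound.} In the alternating algorithm you cannot ``drop'' the other child's requirement. Unlike the productive, non-acyclic setting of Theorem~\ref{thm-compressed}(iii), here the other child must still admit an \emph{acyclic} derivation tree avoiding the current forbidden set $U$, which is not automatic. The paper's algorithm therefore branches universally to $(A_1,U\cup\{A_1\},i)$ \emph{and} $(A_2,U\cup\{A_2\},0)$ in the ``existential'' case; omitting the threshold-$0$ branch makes the algorithm unsound.

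\textbf{Parts (ii) and (iii), lower bounds.} Your intuitions (SAT-style reduction exploiting acyclicity; QBF with $\exists/\forall$ mapped to the two cases of the Strahler recurrence) point in the right direction, but the paper's constructions are rather different and more rigid. For (ii) it reduces from \emph{exact} 3-hitting set: a long ``selection spine'' chooses $I_k$ or $O_k$ for each element, and the Strahler-$2$ witness is created only by reaching a $B_1$-subtree whose acyclicity forces, for each triple, exactly one $I$ and two $O$'s to have been chosen. For (iii) the paper does \emph{not} set $k$ to the number of alternations; instead it reuses the $f_\wedge,f_\vee$ gadgets from the $\uNC^1$-hardness proof, encodes each quantifier and each gate of $\phi$ by such a gadget, and uses acyclicity only at the leaves (productions~\eqref{rule-A}) to force the literal's value to agree with the $T_i/F_i$ chosen on the path, yielding $\hs(t_{\text{big}})\in\{2h+2n+1,\,2h+2n+2\}$ according to the truth of $\psi$. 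Your proposed direct alternation-to-Strahler correspondence would need to rule out the ``both children already have Strahler $\ge m$'' escape, which is exactly what the $f_\wedge/f_\vee$ arithmetic handles.
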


\begin{proof}[Proof of Theorem~\ref{thm-CNF}(i)]
For the $\P$ upper bound for $\CNF^\ge$, let  $G = (N,S,P)$ be a CNF-grammar. W.l.o.g.~we can assume that $G$ is productive.
For every nonterminal $A \in N$ let
\[ \hs_A = \mathsf{sup} \{ \hs(t) : \text{ $t$ is a derivation tree for $A$}\},\] 
where $\mathsf{sup}$ refers to the supremum and $\mathsf{sup}(M) = \infty$ for an unbounded set $M \subseteq \mathbb{N}$.
It suffices to compute in polynomial time the value $\hs_S$.

First assume that there exists a nonterminal $A \in N$
such that $A \derive_G^* AA$, i.e., $AA$ can be derived with the productions from $A$.
Since $G$ is productive this implies $\hs_A = \infty$ and hence $\hs_S = \infty$.

Now assume that there is no $A \in N$
such that $A \derive_G^* AA$.
We claim that $\hs_S \leq |N|$ (which implies that $\hs_A \leq |N|$ for all $A \in N$).
In order to get a contradiction, assume that $G$ has a derivation tree $t$ with $h := \hs(t) > |N|$.
Hence, there exists a path from the root of $t$ to a node $u_0$ such that its two children $v_{0}$ and $w_{0}$ satisfy
$\hs(t(v_{0})) = h-1$ and $\hs(t(w_{0})) = h-1$. Let $A_0$ be the nonterminal labelling
$u_0$. By assumption, $A_0$ cannot
occur in both $t(v_{0})$ and $t(w_{0})$. W.l.o.g. assume that $A_0$ does not occur in $t(v_{0})$.
We then repeat this argument with $t(v_{0})$. In this way we obtain nodes $u_0, v_0, u_1, v_1, \ldots, u_{h-1}, v_{h-1},u_h$ such that
$v_i$ is a child of $u_i$, $u_{i+1}$ is a descendant of $v_i$ and, if $A_i$ is the nonterminal labelling $u_i$, then 
$A_0, \ldots, A_{i}$ do not occur in the subtree $t(v_i)$ and hence do not occur in $t(u_{i+1})$.  
In particular, $A_0, \ldots, A_{h}$ are pairwise different, which contradicts $h > |N|$.

Now, that we know that $\hs_A \leq |N|$ for every $A \in N$, we can easily compute the exact values $\hs_A$
 by a simple fixpoint iteration process.
Initially we set $\hs_A :=0$ for every $A \in N$. Then, as long as there is a production $(A \to BC) \in P$ with $A,B,C \in N$ and
$\hs_A < s(\hs_B, \hs_C)$, we set $\hs_A := s(\hs_B, \hs_C)$. After at most $|N|^2$ many steps we must reach a fixpoint
(for each of the $|N|$ many nonterminals $A \in N$ the value $\hs_A$ can only increase $|N|$ times).

$\P$-hardness of $\CNF^{\ge k}$ for $k \ge 1$ can be shown by a straightforward
reduction from the emptiness problem for CNF-grammars.
\end{proof}

\begin{proof}[Proof of Theorem~\ref{thm-CNF}(ii)]
Membership in $\NP$ was shown in \cite{LRZ25}. 
It suffices to show $\NP$-hardness for $k = 2$. For this, we present a reduction from
\emph{exact 3-hitting set} (X3HS):
\begin{itemize}
\item Input: a finite set $M$ and a non-empty set $\mathcal{B} \subseteq 2^M$ of subsets of $M$, all of size 3.
\item Question: Is there a subset $S \subseteq M$ such that $|S \cap C|=1$ for all $C \in \mathcal{B}$? 
\end{itemize}
X3HS is the same problem as positive 1-in-3-SAT, which is $\NP$-complete \cite[Problem LO4]{GarJoh79}. 

Fix the set $M$ and a subset $\mathcal{B} \subseteq 2^M$ with all $C \in \mathcal{B}$ of size $3$. W.l.o.g.\ assume that
$M  = \{1,2,\ldots, n\}$ and fix an arbitrary ordering $C_1, C_2, \ldots, C_m$ of the subsets in $\mathcal{B}$.
We will construct a CNF-grammar $G$ such that there is a derivation tree of $G$ with Strahler number at least two if and only if
 there is a subset $S \subseteq \{1,\ldots,n\}$ such that $|S \cap C|=1$ for all $C \in \mathcal{B}$.
 
 In order to make the grammar more readable we use the following notation below. If we write in a right-hand side
 $[AB]$ \label{page-[AB]} for nonterminals $A$ and $B$, then $[AB]$ is another nonterminal with the unique production $[AB] \to AB$ and this
 production is not explicitly listed. Moreover, this notation will be nested, i.e., $A$ and $B$ can be also of the form $[CD]$.
 In Figure~\ref{fig coNP} such nonterminals are depicted as filled circles. 
 With this notation, the productions of our CNF-grammar $G$ are as follows: 
  \begin{eqnarray*}
 E & \to & \varepsilon \\
 A_k & \to & I_k E \mid O_k E \text{ whenever $1 \leq k \leq n$} \\
 I_k & \to & A_{k+1} E \mid \varepsilon   \text{ whenever $1 \leq k \leq n-1$} \\
  I_n & \to & B_1 E \mid \varepsilon  \\
 O_k & \to & A_{k+1} E \mid \varepsilon   \text{ whenever $1 \leq k \leq n-1$} \\
 O_n & \to & B_1 E \mid \varepsilon  \\
 B_j & \to & B_{j+1} [[O_a [I_b I_c]] \text{ whenever $1 \leq j \leq m$ and $C_j = \{a,b,c\}$} \\
 B_{m+1} & \to & EE 
 \end{eqnarray*}
 The start nonterminal is $A_1$. Note that there are six productions of the form $B_j  \to  B_{j+1} [[O_a [I_b I_c]]$
 corresponding to the six permutations of the set $C_j = \{a,b,c\}$ (we could restrict to three productions since the order
 between $I_b$ and $I_c$ is not important for the following arguments).
 
 \begin{figure}[t]
 \tikzset{alpha/.style={inner sep = 1pt, fill=white}}
  \tikzset{round/.style={inner sep = 1.5pt, circle, fill=black}}
\centering{
\begin{tikzpicture}

 \begin{scope}[scale=.8]   
\draw (3,0) node[alpha] (A1) {$A_1$} 
           -- ++(-.4,-1)  node[alpha]  (X1) {$X_1$}
           -- ++(-.4,-1)  node[alpha]  (A2) {$A_2$} 
           -- ++(-.4,-1)  node[alpha]  (X2) {$X_2$} 
           -- ++(-.4,-1)  ;
           
\draw[dashed] (1.4,-4) node[alpha] (A3) {$A_3$} -- ++(-.8,-2);

\draw (0.6,-6) node[alpha] (Xi-1) {$\ \ \  X_{i-1}$}
           -- ++(-.4,-1)  node[alpha] (Ai) {$A_i$} 
           -- ++(-.4,-1)  node[alpha]  (Xi) {$X_i$};

\draw (A1) -- ++(.4,-1) node[alpha]  {$E$} ;
\draw (X1) -- ++(.4,-1) node[alpha]  {$E$} ;
\draw (A2) -- ++(.4,-1) node[alpha]  {$E$} ;
\draw (X2) -- ++(.4,-1) node[alpha]  {$E$} ;
\draw (A3) -- ++(.4,-1) node[alpha]  {$E$} ;
\draw (Xi-1) -- ++(.4,-1) node[alpha]  {$E$} ;
\draw (Ai) -- ++(.4,-1) node[alpha]  {$E$} ;

\draw (A1) ++(3,0) node[alpha] (A1') {$A_1$} 
           -- ++(-.4,-1)  node[alpha]  (X1') {$X_1$}
           -- ++(-.4,-1)  node[alpha]  (A2') {$A_2$} 
           -- ++(-.4,-1)  node[alpha]  (X2') {$X_2$} 
           -- ++(-.4,-1)  ;
           
\draw[dashed] (4.4,-4) node[alpha] (A3') {$A_3$} -- ++(-.8,-2);

\draw (3.6,-6) node[alpha] (Xn-1) {$\ \ \ \  X_{n-1}$}
           -- ++(-.4,-1)  node[alpha] (An) {$A_n$} 
           -- ++(-.4,-1)  node[alpha]  (Xn) {$X_n$}
           -- ++(-.4,-1)  node[alpha]  (B1) {$B_1$};
          
\draw (A1') -- ++(.4,-1) node[alpha]  {$E$} ;
\draw (X1') -- ++(.4,-1) node[alpha]  {$E$} ;
\draw (A2') -- ++(.4,-1) node[alpha]  {$E$} ;
\draw (X2') -- ++(.4,-1) node[alpha]  {$E$} ;
\draw (A3') -- ++(.4,-1) node[alpha]  {$E$} ;
\draw (Xn-1) -- ++(.4,-1) node[alpha]  {$E$} ;
\draw (An) -- ++(.4,-1) node[alpha]  {$E$} ;
\draw (Xn) -- ++(.4,-1) node[alpha]  {$E$} ;

\draw (A1') ++(3,0) node[alpha] (B1) {\small $B_1$} 
          -- ++(-.4,-1)  node[alpha]  (B2) {\small $B_2$}
          -- ++(-.4,-1) ;
           
\draw[dashed] (8.2,-2) node[alpha] (B3) {\small $B_3$} -- ++(-.8,-2);

\draw (7.4,-4) node[alpha] (Bm-1) {\small $\ \ \ \  B_{m-1}$}
           -- ++(-.4,-1)  node[alpha] (Bm) {\small $B_m$}
           -- ++(-.4,-1)  node[alpha] (C) {\small $\ \ \ \  B_{m+1}$}
           -- ++(-.4,-1)  node[alpha] (D1) {$E$} ;
    \end{scope} 
 
 \begin{scope}[scale=1] 
 \draw (B1) -- ++(.7,-1) node[round] (B1') {} ;
 \draw (B2) -- ++(.7,-1) node[round] (B2') {} ;
 \draw (B3) -- ++(.7,-1) node[round] (B3') {} ;
 \draw (Bm-1) -- ++(.7,-1) node[round] (Bm-1') {} ;
 \draw (Bm) -- ++(.7,-1) node[round] (Bm') {} ;
 \draw (C) -- ++(.7,-1) node[alpha] (D2) {$E$} ;
 
 \draw (B1') -- ++(.35,-.9) node[alpha] {\small $\ O_{x_1}$} ;
 \draw (B1') -- ++(1.2,-1) node[round] (B1'') {} ; 
 \draw (B1'') -- ++(.1,-.8) node[alpha] {\small $I_{y_1}\! \! \! \!$} ;
 \draw (B1'') -- ++(1.2,-.8) node[alpha] {\small $\! I_{z_1}$} ;
 
 \draw (B2') -- ++(.35,-.9) node[alpha] {\small $\ O_{x_2}$};
 \draw (B2') -- ++(1.2,-1) node[round] (B2'') {} ; 
 \draw (B2'') -- ++(.1,-.8) node[alpha] {\small $I_{y_2}\! \! \! \!$} ;
 \draw (B2'') -- ++(1.2,-.8) node[alpha] {\small $\! I_{z_2}$} ;
 
 \draw (B3') -- ++(.35,-.9) node[alpha] {\small $\ O_{x_3}$};
 \draw (B3') -- ++(1.2,-1) node[round] (B3'') {} ; 
 \draw (B3'') -- ++(.1,-.8) node[alpha] {\small $I_{y_3}\! \! \! \!$} ;
 \draw (B3'') -- ++(1.2,-.8) node[alpha] {\small $\! I_{z_3}$} ;
 
 \draw (Bm-1') -- ++(.35,-1) node[alpha] {\small $O_{x_{m-1}}  \! \! \!\! \! \!\!\!\!\!$};
 \draw (Bm-1') -- ++(1.2,-1) node[round] (Bm-1'') {} ; 
 \draw (Bm-1'') -- ++(.1,-.8) node[alpha] {\small $I_{y_{m-1}} \! \! \!\! \! \!\!\!\!\!$} ;
 \draw (Bm-1'') -- ++(1.2,-.8) node[alpha] {\small $I_{z_{m-1}}  \! \! \! $} ;
 
 \draw (Bm') -- ++(.35,-1) node[alpha] {\small $O_{x_{m}} \! \! \!\! \!$};
 \draw (Bm') -- ++(1.2,-1) node[round] (Bm'') {} ; 
 \draw (Bm'') -- ++(.1,-.8) node[alpha] {\small $I_{y_{m}} \! \! \! \!$} ;
 \draw (Bm'') -- ++(1.2,-.8) node[alpha] {\small $I_{z_{m}}$} ;
 \end{scope}
 \end{tikzpicture} }
\caption{\label{fig coNP} An acyclic derivation tree for the grammar $G$ (proof of Theorem~\ref{thm-CNF}(ii)) has either the form 
shown on the left, or it results from merging the tree shown in the middle with the tree shown on the right in the $B_1$-labelled node.
Every $X_k$ is either $I_k$ or $O_k$.}
\end{figure}
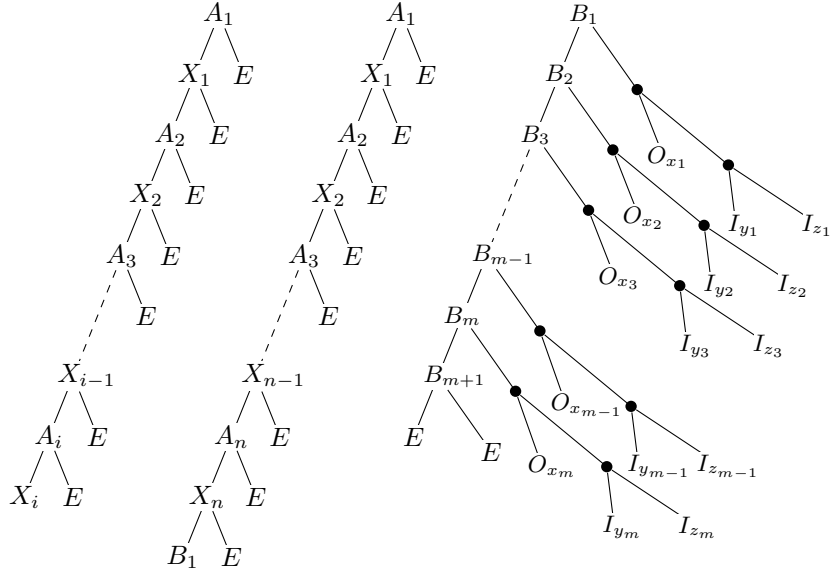

 Consider first an acyclic derivation tree $t$ rooted in $A_1$ with Strahler number at least two. The top part of every derivation tree rooted
 in $A_1$ must have one of the two shapes shown in Figure~\ref{fig coNP} (left and middle tree), where
 $X_k \in \{ I_k, O_k \}$.
 A left tree is a complete (acyclic) derivation tree with Strahler number $1$. Hence, the top part of $t$ must have the middle shape from 
  Figure~\ref{fig coNP}. It defines the
subset $S = \{ k : 1 \leq k \leq n, X_k = I_k \}$. 
 From the leaf $B_1$ we have to expand the derivation tree $t$ further and this results in a bottom part tree as shown in Figure~\ref{fig coNP} (right tree), where for every $1 \le j \le m$ we have $C_j = \{ x_j, y_j, z_j\}$.
Since the tree $t$ (obtained by merging the top part from Figure ~\ref{fig coNP} (middle tree) with the 
bottom part from Figure ~\ref{fig coNP} (right tree), where the merging is done by identifying the $B_1$-labelled nodes)
 is an acyclic derivation tree we must have $x_m \in S$, $y_m \notin S$, and $z_m \notin S$ for every $j \in \{1, \ldots, m\}$.
 Therefore, our X3HS-instance is positive. 
 
 Vice versa, if there is a subset $S \subseteq \{1,\ldots,n\}$ such that $|S \cap C|=1$ for every $C \in \mathcal{B}$, then we obtain an acyclic 
 derivation tree $t$ with Strahler number two by merging the middle and right tree from Figure~\ref{fig coNP}, where we set
 $X_k = I_k$ if $k \in S$ and $X_k = O_k$ if $k \notin S$ in the middle tree. 
 Moreover, if $C_j = \{a,b,c\}$ with $\{a\} = S \cap C$ then we set
 $x_j = a$, $y_j = b$ and $z_j = c$ (or $y_j = c$ and $z_j = b$) in the right tree.
 \end{proof}
 
 \begin{proof}[Proof of Theorem~\ref{thm-CNF}(iii)]
 We first show that $\CNFac$ is in $\PSPACE$. Let $G = (N,S, P)$ be a CNF-grammar and $k \in \mathbb{N}$.
 W.l.o.g.\ we can assume that $G$ is productive.
  We have to check whether $G$ produces an acyclic derivation tree $t$ with $\hs(t) \geq k$. 
  We devise an alternating polynomial time algorithm for this (recall that $\PSPACE$ is equal to alternating polynomial time).
  
   Let $T$ be the set of all triples $(A, U, i)$ such that $A \in N$, $A \in U \subseteq N$ and $0 \le i \le k$.
 The algorithm stores a triple $\tau \in T$, which is initially set to $(S, \{S\}, k)$.
  A triple $(A,U,i) \in T$ stands for the goal of finding
 an acyclic derivation tree $t$ with root $A$ and $\hs(t) \geq i$ such that in addition none of the 
 nonterminals in $U$ appears in $t$ except for the $A$ at the root.
 
 Assume that $\tau = (A, U, i)$. There are three cases:
 
 \medskip\noindent
 \emph{Case 1:}  $i = 0$ and $(A \to \varepsilon) \in P$. Then the algorithm accepts.
 
 \medskip\noindent
 \emph{Case 2:} $i = 0$ and $(A \to \varepsilon) \notin P$. The algorithm guesses nondeterministically a production $A \to A_1 A_2$  such that $A_1 \notin U$ and $A_2 \notin U$ (if such a production does not exist then it rejects). 
 Then it universally guesses an $i \in \{1,2\}$, sets $\tau := (A_i, U \cup \{A_i\}, 0)$ and continues.
 
 \medskip\noindent
 \emph{Case 3:} $i > 0$. The algorithm guesses nondeterministically a production $A \to A_1 A_2$  such that $A_1 \notin U$ and $A_2 \notin U$
 (if such a production does not exist then it rejects).  It then branches existentially to one of the following three continuations:
 \begin{itemize}
\item universally branch to $\tau := (A_1,  U \cup \{A_1\}, i)$ or $\tau := (A_2,  U \cup \{A_2\}, 0)$,
\item universally branch to $\tau := (A_2,  U \cup \{A_2\}, i)$ or $\tau := (A_1,  U \cup \{A_1\}, 0)$,
\item universally branch to $\tau := (A_1,  U \cup \{A_1\}, i-1)$ or $\tau := (A_2,  U \cup \{A_2\}, i-1)$.
\end{itemize}
In each case, the algorithm continues after resetting $\tau$.

It is easy to see that this algorithm is correct. It runs in polynomial time, since in every step the size of the set $U$ in the current triple $\tau$ gets larger.

Let us now come to $\PSPACE$-hardness. We show a reduction from \textsf{QBF}, i.e., the problem whether a quantified Boolean formula
is true. Consider a quantified Boolean formula $\psi = Q_1 x_1 Q_2 x_2 \cdots Q_n x_n \phi$, where
$Q_i \in \{ \exists, \forall \}$ for all $1 \le i \le n$ and  $\phi = \phi(x_1, \ldots, x_n)$ is a Boolean formula containing the variables
$x_1, \ldots, x_n$.
We construct a CNF-grammar $G$ and a number $k$ such that $\psi$ is true if and only if $G$ has an acyclic derivation tree of Strahler number
at least $k$. For the proof we reuse the construction from the proof of Theorems~\ref{thm-main-NC1} 
($\uNC^1$-hardness).
The CNF-grammar $G$ contains the following productions for all $1 \le i \leq n$, where $A_1$ is the start nonterminal of $G$
(as in the previous proof of the $\NP$-hardness for $\CNFack$ 
we use new nonterminals of the form $[AB]$ for nonterminals $A$ and $B$):
\begin{eqnarray}
E & \to & \varepsilon \nonumber \\
A_i & \to & 
\begin{cases}
[F_i F_i ] [T_i T_i]                                                           
& \text{ if } Q_i = \forall \\[2mm]
\big[[F_i F_i] T_i \big] \big[ F_i [T_i T_i]  \big] 
& \text{ if } Q_i = \exists
\end{cases} \label{rule-A_i} \\
F_i & \to & A_{i+1} E \mid \varepsilon \nonumber \\
T_i & \to & A_{i+1} E \mid \varepsilon \nonumber
\end{eqnarray}
\begin{figure}[t]
 \tikzset{alpha/.style={inner sep = 0pt, fill=white}}
  \tikzset{round/.style={inner sep = 1.2pt, circle, fill=black}}
\centering{
\begin{tikzpicture}
\node[alpha] (A) {$A_i$};
\node[round, below left = .4cm and .8cm of A] (A1) {};
\node[round, below right = .4cm and .8cm  of A] (A2) {};
\node[alpha, below left = .2cm and .4cm of A1] (A11) {\small $F_i$};
\node[alpha, below right = .2cm and .4cm  of A1] (A12) {\small $F_i$};
\node[alpha, below left = .2cm and .4cm of A2] (A21) {\small $T_i$};
\node[alpha, below right = .2cm and .4cm  of A2] (A22) {\small $T_i$};

\draw (A) -- (A1);
\draw (A) -- (A2);
\draw (A1) -- (A11);
\draw (A1) -- (A12);
\draw (A2) -- (A21);
\draw (A2) -- (A22);

\node[alpha, right = 5cm of A] (A) {$A_i$};
\node[round, below left = .4cm and .8cm of A] (A1) {};
\node[round, below right = .4cm and .8cm  of A] (A2) {};
\node[round, below left = .2cm and .4cm of A1] (A11) {};
\node[alpha, below right = .2cm and .4cm  of A1] (A12) {\small $T_i$};
\node[alpha, below left = .2cm and .4cm of A2] (A21) {\small $F_i$};
\node[round, below right = .2cm and .4cm  of A2] (A22) {};

\node[alpha, below left = .2cm and .4cm of A11] (A111) {\small $F_i$};
\node[alpha, below right = .2cm and .4cm  of A11] (A112) {\small $F_i$};
\node[alpha, below left = .2cm and .4cm of A22] (A221) {\small $T_i$};
\node[alpha, below right = .2cm and .4cm  of A22] (A222) {\small $T_i$};

\draw (A) -- (A1);
\draw (A) -- (A2);
\draw (A1) -- (A11);
\draw (A1) -- (A12);
\draw (A2) -- (A21);
\draw (A2) -- (A22);
\draw (A11) -- (A111);
\draw (A11) -- (A112);
\draw (A22) -- (A221);
\draw (A22) -- (A222);
\end{tikzpicture}}
\caption{The cases $Q_i = \forall$ (left) and $Q_i = \exists$ (right).} \label{fig-exists}
\end{figure}
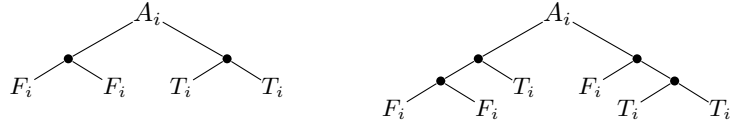
The trees produced by the productions in \eqref{rule-A_i} are shown in Figure~\ref{fig-exists}.
Note that these trees realize the Strahler algebra expressions in \eqref{eq-conjunction} and \eqref{eq-disjunction}
(replace $x$ and $y$ in \eqref{eq-conjunction} and \eqref{eq-disjunction} by $F_i$ and $T_i$, respectively)
and therefore  implement conjunction (for the case $Q_i = \forall$) and disjunction (for the case $Q_i = \exists$).
The nonterminal $F_i$ (resp., $T_i$) stands for setting the Boolean variable $x_i$ to false (resp., true).

In addition, $G$ contains productions for the Boolean formula $\phi(x_1, \ldots, x_n)$. 
In the following, we identify the formula $\phi$ with its syntax tree. This is a binary tree, where all internal nodes are labelled 
with $\wedge$ or $\vee$ and all leaves are labelled with a literal (a variable $x_i$ or a negated variable $\neg x_i$).
Moreover, we can assume w.l.o.g.~that every path from the root to a leaf in $\phi$ has the same length $h$ (the height of $\phi$).  
Every node $A$ of $\phi$ is a nonterminal of $G$.
We identify the nonterminal $A_{n+1}$ with the root node of $\phi$.
For every internal node $A$ of $\phi$ with left child $B$ and right child  $C$ we introduce the following production:
\[ A  \to 
\begin{cases}
[BB] [CC]                                                      & \text{ if $A$ is labelled with $\wedge$}  \\[1mm]
\big[ [BB] C \big] \big[ B [CC] \big]  & \text{ if $A$ is labelled with $\vee$} 
\end{cases} 
\]
Again, the trees corresponding to the right-hand sides of these productions realize in the Strahler algebra the operations
from  \eqref{eq-conjunction} and \eqref{eq-disjunction} and therefore implement conjunction and disjunction.

Finally, for every leaf node $A$ of $\phi$ we introduce the following productions:
\begin{equation} \label{rule-A}
A  \to 
\begin{cases}
\big[ [EE][EE] \big] F_i \mid  [EE] T_i  & \text{ if $A$ is labelled in $\phi$ with $x_i$}  \\[1mm]
\big[ [EE][EE] \big] T_i \mid  [EE] F_i   & \text{ if $A$ is labelled in $\phi$ with $\neg x_i$} 
\end{cases} 
\end{equation}
Note that the Strahler number of the trees corresponding to $\big[ [EE][EE] \big] F_i$ and 
$\big[ [EE][EE] \big] T_i$ (resp.,  $[EE] T_i$ and $[EE] F_i$) is $2$ (resp., $1$).

Let us now analyze the Strahler number of acyclic derivation trees of $G$. 
First of all, there is a unique acyclic derivation tree for $G$, where the productions $T_i \to \varepsilon$ and $F_i \to \varepsilon$ are used
only for those occurrences of $T_i$ and $F_i$ that are produced with the productions from \eqref{rule-A}. In other words, if an 
occurrence of $T_i$ (resp., $F_i$) is produced with \eqref{rule-A_i}, then this occurrence is expanded using the production
$T_i \to A_{i+1} E$ (resp., $F_i \to A_{i+1} E$).  Let us call this acyclic derivation tree the \emph{big tree} $t_{\text{big}}$. 
All other acyclic derivation trees are called
\emph{small}. They can be obtained from $t_{\text{big}}$ by removing from some internal nodes $v$ that are labelled with $T_i$ or $F_i$
all descendants below $v$.

It suffices to show the following:
\begin{enumerate}[(A)]
\item If $\psi$ is true then $\hs(t_{\text{big}}) = 2h+2n+2$.
\item If $\psi$ is false then $\hs(t_{\text{big}}) = 2h+2n+1$.
\end{enumerate}
We can then set $k = 2h+2n+2$. Note that if $\hs(t_{\text{big}}) = 2h+2n+1$ then also every small 
acyclic derivation tree has Strahler number at most $2h+2n+1$
 (removing subtrees from $t_{\text{big}}$ cannot
make the Strahler number larger).

To show the statements (A) and (B), let us first consider an $A_{n+1}$-labelled node $u$ of $t_{\text{big}}$ (recall that the nonterminal
$A_{n+1}$ is also the root of the Boolean formula $\phi$). Then $u$ determines a truth assignment 
$\eta_u : \{x_1, \ldots, x_n\} \to \{0,1\}$ as follows:
$\eta_u(x_i) = 1$ if the nonterminal $T_i$ occurs on the path from the root of $t_{\text{big}}$ to $u$ and
$\eta_u(x_i) = 0$ if the nonterminal $F_i$ occurs on the path from the root of $t_{\text{big}}$ to $u$.
 We can extend this mapping $\eta_u$ to any node $A$ of $\phi$ by evaluating the  subexpression rooted in $A$ under
the truth assignment $\eta_u$. For $\eta_u(A_{n+1})$ we write $\eta_u(\phi)$.

 \begin{claim} \label{claim-eta_t}
 Let  $u$ be an $A_{n+1}$-labelled node of $t_{\text{big}}$, let $A$ be a node of $\phi$ of height $d$ in $\phi$
 and let $v$ be an $A$-labelled node in the subtree $t_{\text{big}}(u)$. Then the following holds:
\begin{enumerate}[(i)]
\item If $\eta_u(A) = 0$ then $\hs(t_{\text{big}}(v)) = 2d+1$.
\item If $\eta_u(A) = 1$ then $\hs(t_{\text{big}}(v)) = 2d+2$.
\end{enumerate} 
\end{claim}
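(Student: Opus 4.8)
The plan is to prove Claim~\ref{claim-eta_t} by induction on the height $d$ of the node $A$ in the syntax tree of $\phi$, with the $A_{n+1}$-labelled node $u$ fixed throughout. The first thing to record is that since $t_{\text{big}}$ is obtained by always expanding the occurrences of $T_i$ and $F_i$ produced by the rules \eqref{rule-A_i} via $T_i\to A_{i+1}E$ and $F_i\to A_{i+1}E$, the path from the root of $t_{\text{big}}$ to $u$ runs through the $n$ quantifier blocks one after another and, inside the block for $A_i$, through exactly one occurrence of $T_i$ or $F_i$; call it $X_i$, so that $\eta_u(x_i)=1$ iff $X_i=T_i$. Consequently, for any node $v$ of $\phi$ lying in $t_{\text{big}}(u)$, the only occurrences of $T_i$ and $F_i$ strictly above $v$ on the path from the root of $t_{\text{big}}$ are precisely $X_1,\dots,X_n$.

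\textbf{Base case $d=0$.} Here $A$ is a leaf of $\phi$ carrying a literal over some variable $x_i$, so $v$ is expanded by one of the two productions in \eqref{rule-A}. Since $t_{\text{big}}$ is acyclic and $X_i$ lies above $v$, the alternative that would reintroduce the nonterminal $X_i$ is forbidden, and hence the production used at $v$ is uniquely determined. I then read the Strahler number off the chosen right-hand side, using that its trailing $T_i$ or $F_i$ is a leaf (it comes from \eqref{rule-A}): a right-hand side of the form $\big[ [EE][EE] \big]\, Y$ gives a subtree of Strahler number $s\big(s(s(0,0),s(0,0)),0\big)=2$, while one of the form $[EE]\, Y$ gives Strahler number $s(s(0,0),0)=1$. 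A short case distinction on the literal at $A$ and on whether $X_i=T_i$ or $X_i=F_i$ shows that this value is $2$ exactly when $\eta_u(A)=1$ and $1$ exactly when $\eta_u(A)=0$, i.e.\ $2d+2$ resp.\ $2d+1$.

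\textbf{Inductive step $d\ge 1$.} Now $A$ is an internal node of $\phi$ with children $B$ and $C$, both of height $d-1$, and $v$ is expanded by the unique production for $A$. The $B$- and $C$-labelled nodes below $v$ again lie in $t_{\text{big}}(u)$ with the \emph{same} $u$, so the induction hypothesis gives that every $B$-subtree below $v$ has Strahler number $\beta$ and every $C$-subtree Strahler number $\gamma$, where $\beta,\gamma\in\{2d-1,2d\}$ and the value $2d-1$ signals ``false''. If $A$ is an $\wedge$-node, its right-hand side $[BB][CC]$ evaluates in the Strahler algebra to $f_\wedge(\beta,\gamma)$, by the second form in \eqref{eq-conjunction}; by \eqref{eq-wedge} with $a=2d-1$, this equals $2d+1$ when at least one of $\beta,\gamma$ is $2d-1$ and $2d+2$ when $\beta=\gamma=2d$, which, since $\eta_u(A)=\eta_u(B)\wedge\eta_u(C)$, is exactly what the claim predicts. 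The $\vee$-case is symmetric: $[[BB]C][B[CC]]$ evaluates to $f_\vee(\beta,\gamma)$ (second form in \eqref{eq-disjunction}), and the values of $f_\vee$ on $\{2d-1,2d\}^2$ recorded in \eqref{eq-vee} give $f_\vee(\beta,\gamma)=2d+1$ iff $\beta=\gamma=2d-1$, i.e.\ iff $\eta_u(A)=0$.

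I expect the only delicate part to be the bookkeeping in the base case — checking that in $t_{\text{big}}$ the production at a leaf of $\phi$ is genuinely forced by acyclicity, and that the forbidden alternative is precisely the one contradicting $\eta_u$; the rest is a routine evaluation in the Strahler algebra mirroring the Boolean semantics of $\phi$. Given Claim~\ref{claim-eta_t}, statements (A) and (B) follow by instantiating it at $u$ being the root of $\phi$, of height $h$, so that $\hs(t_{\text{big}}(u))$ is $2h+1$ or $2h+2$ according to $\eta_u(\phi)$, and then propagating this value up through the $n$ quantifier blocks: each block adds $2$ to the Strahler number and, by the shapes in Figure~\ref{fig-exists}, realizes $f_\wedge$ (for $\forall$) resp.\ $f_\vee$ (for $\exists$) applied to the values obtained by setting $x_i$ to false and to true, exactly as the internal nodes of $\phi$ did. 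This yields $\hs(t_{\text{big}})=2h+2n+2$ iff $\psi$ is true and $2h+2n+1$ otherwise, and we may take $k=2h+2n+2$.
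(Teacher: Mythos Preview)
Your proof is correct and follows essentially the same approach as the paper: induction on $d$, with the base case using acyclicity of $t_{\text{big}}$ to force the production applied at a leaf of $\phi$ (and reading off the Strahler number $1$ or $2$), and the inductive step evaluating the $\wedge$- and $\vee$-shaped right-hand sides via the identities for $f_\wedge$ and $f_\vee$ from \eqref{eq-wedge} and \eqref{eq-vee}. The extra paragraph you append about propagating through the quantifier blocks is really the content of the separate Claim~\ref{claim-quantifiers} rather than part of this claim, but it does no harm.
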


\begin{claimproof}
We show the statement by induction on $d$. If $d=0$ then $A$ is a leaf of $\phi$ that is labelled with some 
literal $x_i$ or $\neg x_i$. If $A$ is labelled with $x_i$ then one of the two productions 
$A \to \big[ [EE][EE] \big] F_i$ or $A \to  [EE] T_i$ is applied in the node $v$ of  $t_{\text{big}}$.
If $\eta_u(A) = \eta_u(x_i) = 1$ then $T_i$ appears on the path from the root to $u$. Since $v$ is a descendant
of $u$ and $t_{\text{big}}$ is acyclic, the production $A \to \big[ [EE][EE] \big] F_i$ must be applied in $v$,
which implies that $\hs(t_{\text{big}}(v)) = 2$. Similarly, if 
$\eta_u(A) =  \eta_u(x_i) = 0$ then the production $A \to  [EE] T_i$ must be applied in $v$ and hence $\hs(t_{\text{big}}(v)) = 1$.
For the case that $A$ is labelled with $\neg x_i$ one can argue analogously.

Assume now that $A$ is labelled with $\wedge$ and let $B$ and $C$ be the two children of $A$ in the Boolean formula $\phi$.
Then, $d \geq 1$ and $B$ and $C$ have height $d-1$ in $\phi$. In the node $v$ of $t_{\text{big}}$ the production 
$A \to [BB] [CC]$ is applied. Let $v_1$ and $v_2$ be the two $B$-labelled grandchildren of $v$ and $v_3$ and $v_4$ be the 
two $C$-labelled grandchildren of $v$ that are produced by $A \to [BB] [CC]$. By induction we get: 
\begin{itemize}
\item If $\eta_u(B) = 0$ then $\hs(t_{\text{big}}(v_1)) = \hs(t_{\text{big}}(v_2)) = 2d-1$.
\item If $\eta_u(B) = 1$ then $\hs(t_{\text{big}}(v_1)) = \hs(t_{\text{big}}(v_2)) = 2d$.
\item If $\eta_u(C) = 0$ then $\hs(t_{\text{big}}(v_3)) = \hs(t_{\text{big}}(v_4)) = 2d-1$.
\item If $\eta_u(C) = 1$ then $\hs(t_{\text{big}}(v_3)) = \hs(t_{\text{big}}(v_4)) = 2d$.
\end{itemize} 
Then the calculations from \eqref{eq-wedge} show (i) and (ii) from the claim.
For the case that $A$ is labelled with $\vee$ we can argue similarly, using \eqref{eq-vee}.
\end{claimproof}
Consider now an $A_i$-labelled node $u$ of $t_{\text{big}}$ for some $1 \leq i \leq n+1$. It determines a partial truth assignment
$\eta_u : \{ x_1, \ldots, x_{i-1} \} \to \{0,1\}$ by the path from the root to $u$: if $T_j$ (resp., $F_j$) lies on this path
for some $1 \le j \le i-1$ 
 then $\eta_u(x_j) = 1$ 
(resp., $\eta_u(x_j) = 0$). Since the subformula $Q_i x_i Q_{i+1} x_{i+1} \cdots Q_n x_n \phi$ of $\psi$ has the free variables
$x_1, \ldots, x_{i-1}$ we can define the truth value $\eta_u(Q_i x_i Q_{i+1} x_{i+1} \cdots Q_n x_n \phi)$ in the obvious way.
Recall that $h$ is the height of $\phi$.

 \begin{claim} \label{claim-quantifiers}
Let $u$ be an $A_i$-labelled node of $t_{\text{big}}$ for some $1 \leq i \leq n+1$. 
Then the following holds:
\begin{itemize}
\item If $\eta_u(Q_i x_i Q_{i+1} x_{i+1} \cdots Q_n x_n \phi) = 1$ then $\hs(t_{\text{big}}(u)) = 2h+2(n-i+1) + 2$.
\item If $\eta_u(Q_i x_i Q_{i+1} x_{i+1} \cdots Q_n x_n \phi) = 0$ then $\hs(t_{\text{big}}(u)) = 2h+2(n-i+1) + 1$.
\end{itemize}
\end{claim}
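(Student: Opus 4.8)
The plan is to prove Claim~\ref{claim-quantifiers} by a downward induction on $i$, running from $i=n+1$ down to $i=1$, with Claim~\ref{claim-eta_t} serving as the base case. For $i=n+1$ the node $u$ is by construction a node carrying the root of the Boolean formula $\phi$, which has height $h$ in $\phi$, and the partial assignment $\eta_u$ determined by the path to $u$ is in fact a total assignment of $x_1,\dots,x_n$, so $\eta_u(Q_{n+1}x_{n+1}\cdots Q_n x_n\,\phi)$ is simply $\eta_u(\phi)$. Applying Claim~\ref{claim-eta_t} with $A$ the root of $\phi$ (hence $d=h$) and $v=u$ gives $\hs(t_{\text{big}}(u)) = 2h+2$ when $\eta_u(\phi)=1$ and $\hs(t_{\text{big}}(u)) = 2h+1$ when $\eta_u(\phi)=0$; since $n-(n+1)+1=0$, this is exactly the claim for $i=n+1$.

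For the induction step I would fix $1\le i\le n$, assume the claim for $i+1$, and consider an $A_i$-labelled node $u$ of $t_{\text{big}}$ with associated partial assignment $\eta_u$ on $\{x_1,\dots,x_{i-1}\}$. In $t_{\text{big}}$ the node $u$ is expanded by production~\eqref{rule-A_i}, whose two shapes are drawn in Figure~\ref{fig-exists}; and by the very definition of the big tree every $F_i$- resp.\ $T_i$-labelled occurrence produced there is further expanded by $F_i\to A_{i+1}E$ resp.\ $T_i\to A_{i+1}E$. Hence each such $F_i$- or $T_i$-node has an $A_{i+1}$-subtree as its left child and the leaf $E$ (of Strahler number $0$) as its right child, and the $A_{i+1}$-node below it carries the extended assignment $\eta_u[x_i\mapsto 0]$ (below an $F_i$) resp.\ $\eta_u[x_i\mapsto 1]$ (below a $T_i$). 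The key step is to feed these $A_{i+1}$-subtrees into the induction hypothesis: writing $a_c$ for the Strahler number of the $A_{i+1}$-subtree under $x_i\mapsto c$, the hypothesis gives
\[
a_c \;=\;
\begin{cases}
2h+2(n-i)+2 & \text{if } \eta_u[x_i\mapsto c]\bigl(Q_{i+1}x_{i+1}\cdots Q_n x_n\,\phi\bigr)=1,\\[1mm]
2h+2(n-i)+1 & \text{otherwise},
\end{cases}
\]
for $c\in\{0,1\}$, and since $a_c\ge 1$ each $F_i$- resp.\ $T_i$-node then contributes Strahler number $s(a_c,0)=a_c$.

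It remains to compute $\hs(t_{\text{big}}(u))$ from the shape of~\eqref{rule-A_i}. Here the point of the construction is that for $Q_i=\forall$ the gadget $A_i\to[F_iF_i][T_iT_i]$ realizes, in the Strahler algebra, the operation $f_\wedge$ of~\eqref{eq-conjunction}, while for $Q_i=\exists$ the gadget $A_i\to\bigl[[F_iF_i]T_i\bigr]\bigl[F_i[T_iT_i]\bigr]$ realizes the operation $f_\vee$ of~\eqref{eq-disjunction}; unwinding the $[\,\cdot\;\cdot\,]$-nonterminals using $s(a_c,a_c)=a_c+1$ yields $\hs(t_{\text{big}}(u)) = f_\wedge(a_0,a_1)$ resp.\ $\hs(t_{\text{big}}(u)) = f_\vee(a_0,a_1)$. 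Writing $M:=2h+2(n-i)+2$, so that $a_0,a_1\in\{M-1,M\}$, the identities~\eqref{eq-wedge} and~\eqref{eq-vee} give $\hs(t_{\text{big}}(u)) = M+2 = 2h+2(n-i+1)+2$ exactly when $a_0=a_1=M$ (case $\forall$), resp.\ $\max(a_0,a_1)=M$ (case $\exists$), and $\hs(t_{\text{big}}(u)) = M+1 = 2h+2(n-i+1)+1$ otherwise. Since these conditions on $a_0,a_1$ are precisely $\eta_u(\forall x_i\,Q_{i+1}x_{i+1}\cdots\phi)=1$ resp.\ $\eta_u(\exists x_i\,Q_{i+1}x_{i+1}\cdots\phi)=1$, the induction closes; applying the claim to the root of $t_{\text{big}}$ (the unique $A_1$-labelled node, where $\eta_u$ is the empty assignment and $\eta_u(Q_1x_1\cdots Q_n x_n\,\phi)$ is the truth value of $\psi$) then yields statements (A) and (B).

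Most of this is bookkeeping that I would not spell out in full: the identities $\hs(t_{\text{big}}(u))=f_\wedge(a_0,a_1)$ and $\hs(t_{\text{big}}(u))=f_\vee(a_0,a_1)$ follow by exactly the same computation in the Strahler algebra that underlies the $\uNC^1$-hardness construction. The only points that need care are (a) matching, in the base case, the assignment $\eta_u$ read off from the path with the truth assignment used in Claim~\ref{claim-eta_t}, and (b) keeping track of which of the two extended assignments $\eta_u[x_i\mapsto 0]$, $\eta_u[x_i\mapsto 1]$ sits below which $F_i$/$T_i$ occurrence — both of which are immediate from the definition of $t_{\text{big}}$ and of the productions~\eqref{rule-A_i}.
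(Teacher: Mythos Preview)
Your proof is correct and follows essentially the same approach as the paper: a downward induction on $i$ with Claim~\ref{claim-eta_t} as the base case, and the inductive step analysing the subtree below an $A_i$-node via the $A_{i+1}$-subtrees hanging under the $F_i$/$T_i$-occurrences. The only difference is presentational: you package the Strahler computation through $f_\wedge$, $f_\vee$ and the identities~\eqref{eq-wedge}, \eqref{eq-vee}, whereas the paper expands the $\forall$-case explicitly node by node and leaves the $\exists$-case to the reader.
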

\begin{figure}[t]
 \tikzset{alpha/.style={inner sep = 0pt, fill=white}}
  \tikzset{round/.style={inner sep = 1.2pt, circle, fill=black}}
 \tikzset{%
  dots/.style args={#1per #2}{%
    line cap=round,
    dash pattern=on 0 off #2/#1
  }
}
\centering{
\begin{tikzpicture}
\node[alpha, label=right:{\small $\!\! u$}] (A) {$A_i$};
\node[round, below left = .6cm and 1.2cm of A] (A1) {};
\node[round, below right = .6cm and 1.2cm  of A] (A2) {};
\node[alpha, below left = .3cm and .6cm of A1, label=right:{\small $\!\! u_1$}] (A11) {\small $F_i$};
\node[alpha, below right = .3cm and .6cm  of A1, label=right:{\small $\!\! u_2$}] (A12) {\small $F_i$};
\node[alpha, below left = .3cm and .6cm of A2, label=right:{\small $\!\! u_3$}] (A21) {\small $T_i$};
\node[alpha, below right = .3cm and .6cm  of A2, label=right:{\small $\!\! u_4$}] (A22) {\small $T_i$};

\node[alpha, below left = .5cm and -.3cm of A11, label=right:{\small $\!\! v_1$}] (A111) {\small $A_{i+1}$};
\node[alpha, below right = .15cm and .3cm  of A11] (A112) {\small $E$};
\node[alpha, below left = .5cm and -.3cm of A12, label=right:{\small $\!\! v_2$}] (A121) {\small $A_{i+1}$};
\node[alpha, below right = .15cm and .3cm  of A12] (A122) {\small $E$};

\node[alpha, below left = .5cm and -.3cm of A21, label=right:{\small $\!\! v_3$}] (A211) {\small $A_{i+1}$};
\node[alpha, below right = .15cm and .3cm  of A21] (A212) {\small $E$};
\node[alpha, below left = .5cm and -.3cm of A22, label=right:{\small $\!\! v_4$}] (A221) {\small $A_{i+1}$};
\node[alpha, below right = .15cm and .3cm  of A22] (A222) {\small $E$};

\node[alpha, above right = .2cm and .4cm of A] (xi-1) {$X_{i-1}$};
\node[alpha, above  = .8cm of xi-1] (xi-2) {$X_{i-2}$};
\node[alpha, above  = 1.2cm of xi-2] (x2) {$X_2$};
\node[alpha, above  = .8cm of x2] (x1) {$X_1$};
\node[alpha, above  = .8cm of x1] (r) {$A_1$};

\draw[decorate,decoration=snake] (r) -- (x1);
\draw[decorate,decoration=snake] (x1) -- (x2);
\draw[decorate,decoration=snake, dots=30 per 1cm] (x2) -- (xi-2);
\draw[decorate,decoration=snake] (xi-2) -- (xi-1);
\draw (xi-1) -- (A);
\draw (A) -- (A1);
\draw (A) -- (A2);
\draw (A1) -- (A11);
\draw (A1) -- (A12);
\draw (A2) -- (A21);
\draw (A2) -- (A22);

\draw (A11) -- (A111);
\draw (A11) -- (A112);
\draw (A12) -- (A121);
\draw (A12) -- (A122);

\draw (A21) -- (A211);
\draw (A21) -- (A212);
\draw (A22) -- (A221);
\draw (A22) -- (A222);
\end{tikzpicture}}
\caption{The case $Q_i = \forall$ in the proof of Claim~\ref{claim-quantifiers}. 
Every $X_j$ is either $T_j$ or $F_j$; 
 $u, u_1$, $u_2$, $u_3$, $u_4$, $v_1$, $v_2$, $v_3$ and $v_4$ are the node names used in the proof of Claim~\ref{claim-quantifiers}.} \label{fig-claim-forall}
\end{figure}
\begin{claimproof}
We prove the statement by induction on $n-i+1$ starting with $i = n+1$ and ending with $i = 1$.
If $i = n+1$ then the statement follows from Claim~\ref{claim-eta_t} for $u=v$ and $d=h$.
Now assume that $1 \le i \le n$. If $Q_i = \forall$ then the production 
$A_i \to [F_i F_i ] [T_i T_i]$ is applied in $u$. 
Let $u_1$ and $u_2$ be the two $F_i$-labelled grandchildren of $u$ and $u_3$ and $u_4$ be the 
two $T_i$-labelled grandchildren of $u$ that are produced by $A_i \to [F_i F_i ] [T_i T_i]$. 
Moreover, let $v_j$ be the left child of $u_j$; it is labelled with $A_{i+1}$.

We obtain the pattern shown in Figure~\ref{fig-claim-forall}.
If $\eta_u(\forall_i x_i Q_{i+1} x_{i+1} \cdots Q_n x_n \phi) = 1$ then 
$\eta_{v_j}(Q_{i+1} x_{i+1} \cdots Q_n x_n \phi) = 1$ for all $1 \le j \le 4$. By induction we have
$\hs(t_{\text{big}}(u_j)) = \hs(t_{\text{big}}(v_j)) = 2h+2(n-i) + 2$ for all $1 \le j \le 4$, which yields
$\hs(t_{\text{big}}(u)) = 2h+2(n-i) + 4 = 2h+2(n-i+1) + 2$.
On the other hand, if $\eta_u(\forall_i x_i Q_{i+1} x_{i+1} \cdots Q_n x_n \phi) = 0$ then 
one of the following two cases holds: 
\begin{itemize}
\item $\eta_{v_1}(Q_{i+1} x_{i+1} \cdots Q_n x_n \phi) = \eta_{v_2}(Q_{i+1} x_{i+1} \cdots Q_n x_n \phi) = 0$,
\item $\eta_{v_3}(Q_{i+1} x_{i+1} \cdots Q_n x_n \phi) = \eta_{v_4}(Q_{i+1} x_{i+1} \cdots Q_n x_n \phi) = 0$.
\end{itemize}
W.l.o.g.~assume that the first case holds. By induction, we obtain 
$\hs(t_{\text{big}}(u_1)) =  \hs(t_{\text{big}}(u_2)) = 2h+2(n-i) + 1$ and 
$2h+2(n-i) + 1 \le \hs(t_{\text{big}}(u_3)) = \hs(t_{\text{big}}(u_4)) \le 2h+2(n-i) + 2$.
This implies  
$\hs(t_{\text{big}}(u)) = 2h+2(n-i) + 3 = 2h+2(n-i+1) + 1$. The case where $Q_i = \exists$ can be dealt with a similar
reasoning.
\end{claimproof} 
Claim~\ref{claim-quantifiers} yields for $i=1$ the above statements (A) and (B), which concludes the $\PSPACE$-hardness proof.
\end{proof}
We conclude the paper with some remarks on the 
$\P$-hardness of $\CNF^{\ge k}$ (Theorem~\ref{thm-CNF}(i)). Our $\P$-hardness proof uses
a trivial reduction from the emptiness problem for CNF-grammars.
One can eliminate this source of hardness by adding to the input a certificate, ensuring that $G$ is \emph{productive}. 
A \emph{p-certificate} for a CNF-grammar $G = (N,S,P)$ is a function $f : N \to N^*$ such that
(i) $(A \to f(A)) \in P$ for every $A \in N$ and 
(ii) the directed graph $(N, \{ (A,B) \in N \times N : \text{ $B$ occurs in $f(A)$} \})$ is acyclic. It is easy to see that
there is a p-certificate for $G$ if and only if $G$ is productive.

We define the problems $\rCNF^\ge$ and $\rCNF^{\ge k}$ in the same way as $\CNF^\ge$ and $\CNF^{\ge k}$, respectively, except 
that the input also contains a p-certificate for $G$. 

\begin{theorem} \label{theorem-rCNF-Ptime}
$\rCNF^\ge$ is $\P$-complete and 
$\rCNF^{\ge k}$ is $\NL$-complete for every $k \geq 2$.
\end{theorem}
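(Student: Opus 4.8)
The plan is to derive the $\P$ part from the $\P$-completeness of $\StNdag$ (Theorem~\ref{thm-compressed}(i)) and the $\NL$ part from an alternating logspace algorithm, in the spirit of the proof of Theorem~\ref{thm-compressed}(iii). Membership of $\rCNF^\ge$ in $\P$ is immediate: on input $(G,f,k)$ one discards the $p$-certificate $f$ and runs the polynomial-time procedure for $\CNF^\ge$ from Theorem~\ref{thm-CNF}(i) on $(G,k)$. For $\P$-hardness of $\rCNF^\ge$ I would reduce from $\StNdag$: from a binary DAG $\mathcal{D}=(V,v_0,\gamma)$ with $d(v)\in\{0,2\}$ and a threshold $k$, build the grammar with nonterminals $\{A_v:v\in V\}$, start nonterminal $A_{v_0}$, production $A_v\to A_{v_1}A_{v_2}$ if $\gamma(v)=v_1v_2$, and $A_v\to\varepsilon$ if $\gamma(v)=\varepsilon$. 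This grammar has the single derivation tree $\unfold(\mathcal{D})$, and sending each $A_v$ to its unique right-hand side is a $p$-certificate (acyclic because $\mathcal{D}$ is); hence $\hs_{A_{v_0}}=\hs(\unfold(\mathcal{D}))$, and the reduction, which is plainly logspace, is correct.

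For $\rCNF^{\ge k}$ I would first establish a characterization mirroring Claim~\ref{claim-NL}. Assume $G=(N,S,P)$ is productive (which the $p$-certificate guarantees) and let the \emph{child graph} of $G$ have vertex set $N$ and an edge from $A$ to $B$ whenever $(A\to BC)\in P$ or $(A\to CB)\in P$ for some $C\in N$. The claim is: for all $m\ge1$ and $A\in N$, one has $\hs_A\ge m$ if and only if some $B\in N$ is reachable from $A$ in the child graph and there is a production $(B\to CD)\in P$ with $\hs_C\ge m-1$ and $\hs_D\ge m-1$. For the ``if'' direction one assembles a derivation tree for $A$ from a spine following the chosen child-graph path (filling each off-spine child with an arbitrary derivation tree, available by productivity) together with derivation trees of Strahler number $\ge m-1$ below $C$ and below $D$ at the $B$-labelled node; since $s(x,y)\ge m$ whenever $x,y\ge m-1$ (by \eqref{eq-strahler-algebra}) and the Strahler number is nondecreasing towards the root, this tree has Strahler number $\ge m$. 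For the ``only if'' direction, in a derivation tree $t$ for $A$ with $\hs(t)\ge m$ one descends from the root, at each node passing into a child subtree whose Strahler number is still $\ge m$; as $t$ is finite this descent stops at a node both of whose child subtrees have Strahler number $\ge m-1$, and the nodes visited during the descent trace a child-graph path starting at $A$.

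Granting this, $\rCNF^{\ge k}$ is decided by an alternating logspace machine maintaining a single goal $[\hs_A\ge m]$ with $0\le m\le k$, initialized to $[\hs_S\ge k]$: if $m=0$ it accepts (using productivity); if $m\ge1$ it existentially guesses a child-graph path from $A$ to some $B$ (vertex by vertex, with a step counter bounded by $|N|$) and a production $B\to CD$, then universally guesses $i\in\{1,2\}$ and continues with $[\hs_C\ge m-1]$ or $[\hs_D\ge m-1]$. Each universal step decreases $m$, so the number of alternations is at most $k$, a constant; by the Immerman--Szelepcs\'enyi theorem~\cite[Corollary~2]{Imm88} (exactly as in the proof of Theorem~\ref{thm-compressed}(iii)) this gives $\rCNF^{\ge k}\in\NL$. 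For $\NL$-hardness with $k\ge2$ I would reduce from directed graph reachability: from $(V,E)$ and $s,t\in V$ build nonterminals $\{A_v:v\in V\}\cup\{Q_0,\dots,Q_{k-1}\}$ with rules $Q_0\to\varepsilon$ and $Q_j\to Q_{j-1}Q_{j-1}$ for $1\le j\le k-1$ (so $\hs_{Q_j}=j$), start nonterminal $A_s$, productions $A_v\to A_wQ_0$ for every $(v,w)\in E$ with $v\ne t$, productions $A_v\to\varepsilon$ for every $v\ne t$, and the single production $A_t\to Q_{k-1}Q_{k-1}$; the obvious assignment of right-hand sides is a $p$-certificate. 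By the characterization, if $v$ reaches $t$ then a reachability path induces a child-graph path from $A_v$ to $A_t$, whose production forks into two subtrees of Strahler number $k-1$, so $\hs_{A_v}\ge k$; if $v$ does not reach $t$ then every production $B\to CD$ with $B$ reachable from $A_v$ in the child graph has $Q_0\in\{C,D\}$, hence a child of Strahler number $0$, so $\hs_{A_v}\le1<k$. Thus $\hs_{A_s}\ge k$ iff $t$ is reachable from $s$, and the reduction is logspace (indeed $\uAC^0$).

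I expect the main obstacle to be pinning down the characterization of $\hs_A\ge m$ by ``child-graph reachability plus a single fork'' and proving it for possibly cyclic productive CNF-grammars: in particular the termination of the downward descent in the ``only if'' direction and the observation that one fork into two subtrees of Strahler number $\ge m-1$ always suffices (rather than having to track a finer quantity). With that lemma in hand, membership of $\rCNF^\ge$ in $\P$, the alternating-logspace simulation, the DAG-to-grammar reduction, and the reachability reduction all follow the patterns already used for Theorems~\ref{thm-compressed}(i) and~\ref{thm-compressed}(iii), requiring only routine checks.
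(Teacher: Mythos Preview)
Your proposal is correct and follows essentially the same approach as the paper. The $\P$ part is identical (discard the certificate and invoke Theorem~\ref{thm-CNF}(i); reduce from $\StNdag$ by viewing a DAG as a single-derivation-tree grammar). For $\rCNF^{\ge k}$ the paper also runs an alternating logspace machine that, from a goal $[\hs_A\ge m]$, existentially walks production-by-production through what you call the child graph (the two non-forking branches) and then universally forks to $[\hs_{A_i}\ge m-1]$; your explicit ``child-graph reachability plus one fork'' lemma is just this algorithm stated declaratively. Two small presentational pluses on your side: you make termination explicit via a step counter bounded by $|N|$ (the paper leaves this implicit), and your $\NL$-hardness reduction with the tower $Q_0,\dots,Q_{k-1}$ works uniformly for every $k\ge2$, whereas the paper writes out only the case $k=2$.
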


\begin{proof}
Membership of $\rCNF^{\ge}$ in $\P$ follows directly from Theorem~\ref{thm-CNF}(i) and 
$\P$-hardness of $\rCNF^\ge$ is a consequence of Theorem~\ref{thm-compressed}(i), since $\StNdag$ is a restriction of $\rCNF^\ge$. 
A binary DAG $\mathcal{D} = (V,v_0,\gamma)$ without a node-labelling function $\lambda$
can be identified with the CNF-grammar $G = (V, v_0, \{ (v \to \gamma(v)) : v \in V\})$
 that produces the single derivation tree $\unfold(\mathcal{D})$. The function $\gamma$ is then a p-certificate.

The $\NL$ upper bound for $\rCNF^{\ge k}$ can be obtained similarly 
 to the proof of Theorem~\ref{thm-compressed}(iii): Let $G = (N,S,P)$ be the input CNF grammar together with a p-certificate.
The algorithm stores an active statement $\mathcal{S}_a = [\hs_A \geq m]$ for $A \in N$ and $0 \le k \le m$. 
If $m=0$ then the algorithm accepts (which is correct since $A$ is the root of a derivation tree).
If $m > 0$ then the algorithm rejects if $A \to \varepsilon$ is the only production for $A$. Otherwise, it
guesses existentially a production $(A \to A_1 A_2) \in P$ and branches
existentially to one of the following cases:
 \begin{itemize}
\item set $\mathcal{S}_a = [\hs_{A_1} \geq m]$ and continue,
\item set $\mathcal{S}_a = [\hs_{A_2} \geq m]$ and continue,
\item universally guess an $i \in \{1,2\}$ and set $\mathcal{S}_a = [\hs_{A_i} \geq m-1]$.
\end{itemize}
The number of alternations is again bounded by the fixed constant $k$.

The $\NL$ lower bound for $\rCNF^{\ge k}$ with $k \geq 2$
can be shown by a reduction from the graph accessibility problem. 
Consider a binary DAG $\mathcal{D} = (V, v_0, \gamma)$ and a target leaf $v_t$
as in the $\NL$-hardness proof of 
Theorem~\ref{thm-compressed}(iii). We define a CNF grammar $G = (V \uplus \{A,B\}, v_0, P)$
with the following productions:
\begin{itemize}
\item $v \to \varepsilon$ if $v \in V \setminus \{v_t\}$ and $\gamma(v) = \varepsilon$,
\item $v_t \to BB$, $B \to AA$ and $A \to \varepsilon$.
\item $v \to v_1 A$ and $v \to v_2 A$ if $\gamma(v) = v_1 v_2$,
\end{itemize}
If there is a path from $v_0$ to $v_t$ then $G$ has a derivation tree with Strahler number $2$ and
if there is no path from $v_0$ to $v_t$ then all derivation trees of $G$ have Strahler number at most one.
A p-certificate for $G$ can be obtained by mapping $v$ to $v_1 A$ in case $\gamma(v) = v_1 v_2$.
\end{proof}
Table~\ref{table-CNF} summarizes the results from this section.

\begin{table}[t]
\begin{center}
    \begin{tabular}{c|c|c|c}
             &  $\CNF$             & $\rCNF$            & $\acCNF$ \\ \hline
$(\cdot)^\ge$    & $\P$-comp. & $\P$-comp. & $\PSPACE$-comp.   \\ \hline
$(\cdot)^{\ge k}$ with $k \geq 1$ & $\P$-comp.  & $\NL$-comp.  ($k \geq 2$)  & $\NP$-comp. ($k \geq 2$) \\
    \end{tabular}
  \end{center}
\caption{\label{table-CNF} Complexity results for the problems from Section~\ref{sec-CNF}}
\end{table}

\section{Open problems}

We conclude the paper with some open problems.

\subparagraph{Computing Strahler numbers for unranked trees.}
Strahler numbers have been also defined for unranked trees (trees, where nodes can have any number of children):
Consider an unranked tree $t$, where 
$t_1, \ldots, t_k$ ($k \geq 0$) are the subtrees rooted in the children of the root of $t$.
We define the Strahler number $\hs(t)$ of $t$ inductively
as follows: if $k=0$ then $\hs(t) = 0$. If $k \geq 1$ then let $n_i = \hs(t_i)$ and $n = \max\{n_1, \ldots, n_k\}$.
If $n$ has a unique occurrence in the list $n_1, \ldots, n_k$, then $\hs(t) = n$, otherwise $\hs(t) = n+1$.
We conjecture that our $\NC^1$-algorithm for computing the Strahler number of a binary tree can be adapted
to unranked trees, but this seems to require new ideas.
Tree straight-line programs could be replaced by forest straight-line programs \cite{GasconLMRS20} that work for unranked trees.
For this one has to prove a variant of Theorem~\ref{theorem tree->tslp} for forest straight-line programs. In addition, one needs
a variant of Lemma~\ref{lemma-composition} for unranked trees, which is not obvious.

\subparagraph{Expression evaluation for the max-plus semiring.}
Closely related but slightly different to the computation of Strahler numbers 
is the problem of evaluating expressions over the max-plus semiring $(\mathbb{N},\max,+)$.
If the input numbers are given in unary encoding, then the problem is logspace reducible to the evaluation of arithmetic expressions over $(\mathbb{N},+,\times)$
and hence belongs to $\L$~\cite{JakobyT06}.
The complexity is open if the input numbers are encoded in binary.
Let us also mention that the longest (or shortest~\cite[Lemma~3.3]{JakobyT06}) path problem in a directed graph with binary encoded weights is in $\AC^1$,
using matrix powering over the max-plus semiring,
but it is a longstanding open problem whether it lies in a smaller complexity class~\cite[p.~13]{Cook85}, see also \cite{GanardiSZ24,ShakibaBZ25} for recent applications.

\subparagraph{Computing path width of trees.}
A problem that is related to the computation of the Strahler number of a tree $t$ is the computation of the path width of an undirected
tree. Let us define $\hs(t)$ for an undirected tree $t$ as the minimal Strahler number of a rooted tree that can be obtained
by choosing a root in $t$. Then the following relationship is stated in \cite{EsparzaLS14}:
$\mathsf{pathwidth}(t)-1 \le \hs(t) \le 2\cdot\mathsf{pathwidth}(t)$.
It is shown in \cite{Scheffler1990} that the path width of an undirected tree $t$ can be computed in linear time. It would 
be interesting to know whether it can be computed in logspace.

\def\cprime{$'$} \def\cprime{$'$}

\end{document}